\def\mathclap#1{\text{\hbox to 0pt{\hss$\mathsurround=0pt#1$\hss}}}
\newcommand{\N}{\mathbb{N}}
\newcommand{\R}{\mathbb{R}}
\newcommand{\CH}{\mathcal{CH}^+}
\newcommand{\Hp}{\mathcal{H}^+}
\newcommand{\vol}{\mathrm{vol}}
\newcommand{\ux}{\underline{x}}
\newcommand{\loc}{\mathrm{loc}}
\newcommand{\Reps}{R_{\varepsilon_0, \varepsilon_1}}
\newcommand{\mg}{\mathring{\gamma}}
\newcommand{\uh}{u_{\mathcal{H}^+}}
\newcommand{\Mext}{M_{\mathrm{ext}}}
\newcommand{\rd}{\partial}
\begin{document}

\numberwithin{equation}{section}
\newtheorem{theorem}[equation]{Theorem}
\newtheorem{remark}[equation]{Remark}
\newtheorem{assumption}[equation]{Assumption}
\newtheorem{claim}[equation]{Claim}
\newtheorem{lemma}[equation]{Lemma}
\newtheorem{definition}[equation]{Definition}
\newtheorem{corollary}[equation]{Corollary}
\newtheorem{proposition}[equation]{Proposition}
\newtheorem*{theorem*}{Theorem}
\newtheorem{conjecture}[equation]{Conjecture}
\newtheorem{example}[equation]{Example}

\setcounter{tocdepth}{3}

\title{On holonomy singularities in general relativity \& the $C^{0,1}_{\mathrm{loc}}$-inextendibility of spacetimes}
\author{Jan Sbierski\thanks{Mathematical Institute, 
University of Oxford,
Radcliffe Observatory Quarter,
Woodstock Road, 
Oxford, 
OX2 6GG,
United Kingdom}}
\date{\today}

\maketitle

\begin{abstract}
This paper investigates the structure of gravitational singularities at the level of the connection. We show in particular that for FLRW spacetimes with particle horizons a local holonomy, which is related to a gravitational energy, becomes unbounded near the big-bang singularity. This implies the $C^{0,1}_{\loc}$-inextendibility of such FLRW spacetimes.
Again using an unbounded local holonomy we also give a general theorem establishing the $C^{0,1}_{\loc}$-inextendibility of spherically symmetric weak null singularities which arise at the Cauchy horizon in the interior of black holes. Our theorem does not presuppose the mass-inflation scenario and in particular applies to the Reissner-Nordstr\"om-Vaidya spacetimes as well as to spacetimes which arise from small and generic spherically symmetric perturbations of two-ended subextremal Reissner-Nordstr\"om initial data for the Einstein-Maxwell-scalar field system. In \cite{LukOh19I}, \cite{LukOh19II} Luk and Oh proved the $C^2$-formulation of strong cosmic censorship for this latter class of spacetimes -- and based on their work we improve this to a $C^{0,1}_{\loc}$-formulation of strong cosmic censorship.
\end{abstract}

\tableofcontents

\section{Introduction}

Let $(M,g)$ be a smooth and time-oriented Lorentzian manifold. A $C^{0,1}_{\loc}$-extension of $(M,g)$ is an isometric embedding $\iota : M \hookrightarrow \tilde{M}$ of $M$ into a Lorentzian manifold $(\tilde{M}, \tilde{g})$ of the same dimension, where $\tilde{g}$ is a locally Lipschitz regular metric, such that $\partial \iota(M) \subseteq \tilde{M}$ is non-empty. If no such $C^{0,1}_{\loc}$-extension exists then we say that $(M,g)$ is $C^{0,1}_{\loc}$-inextendible. Extensions of other regularities are defined analogously. 

Our first main result is that cosmological warped product spacetimes with particle horizons are $C^{0,1}_{\loc}$-inextendible. This class in particular contains the FLRW spacetimes with particle horizons.

\begin{theorem} \label{ThmInt1}
Let $(\overline{M}, \overline{g})$ be a $3$-dimensional complete Riemannian manifold and let $a : (0, \infty) \to (0, \infty)$ be a smooth function satisfying \begin{equation*} \begin{aligned}&\lim_{ t \to 0} a(t) = 0 \\
&\int_0^1 \frac{1}{a(t)} \, dt < \infty \\
&\int_1^\infty \frac{a(t)}{ \sqrt{a(t)^2 + 1}}\, dt = \infty \;. \end{aligned} \end{equation*} Let $M = (0, \infty) \times \overline{M}$ and consider the Lorentzian metric $g = -dt^2 + a(t)^2 \, \overline{g}$ on $M$. Then $(M,g)$ is $C^{0,1}_{\loc}$-inextendible.
\end{theorem}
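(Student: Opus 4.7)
The plan is to derive a contradiction with the following general principle for Lipschitz metrics: if one works in a coordinate chart $(U, x^\mu)$ in which a Lorentzian metric is Lipschitz, the Christoffel symbols are essentially bounded, $\|\Gamma\|_{L^\infty(U)} \leq C$. Parallel transport along any Lipschitz loop $\gamma \subset U$ is then the solution of a linear ODE with $L^\infty$ coefficients, and Gronwall yields a uniform bound
\begin{equation*}
\|P_\gamma\| \leq \exp\bigl(C\, L_{\mathrm{Eucl}}(\gamma)\bigr),
\end{equation*}
depending only on the Euclidean chart-length of $\gamma$. The strategy will be to suppose for contradiction that $(M,g)$ admits a $C^{0,1}_{\mathrm{loc}}$-extension $\iota : M \hookrightarrow \tilde M$ with boundary point $p \in \partial \iota(M)$, fix a Lipschitz chart $U$ around $p$, and exhibit a sequence of loops $\gamma_n$ in $M$ such that $\iota \circ \gamma_n \subset U$ with uniformly bounded chart-length, but whose intrinsic holonomies in $(M,g)$ blow up.

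First I would argue that any boundary point $p$ can only be approached from the big-bang side, i.e.\ by sequences $q_n = (t_n, \overline{x}_n) \in M$ with $t_n \to 0$. The third hypothesis is the statement that the proper time
\begin{equation*}
\int_1^\infty \frac{a(t)}{\sqrt{a(t)^2 + 1}} \, dt
\end{equation*}
along a family of future-directed timelike geodesics with a fixed non-zero conserved spatial momentum is infinite; combined with completeness of $(\overline M, \overline g)$ and infinite proper time $\int_1^\infty dt = \infty$ along the $t$-lines themselves, this yields future timelike geodesic completeness in a suitable sense, so no boundary point of an extension can lie to the future.

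The heart of the proof is the construction of the loops $\gamma_n$. Here the particle horizon condition $\int_0^1 dt/a(t) < \infty$ is decisive: two points $(t, \overline{x})$ and $(t, \overline{y})$ with $t$ very small can be arbitrarily distant in the conformal/causal sense, yet — because $\overline g$-distances are scaled by $a(t)$ — their separation in the physical metric $g$ shrinks proportionally to $a(t)$. I would take rectangular loops based on a timelike curve $t \mapsto (t, \overline{x}_0)$ approaching $p$, with one timelike leg running from some fixed small $t_\star > 0$ down to $t_n \to 0$ and the other at $t_\star$, joined by spatial legs of $\overline{g}$-length $\Delta_n$ chosen so that $a(t_n) \Delta_n$ remains small (keeping the loop inside a fixed $\tilde g$-neighborhood of $p$) but $a(t_\star) \Delta_n \to \infty$. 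A direct computation of the connection $1$-form in the orthonormal frame $\{\partial_t, a^{-1}e_i\}$, with $e_i$ orthonormal on $(\overline{M}, \overline{g})$, shows that parallel transport along the timelike legs is a boost whose rapidity accumulates like $\log(a(t_\star)/a(t_n))$, while the spatial legs contribute a rotation; a careful tracking of these boost/rotation contributions shows that the total holonomy in $(M,g)$ is unbounded in $n$, contradicting the Gronwall bound above.

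The main obstacle I anticipate is precisely this balancing act: one needs the $\iota$-images of the loops to stay inside a fixed Lipschitz chart of $\tilde g$ with uniformly bounded Euclidean chart-length (which constrains $a(t_n)\Delta_n$ and the $\tilde g$-shape of the loop), while simultaneously arranging that the loop's $(M,g)$-intrinsic holonomy diverges. Making this consistent requires understanding how the particle horizon structure of $(M,g)$ constrains the geometry of $\tilde g$ near $p$ — in particular, ruling out the possibility that the extension somehow "absorbs" the divergence by conformally compactifying the spatial directions. I expect the correct route is to deduce from causality in $\tilde M$ that any small enough $\tilde g$-neighborhood of $p$ pulls back in $M$ to a set whose past-causal structure is controlled by the particle horizon, and then to verify that the explicit loops constructed above lie in such a neighborhood for $n$ large. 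Everything beyond this step — the frame computation and the final Gronwall contradiction — should be essentially mechanical.
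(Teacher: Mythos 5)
Your overall framework --- argue by contradiction, rule out future boundary points via timelike geodesic completeness, and on the past side play an intrinsic holonomy blow-up against a Gronwall bound for parallel transport in a Lipschitz boundary chart (the paper's Lemma \ref{LemBoundParallelTransport}) --- is indeed the paper's strategy. But the two steps that actually carry the proof are missing or incorrect. The first is containment: you never show that your loops lie in a single boundary chart with uniformly bounded coordinate length, and the mechanism you propose cannot deliver this. Smallness of the proper length $a(t_n)\Delta_n$ of the bottom spatial leg does not control its coordinate position or coordinate length (in a near-Minkowski chart a spacelike curve of tiny proper length can be nearly null and coordinate-long), and your requirement $a(t_\star)\Delta_n \to \infty$ is self-defeating: since $|\tilde g_{\mu\nu}-m_{\mu\nu}|<\delta$ in the chart, the Euclidean chart length of the top leg is bounded below by a constant times its proper length $a(t_\star)\Delta_n$, so the loops could not have uniformly bounded chart length even if they stayed in the chart. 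The paper handles containment by a quite different device: Propositions \ref{PropBoundaryChart} and \ref{PropGeodesicBoundaryChart} first produce a timelike geodesic $\gamma$ of $M$ leaving through the boundary chart (whose spatial projection converges by Lemma \ref{LemHorizon}); the loops are built only from causal curves (two null legs followed by the reversal of $\gamma$) anchored on $\gamma$; and Lemma \ref{LemCausalHomotopy} together with the compact containment of $J^+\cap J^-$ in the chart and reparametrisation by $x_0$ gives the uniform speed and domain bounds that feed Lemma \ref{LemBoundParallelTransport}. You flag exactly this as the ``main obstacle'' and leave it open; it is the heart of the proof, not a finishing touch.

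Second, the holonomy of your rectangles is not what you claim. Along the comoving legs $t\mapsto(t,\overline{x})$ the adapted orthonormal frame $\{\partial_t, a^{-1}e_i\}$ is parallel (these are geodesics), so the timelike legs contribute no boost --- in particular no rapidity $\log\big(a(t_\star)/a(t_n)\big)$ --- while the spatial legs contribute boosts (not rotations) of rapidity $\dot a(t)\,\Delta_n$ in the $t$--radial plane. The net holonomy of the rectangle is therefore a boost of rapidity $|\dot a(t_n)-\dot a(t_\star)|\,\Delta_n$, i.e.\ the curvature integral over the rectangle; the hypotheses control $\int_0^1 dt/a$, not $\dot a$ pointwise, so with bounded $\Delta_n$ this need not diverge along an arbitrary sequence $t_n\to 0$ (one can select sequences where $\dot a\to\infty$, but that is not your argument, and the containment problem above remains). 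The $\log(1/a)$-type divergence you are after is produced by the paper's loops precisely because their legs are null and return to the tilted geodesic $\gamma$: by \eqref{EqPTell} the parallel transport of $\underline{\ell}_\kappa$ along the broken null geodesic grows like $a(t_{\mu_0})^{-1}$ or $a(t_{\mu_0})^{-2}$ relative to the frame parallel along $\gamma$ --- this is where the particle-horizon condition does its work. (You would also need the totally geodesic $\{t,\chi\}$-plane reduction, Proposition \ref{PropTotGeod}, or an equivalent, to handle a general complete $\overline{M}$.) As it stands the proposal has the right skeleton but not a proof.
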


Our second main result is that spherically symmetric weak null singularities are $C^{0,1}_{\loc}$-inextendible. A version of our results can be stated as follows:

\begin{theorem} \label{ThmInt2}
Let $M= (-\infty, 0) \times (- \infty, 0) \times \mathbb{S}^2$ with standard $(u,v)$-coordinates on the first two factors and let $g = -\frac{\Omega^2(u,v)}{2} (du \otimes dv + dv \otimes du) + r^2(u,v) \, \mathring{\gamma}$, where $\mathring{\gamma}$ is the standard metric on $\mathbb{S}^2$ and $\Omega, r : (-\infty, 0) \times (-\infty, 0) \to (0, \infty)$ are smooth positive functions. Fix a time-orientation on $(M,g)$ by stipulating that $\rd_u + \rd_v$ is future directed timelike. Assume that 
\begin{equation*}
\begin{aligned}
&\bullet \Omega \textnormal{ \emph{and }} r \textnormal{\emph{ extend continuously as positive functions to }} (-\infty, 0] \times (-\infty, 0] \\
&\bullet \lim_{v \to 0} \rd_v r (u,v) = - \infty \textnormal{\emph{ for all }} u \in (-\infty, 0) \\
&\bullet \lim_{u \to 0} \rd_u r(u,v) = -\infty \textnormal{ \emph{for all} } v \in (-\infty, 0) \\
&\bullet \textnormal{\emph{ for all $u \in (-\infty, 0)$ there exists $v_0(u) \in (-\infty, 0)$ such that $\rd_u r(u,v) <0$ for all $v \geq v_0(u)$} }\\
&\bullet \textnormal{\emph{ for all $v \in (-\infty, 0)$ there exists $u_0(v) \in (-\infty, 0)$ such that $\rd_v r(u,v) <0$ for all $u \geq u_0(v)$}} \;.
\end{aligned}
\end{equation*}
Then $(M,g)$ is \underline{future $C^{0,1}_{\loc}$-inextendible}, i.e., there does not exist a $C^{0,1}_{\loc}$-extension $\iota : M \hookrightarrow \tilde{M}$ with the property that there is a future directed future inextendible timelike curve in $M$ which has a future limit point in $\tilde{M}$.
\end{theorem}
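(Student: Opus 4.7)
The approach is to argue by contradiction via an \emph{unbounded local holonomy}, in the spirit of the proof of Theorem~\ref{ThmInt1}. In any $C^{0,1}_\loc$-extension $\iota : M \hookrightarrow \tilde M$, the Christoffel symbols of $\tilde g$ in any precompact coordinate chart $(U, \Psi)$ around a point $p \in \partial \iota(M)$ are bounded in $L^\infty$, so Gronwall applied to the parallel transport ODE yields
\[\|P_\sigma\|_h \leq \exp\bigl(K \cdot \ell_h(\sigma)\bigr)\]
for every piecewise-smooth curve $\sigma \subset U$, where $h$ is a smooth auxiliary Riemannian metric on $\tilde M$ and $K = K(U, h)$. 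It therefore suffices to exhibit a sequence of loops $\sigma_n \subset M$, eventually lying in $U$, with $\ell_h(\sigma_n)$ uniformly bounded but $\|P_{\sigma_n}\|_h \to \infty$.

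To set up, let $\gamma$ be a future-directed, future-inextendible timelike curve in $M$ with $\iota(\gamma(s_n)) \to p$ for some sequence $s_n \to 1$. Since $\rd_u + \rd_v$ is future-timelike, $u(\gamma), v(\gamma)$ are monotone-increasing with limits $u_*, v_* \in (-\infty, 0]$, and future-inextendibility in $M$ forces one of them to equal $0$; by the $u \leftrightarrow v$ symmetry of the hypotheses I assume $v_* = 0$. Fix $u_0 < 0$, $v_0 < 0$, $L > 0$, and two distinct points $x_0, x_0' \in \mathbb{S}^2$ at $\mg$-distance $L$. For a sequence $v_n' \in (v_0, 0)$ with $v_n' \to 0$, let $\sigma_n$ be the closed rectangular loop in the slice $\{u = u_0\}$ consisting of: (i) the $\rd_v$-curve from $(u_0, v_0, x_0)$ to $(u_0, v_n', x_0)$; (ii) the sphere geodesic at $v = v_n'$ from $x_0$ to $x_0'$; (iii) the reversed $\rd_v$-curve from $(u_0, v_n', x_0')$ to $(u_0, v_0, x_0')$; (iv) the reversed sphere geodesic at $v = v_0$.

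The central computation is the holonomy along leg (ii). Spherical symmetry ensures that parallel transport preserves the three-dimensional subspace $\mathrm{span}\{\rd_u, \rd_v, T\}$, with $T$ the sphere-geodesic tangent; on leg (ii) the parallel-transport ODE in this subspace has constant coefficients with non-zero eigenvalues
\[\pm \mu_n, \qquad \mu_n = \frac{2\sqrt{\rd_u r(u_0, v_n')\,\rd_v r(u_0, v_n')}}{\Omega(u_0, v_n')}.\]
Since $\rd_u r, \rd_v r < 0$ near $v = 0$ by the last two hypotheses, these are real. Choosing $u_0$ such that $\rd_u r(u_0, \cdot)$ remains bounded away from $0$ on a left neighbourhood of $0$, the hypothesis $\rd_v r(u_0, v_n') \to -\infty$ forces $\mu_n \to \infty$. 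Hence the transport along leg (ii), being the time-$L$ flow of this linear system, has operator norm of order $e^{\mu_n L} \to \infty$. On the remaining legs the transport is uniformly bounded: on the $\rd_v$-legs the coordinate basis $(\rd_u, \rd_v, \rd_A)$ transforms by the ratios $1,\, \Omega^2(v_0)/\Omega^2(v_n'),\, r(v_0)/r(v_n')$, all bounded by continuity of $\Omega$ and $r$ at $v = 0$, and leg (iv) is a fixed sphere geodesic at $v = v_0$. Combining these, $\|P_{\sigma_n}\|_h \to \infty$, contradicting the Gronwall bound.

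The main technical obstacle is to verify that $\bigcup_n \sigma_n$ has compact closure in $\tilde M$ contained in a single precompact chart $U$ and that the $h$-lengths $\ell_h(\sigma_n)$ are uniformly bounded. These amount to controlling how the coordinate functions $(u, v, \theta, \phi)$ extend across $\partial \iota(M)$ at $p$ and excluding pathological boundary identifications; equivalently, one must check that the change of frame from $(\rd_u, \rd_v, \rd_\theta, \rd_\phi)$ to an $h$-orthonormal frame near $p$ does not degenerate so as to absorb the computed unbounded singular value of $P_{\sigma_n}$. I expect this step to rely essentially on the monotonicity hypotheses on $\rd_u r$ and $\rd_v r$, used (as in Sbierski's earlier $C^0$-inextendibility arguments) to force the relevant chart to extend continuously and uniquely to a neighbourhood of $p$ and thus make $\sigma_n$ a genuine loop of bounded $h$-length in $U$.
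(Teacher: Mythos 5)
Your overall strategy (rectangular loops in the $v$--angular directions whose holonomy blows up, against a Gronwall-type bound for parallel transport in any $C^{0,1}_{\loc}$-extension) is indeed the idea of the paper's proof, but as written the argument has two genuine gaps. First, your blow-up mechanism is not justified by the hypotheses: you claim the transport along the sphere-geodesic leg has norm of order $e^{\mu_n L}$ with $\mu_n = 2\sqrt{\rd_u r\,\rd_v r}/\Omega$, and you need to ``choose $u_0$ such that $\rd_u r(u_0,\cdot)$ remains bounded away from $0$''. The theorem only gives $\rd_u r(u_0,v)<0$ near $v=0$; nothing prevents $\rd_u r(u_0,v)\to 0$, in which case $\mu_n$ may stay bounded and your estimate yields no contradiction. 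Moreover $u_0$ is not free: it must be tied to the boundary point through which a curve leaves $M$, since otherwise the loops have no reason to enter the boundary chart at all. The correct computation (done in the paper, Step 5 of the proof of Theorem \ref{ThmOneNullSing}) works in the coordinate basis: the transported $\rd_\varphi$ acquires a $\rd_u$-component of size at least $2r\lambda\,|\rd_v r|/\Omega^2$ (using $\sinh x \geq x$), which diverges as $v_1\to 0$ solely because $\rd_v r\to-\infty$, with no need for $\mu_n\to\infty$ and no extra assumption on $\rd_u r$.

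Second, what you defer as ``the main technical obstacle'' --- that the loops eventually lie in a single boundary chart with uniformly bounded coordinate velocities/lengths --- is not a technicality but the core of the proof, and your sketch of how to handle it (extending the $(u,v,\theta,\varphi)$ coordinates continuously across $\partial\iota(M)$ via the monotonicity of $\rd_u r$, $\rd_v r$) is not how it can be done and is not obviously true. The paper instead (Steps 2 and 3): (i) upgrades the timelike curve to a timelike geodesic via Proposition \ref{PropGeodesicBoundaryChart}, then uses causal homotopies and Lemma \ref{LemCausalHomotopy} to produce first a timelike curve in the $\{u,v\}$-plane and then a \emph{radial outgoing null geodesic} $\tau$ at a specific $u_0$ which leaves into the extension inside a compact subset of the chart; and (ii) proves a uniform bound on the Jacobian $\rd\tilde{x}^\alpha/\rd x^\mu$ on a whole tube $[v_0,0)\times B_\lambda(u_0,\frac{\pi}{2},\pi)$, by propagating a frame that is parallel along $\rd_v$, applying Lemma \ref{LemBoundParallelTransport} to the radial null geodesics reparametrised by the chart time function, using that parallel transport commutes with isometries, and closing a continuity/openness argument in $v$ (with $\lambda$ chosen small so the tube stays compactly in the chart). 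Only with this Jacobian bound do the loops have uniformly bounded coordinate tangents, so that Lemma \ref{LemBoundParallelTransport} gives the uniform holonomy bound you invoke; the $g$- or auxiliary-metric length of the loops in $M$ does not by itself control anything in the extension. Without supplying (i) and (ii), and without correcting the blow-up computation, the contradiction does not close.
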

Note that the spherically symmetric weak null singularities do admit future $C^0$-extensions!

\subsection{Motivation}

There are three main motivations for studying the low-regularity inextendibility of Lorentzian manifolds. 

\textbf{The first motivation} comes from the expectation/possibility that if a solution to Einstein's equations $R_{\mu \nu} - \frac{1}{2} g_{\mu \nu} R = 2 T_{\mu \nu}$ can be continued as a weak solution, then classical time-evolution in general relativity continues, i.e., the classical theory does not break down.  Here, it is of course crucial to discuss what we mean by a \emph{weak solution}.

Considering for simplicity of discussion the vacuum Einstein equations $R_{\mu \nu} - \frac{1}{2} g_{\mu \nu} R = 0$, we see that they take the schematic form $g \rd \rd g + N(g)(\rd g, \rd g) = 0$, where $N(g)(\rd g, \rd g)$ is a nonlinearity that is quadratic in $\rd g$ with coefficients depending on $g$. If the metric is in $C^2$ then clearly the strong, pointwise notion of a solution to Einstein's equations is available. If only $g \in C^0$ and $\rd g\in L^2_{\loc}$, then this regularity is still \emph{sufficient} to define the classical notion of a weak solution, i.e., we require that for all smooth and compactly supported vector fields $X, Y$ on $M$ we have 
\begin{equation}\label{EqWS}
0\overset{!}{=} \int_M \big(R(X,Y) - \frac{1}{2}g(X,Y) \cdot R\big) \vol_g = \int_M \big( g \rd \rd g + N(g)(\rd g, \rd g)\big) X Y \underbrace{\sqrt{-\det g} \, dx}_{= \vol_g} \;.
\end{equation} 
After one integration by parts to move one of the derivatives from the term $g \rd \rd g$ over to the test fields $X,Y$, we see that the regularity assumption $g \in C^0$ and $\partial g \in L^2_{\loc}$ is \emph{sufficient} to ensure that the coefficients of the test fields in \eqref{EqWS} are in $L^1_{\loc}$ -- and thus the Einstein tensor $R_{\mu \nu} - \frac{1}{2} g_{\mu \nu} R$ is a well-defined distribution. 

The regularity class $g \in C^0$ and $\partial g \in L^2_{\loc}$ has been widely (e.g.\ \cite{HawkEllis}, \cite{GerTra87}, \cite{Chrus91}, \cite{Chris09}) considered to be the largest regularity class that admits weak solutions in the above sense \emph{and} still exhibits sufficient stability properties, cf.\ the approximation theorems in \cite{GerTra87}.\footnote{In some references $g \in C^0$ is weakend to $g, g^{-1} \in L^\infty_{\loc}$, cf.\ \cite{GerTra87}.  Although this seems reasonable from a PDE point of view, it clearly poses challenges for the physical interpretation and Lorentzian causality theory alike. Whether they can be overcome seems to be largely an open question. 

Let us also mention that the  regularity class $g \in C^0$ and $\partial g \in L^2_{\loc}$ is not \emph{necessary} for the notion \eqref{EqWS} of a weak solution: one can for example slightly widen it by only requiring that the products of Christoffel symbols which appear in the Einstein equations are locally integrable, \cite{Gar99}, \cite{GerTra87}, but this has the disadvantage that one loses the good stability properties.}  Showing that a particular solution of Einstein's equations is inextendible as a Lorentzian manifold in the above regularity class of course implies that it is also inextendible as a weak solution in the above regularity class. Assuming now that $g \in C^0$ and $\partial g \in L^2_{\loc}$ is indeed the roughest regularity class for which a physical notion of weak solution exists\footnote{It is conceivable that even weaker physical notions of weak solutions are emerging or establishing themselves.} it follows that classical time-evolution terminates and that one has reached indeed the maximal extent of the classical solution. This is the first motivation for the study of inextendibility of Lorentzian manifolds with $g \in C^0$ and $\rd g \in L^2_{\loc}$. 
The  $C^{0,1}_{\loc}$-inextendibility results established in this paper are only one step in this direction. They should be seen as implying that the solution under consideration cannot be continued as a weak solution that is locally Lipschitz regular.

\textbf{The second motivation}, which is closely related to the first one, comes from the \emph{strong cosmic censorship conjecture}, which roughly states that general relativity is \emph{generically} a deterministic theory. A modern mathematical formulation is the following: 
\begin{equation} \label{EqSCC}
\parbox{0.8\textwidth}{The maximal globally hyperbolic development (MGHD) of \emph{generic} compact or asymptotically flat initial data is inextendible as a weak solution to the Einstein equations.}
\end{equation} 
The MGHD is the maximal development of the initial data which a priori can be guaranteed to be uniquely determined by the initial data. Any extension  as a weak solution thereof is in general non-unique. It is in this way that conjecture \eqref{EqSCC} implies that general relativity is generically deterministic. It follows from the earlier discussion that under the assumption that $g \in C^0$ and $\rd g \in L^2_{\loc}$ is the roughest regularity class for which a physical notion of weak solution exists, \eqref{EqSCC} is implied by 
\begin{equation*}
\parbox{0.8\textwidth}{The maximal globally hyperbolic development of \emph{generic} compact or asymptotically flat initial data is inextendible as a  Lorentzian manifold with $g \in C^0$ and $\rd g \in L^2_{\loc}$.}
\end{equation*} 
For further discussion of the strong cosmic censorship conjecture, references, and historical  background we refer the reader to the introduction of \cite{DafLuk17}.

\textbf{The third motivation} stems from the investigation of the physical and geometric structure and strength of spacetime singularities. According to relativistic point mechanics one can crudely identify the spacetime curvature with the tidal forces, the Levi-Civita connection with the gravito-inertial structure, and the spacetime metric with spatial and temporal distances. The statement that a singular spacetime is $C^{1,1}_{\loc}$-inextendible then corresponds to the statement that tidal forces blow up, a singularity which is $C^{0,1}_{\loc}$-inextendible indicates the breakdown of the gravito-inertial structure, and a $C^0$-inextendibility result shows that the notion of space and time cannot be continued. Traditionally the investigation of the different structures of spacetime singularities was mainly at the level of curvature and its effect on Jacobi fields along incomplete timelike geodesics, cf.\  for example \cite{Tip77}, \cite{Ori00}. Investigation at low-regularity level gives new information: in \cite{Sbie15} it was shown that space itself is torn apart at the singularity inside a Schwarzschild black hole while a result of this paper is for example that there is no unique standard of finite inertial energy near a cosmological FLRW singularity with particle horizons. Indeed, in a sense made precise below the local gravitational energy diverges. The gravito-inertial structure of the Schwarzschild singularity is of similar nature.

\subsection{Earlier works on low-regularity inextendibility}

The first study of low-regularity inextendibility of Lorentzian manifolds was carried out in \cite{Sbie15}, where the $C^0$-inextendibility of the Minkowski spacetime and of the maximal analytic Schwarzschild spacetime was shown, see also \cite{Sbie18}. A conditional $C^0$-inextendibility criterion for \emph{expanding singularities} was given by Chru\'sciel and Klinger in \cite{ChrusKli12}.  In \cite{GalLin16} Galloway and Ling showed the $C^0$-inextendibility of the AdS spacetime and presented $C^0$-extendibility results for a class of hyperbolic FLRW spacetimes which they dubbed \emph{Milne-like}. The latter results have been extended by Ling in \cite{Ling20}. In \cite{GalLinSbi17} it was shown that globally hyperbolic and timelike geodesically complete spacetimes are $C^0$-inextendible, a result which was improved in various directions first by Graf and Ling in \cite{GrafLing18} and finally by Minguzzi and Suhr in \cite{MinSuhr19}. For an inextendibility result due to timelike geodesic completeness in Lorentzian length spaces by Grant, Kunzinger, and S\"amann see \cite{GraKuSa19}.

\subsection{Outline and discussion of main results}

All of the previous studies of low-regularity inextendibility results captured geometric obstructions at the $C^0$-level. However, there are physical singularities which are $C^{1,1}_{\loc}$-inextendible due to blow-up of curvature but at the same time do admit continuous extensions. Perhaps most notably let us mention here the weak null singularities inside charged or rotating black holes\footnote{For an overview of results and further references see the introduction of \cite{LukOh19I}.}. The singular behaviour happens there not already at the level of the metric but only at the level of the connection.  This paper widens the geometric investigation of gravitational singularities to the level of the connection. We show how local holonomy transformations can be used to infer its blow-up. 

Curvature and holonomy are of course intimately related. However, intuitively speaking, it depends on the `rate of the blow-up' of curvature near a singularity whether the connection will also blow up or not\footnote{A simple illustration is provided by the well-known exact impulsive gravitational wave spacetimes \cite{Pen72} where curvature has a delta singularity across a hypersurface but the Christoffel symbols are uniformly bounded.}. One can think of holonomy as a geometric way to integrate curvature.
\newline

We proceed by giving a rough sketch of the proof of Theorem \ref{ThmInt1} in Section \ref{SecCos} which outlines the main idea of the argument. All the inextendibility proofs in this paper are by contradiction and have the following first step in common: assuming that there is  a $C^{0,1}_{\loc}$-extension $\iota : M \hookrightarrow \tilde{M}$ of $(M,g)$ one can find a  timelike geodesic $\gamma$ in $M$ which has a limit point $\tilde{q}$ in $\tilde{M} \setminus \iota(M)$, i.e., it leaves $M$. Moreover one can find a small chart $\tilde{U} \subseteq \tilde{M}$ around $\tilde{q}$  in which we have $C^{0,1}_{\loc}$-control over the metric. In particular, in those coordinates the Christoffel symbols of $g$ in $\tilde{U} \cap \iota(M)$ are uniformly bounded. Apart from the $C^{0,1}_{\loc}$-control of the metric these results are valid also for $C^0$-extensions and are found in \cite{Sbie15}, \cite{GalLinSbi17},  and \cite{Sbie18}. They are here summarised in Section \ref{SecFundResults}. 

In Section \ref{SecLemBound} we prove a small lemma showing that if we have uniform bounds on the metric and its derivatives in some coordinate system as above, and if a curve is \emph{local} in the sense that its coordinate velocity is uniformly bounded and also its domain of definition, then the parallel transport map along this curve is also uniformly bounded. This will be used to put upper bounds on holonomy transformations along curves in $\tilde{U} \cap \iota(M)$ which are local in the above sense.

The third condition on the scale factor $a(t)$ in Theorem \ref{ThmInt1} guarantees that all future inextendible timelike geodesics in $M$ are future complete, and thus one can show that $\tilde{q}$ cannot be a future endpoint (\cite{GalLinSbi17}). So it remains to show that   $\tilde{q}$ can neither be a past endpoint of $\gamma$. Here, the structure of the big-bang singularity enters. One first shows that the projection of geodesics in $M = (0, \infty) \times \overline{M}$  to $\overline{M}$ are still geodesics, though not necessarily affinely parametrised. The second condition on the scale factor $a(t)$ in Theorem \ref{ThmInt1} expresses that particle horizons are present and thus the projection of $\gamma$ to $\overline{M}$ has a limit point in $\overline{M}$. Choosing \emph{polar} normal coordinates $(\chi, \theta, \varphi)$ around this limit point, in which the projected geodesic is a radial one, the timelike geodesic $\gamma$ completely lies in the totally geodesic submanifold spanned by the $t$ and the radial $\chi$ coordinate. Because the submanifold is totally geodesic parallel transport tangential to it \underline{in $M$} can be computed intrinsically in the submanifold. Thus we can reduce the problem in the following to a two-dimensional one with metric $g|_{\{t, \chi\}} = -dt^2 + a(t)^2 d\chi^2$. Choosing now a base-point $\gamma(-\mu)$ on $\gamma$  we construct a family of loops as in Figure \ref{Fig1}. Here, $\hat{t} (t) = \int_0^t \frac{1}{a(t')} \, dt'$ is such that the metric in $(\hat{t}, \chi)$-coordinates becomes conformally flat, so that the diagram in Figure \ref{Fig1} is a Penrose diagram.
\begin{figure}[h]
\centering
\begin{minipage}{.33\textwidth}
  \centering
 \def\svgwidth{5cm}
   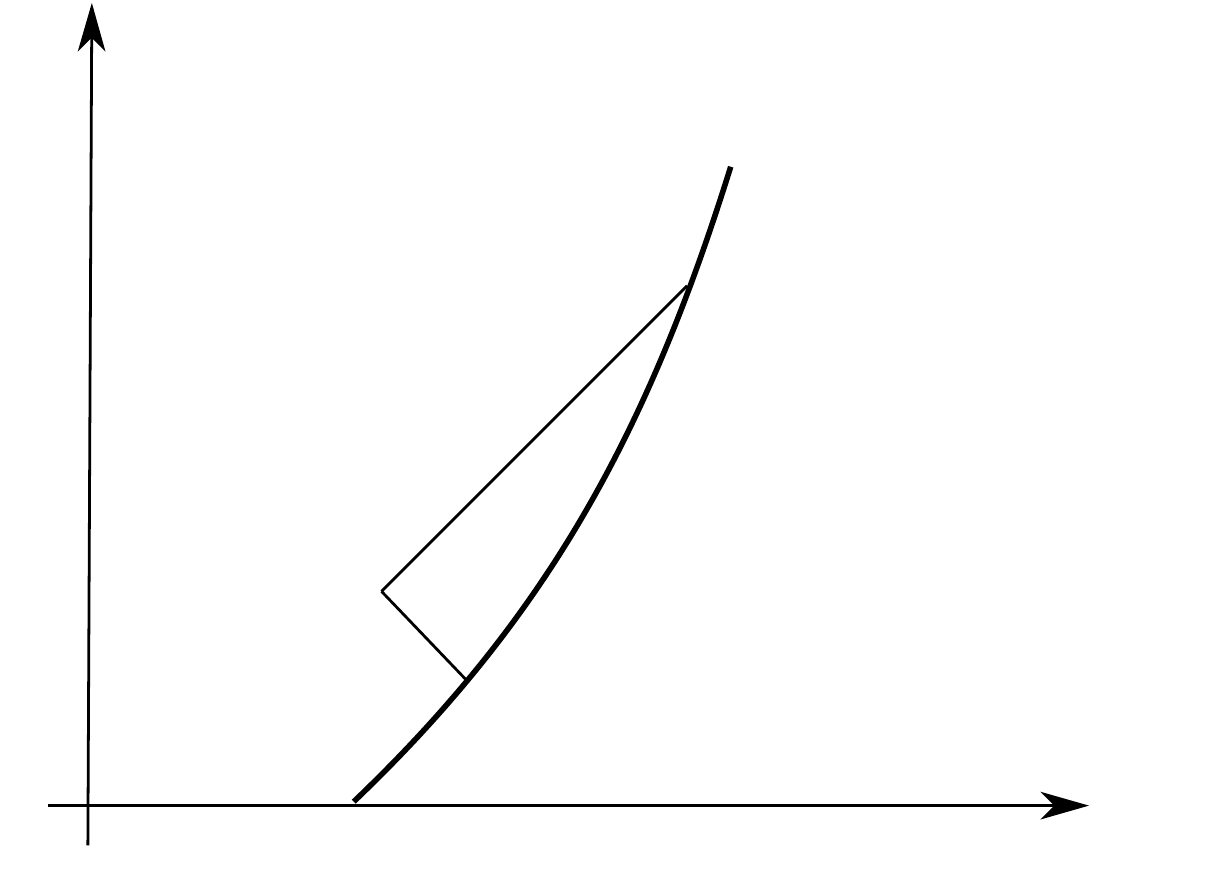
      \caption{Family of loops} \label{Fig1}
\end{minipage}%
\begin{minipage}{.33\textwidth}
  \centering
 \def\svgwidth{5cm}
   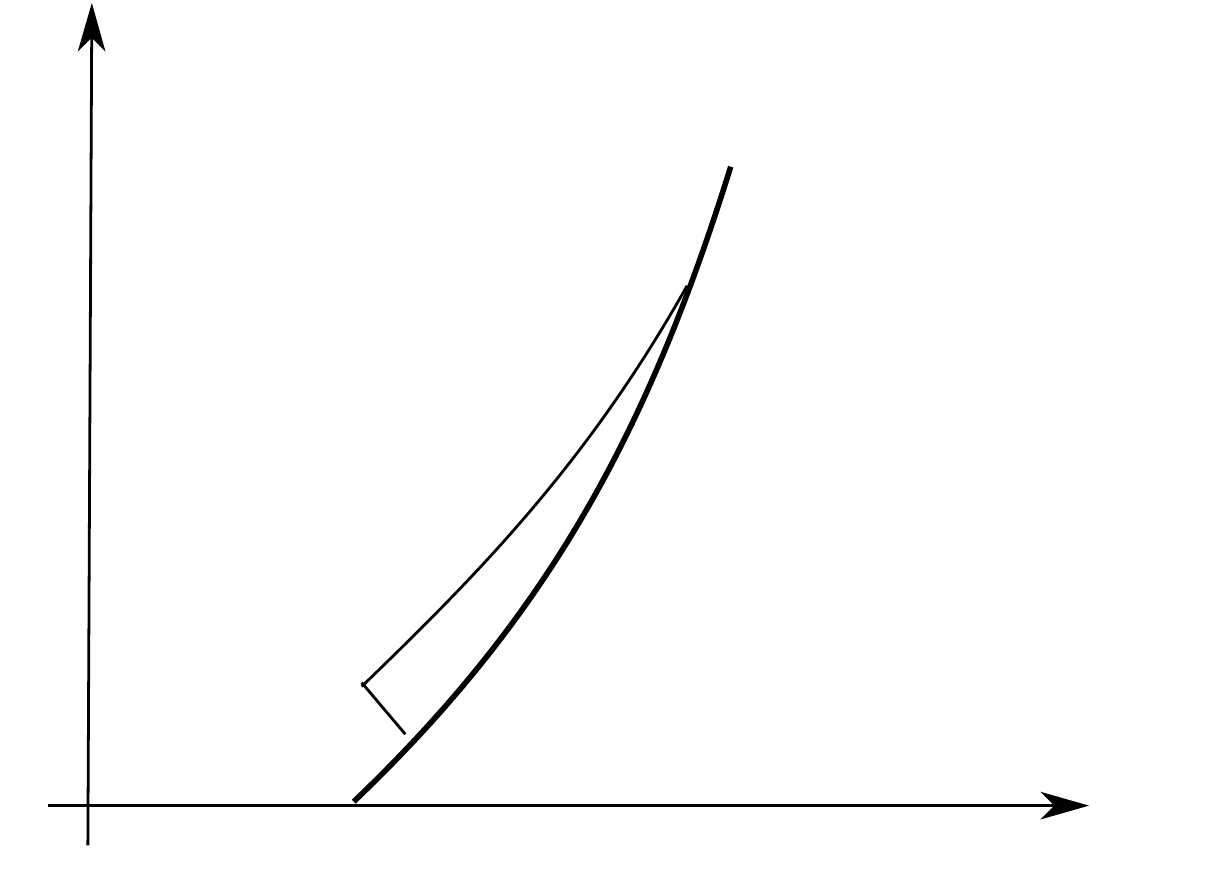
      \caption{Energy of radiation} \label{Fig2}
\end{minipage}%
\begin{minipage}{.33\textwidth}
  \centering
  \def\svgwidth{5cm}
  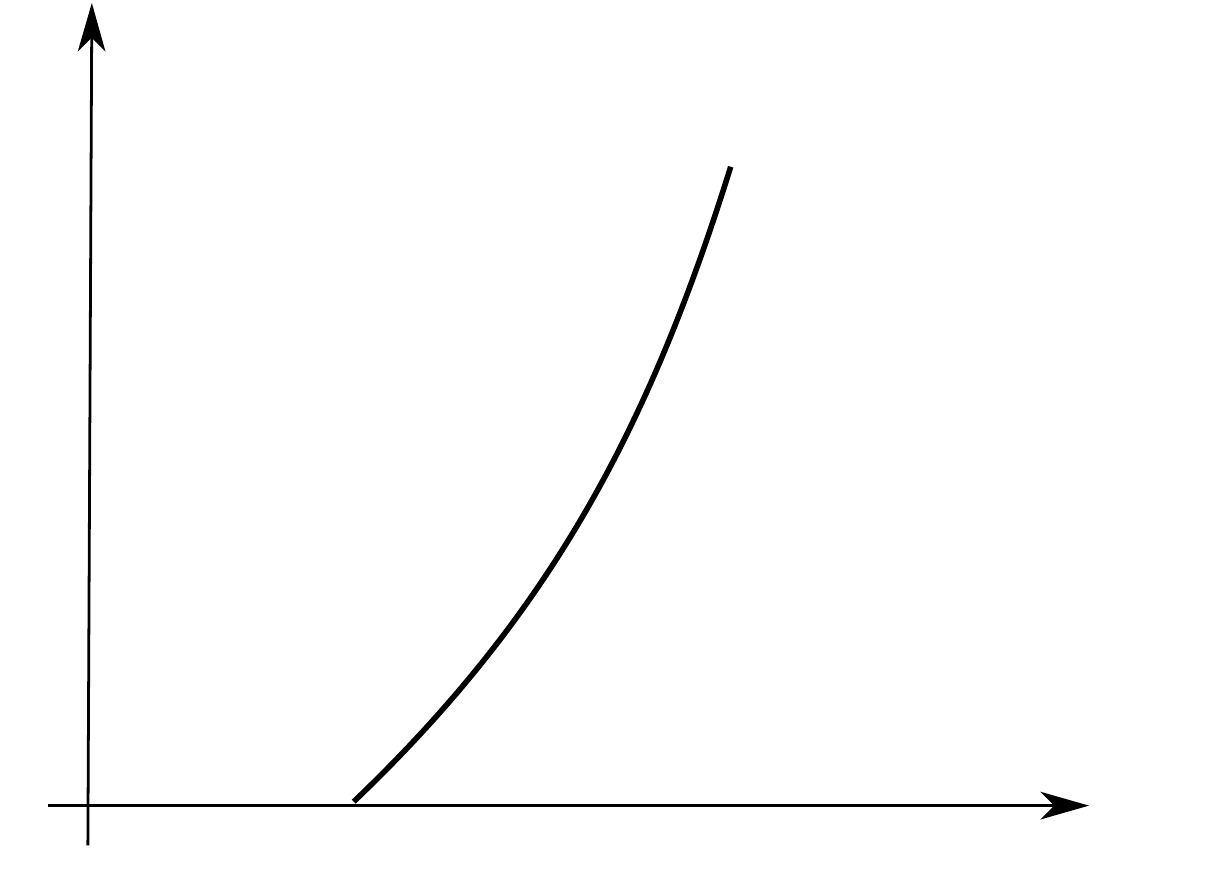
     \caption{Kinetic energy of  $\sigma$} \label{Fig3}
\end{minipage}
\end{figure}
A loop in this family consists first of an outgoing null geodesic segment $\ell$, then concatenated with an ingoing one $\underline{\ell}$ which brings us back to $\gamma$ at a point $\gamma(-\mu_0)$, and then we go back along $\gamma$ to $\gamma(-\mu)$. The family of loops is indexed by the point $\gamma(-\mu_0)$ of return to $\gamma$. One can now show that the holonomy transformations at the base point $\gamma(-\mu)$ associated with this family of loops form an unbounded family of boosts.

Using a homotopy argument and the causal nature of the loops one can show that all the loops in this family are \emph{local} curves in $\tilde{U} \cap \iota(M)$ in the sense discussed above. This gives the contradiction that the corresponding holonomy transformations should form a uniformly bounded subset of the Lorentz group.
\newline

We now briefly discuss the physical implication and interpretation of the unbounded holonomy transformations as above near the big-bang singularity. For this it is helpful to reverse the time-orientation for the time being and consider the inertial observer $\gamma$ \emph{approaching} the singularity. The inertial observer $\gamma$ carries with himself his local orthonormal reference frame $\{f_0 = \dot{\gamma}, f_1, f_2, f_3\}$, where we can assume that $f_1$ lies in the $\{t, \chi\}$-plane. By definition the local reference frame is parallel along $\gamma$. An equivalent statement to the unboundedness of the holonomy transformations with base point $\gamma(-\mu)$ is thus that the parallel transport map along the null segments $\ell$ and $\underline{\ell}$ of the loop from $\gamma(-\mu)$ to $\gamma(-\mu_0)$ is unbounded in the basis of the parallel frame $(f_0, f_1, f_2, f_3)$ along $\gamma$ as $\gamma(-\mu_0)$ approaches the singularity at $\{\hat{t} = 0\}$. More precisely one can show that the parallel transport of the null vector $-\rd_{\hat{t}} + \rd_\chi$ (which is tangent to $\underline{\ell}$) along the null segments $\ell$ and $\underline{\ell}$ becomes unbounded in the reference frame $(f_0, f_1, f_2, f_3)$ along $\gamma$. This of course is a manifestation of the well-known red-shift effect in cosmology which turns into a blue-shift if we reverse the time-orientation. The parallel transport in regions bounded away from $\{\hat{t} = 0\}$ is of course uniformly bounded and thus the unbounded growth happens along the segment $\underline{\ell}$ when it is approaching $\{\hat{t}=0\}$.

A physical scenario of unbounded local energy extraction from the gravitational field near such a singularity might now take the following form: using a finite amount of fuel the observer $\gamma$ ejects a small probe $\sigma$ which then sends radiation back to $\gamma$, see Figure \ref{Fig2}. Using the geometric optics approximation (see for example \cite{MTW} or \cite{Sbie13b}) it then follows that the energy of the radiation received by $\gamma$, which is proportional to minus the inner product of $f_0$ and the parallely propagated null vector along $\underline{\ell}$, exceeds the energy he has put into the probe by an arbitrarily large amount. 

In a second possible scenario the probe $\sigma$ is first accelerated off $\gamma$ and, after a while, it is accelerated back towards $\gamma$ and then approaches $\gamma$ on a \emph{timelike geodesic} trajectory, see Figure \ref{Fig3}. The closer the collision point of $\sigma$ and $\gamma$ is to the singularity, the larger the velocity of $\sigma$ appears to $\gamma$, approaching the speed of light in the limit, see also Remark \ref{RemFLRW}. Since the two phases of acceleration of $\sigma$ are bounded away from $\hat{t} = 0$, these accelerations require a uniformly bounded amount of fuel. On the other hand, the kinetic energy of $\sigma$ gained from the perspective of $\gamma$ can exceed this amount by an arbitrarily large quantity.

Let us remark that in this scenario we have of course reversed the time direction in order to be able to approach the big-bang singularity. However, we point out  in Remark \ref{RemSchwarz} that the Schwarzschild singularity exhibits a similar, although not identical, holonomy structure. It can be approached by a future-heading observer falling into the black hole and the above scenario can be transposed to this case.

This fits into a  body of results which show/indicate that local forms of energy, and not just energy density, can diverge near a singularity, cf.\ for example \cite{PoiIs89}, \cite{HerHis92}, \cite{LukOh19I}, \cite{FouSbi20}.

Motivated by the exhibited holonomy structure of the cosmological and the Schwarzschild singularities, we can define the geometric notion of the \emph{local causal holonomy associated with a timelike geodesic}, which is a subset of the Lorentz group:
Let $(M,g)$ be a time-oriented Lorentzian manifold and $\gamma : [0,1) \to M$ a future directed and future inextendible timelike geodesic. Choose a point $s_0 \in [0,1)$ and set
\begin{equation*}
\begin{split}
l.c.Hol(\gamma, s_0) := \{ P_{\sigma} \in O(1,3)(T_{\gamma(s_0)}M) \; | \; &\textnormal{ for some } s_1 \in (s_0, 1) \textnormal{ we have } \sigma = \overleftarrow{\gamma|_{[s_0, s_1]}} * \tau \;, \\
 &\textnormal{ with } \tau : [s_0, s_1] \to M \textnormal{ being causally homotopic } \\
 &\textnormal{ with fixed endpoints to } \gamma|_{[s_0, s_1]} \}  \;.
\end{split}
\end{equation*}
Here, $P_{\sigma} : T_{\gamma(s_0)}M \to T_{\gamma(s_0)}M$ denotes the parallel transport map along the loop $\sigma$, $O(1,3) (T_{\gamma(s_0)}M)$ denotes the group of Lorentz transformations on $T_{\gamma(s_0)}M$, $\overleftarrow{\gamma|_{[s_0, s_1]}}$ denotes the reversal of $\gamma|_{[s_0, s_1]}$, and a causal homotopy with fixed endpoints is a homotopy with fixed endpoints via causal curves, see also Lemma \ref{LemCausalHomotopy}. 
If $\gamma$ is chosen to approach the Schwarzschild singularity, or, after reversal of time-orientation, $\gamma$ is chosen to approach a cosmological big-bang singularity as above, then in both cases and for all $s_0 \in [0,1)$ we have that $l.c.Hol(\gamma, s_0) $ is a subset of $O(1,3)(T_{\gamma(s_0)}M)$ with non-compact closure. 
\newline

We now discuss the weak null singularities and the proof of Theorem \ref{ThmInt2} in Section \ref{SecGT} by contrasting it with the much cleaner/simpler case of the cosmological singularities. The strategy of the proof is similar, one again assumes that there exists a future $C^{0,1}_{\loc}$-extension and then finds a future directed timelike geodesic $\gamma$ which leaves the original spacetime $(M,g)$ for the extension $(\tilde{M}, \tilde{g})$ through a boundary point around which one has $C^{0,1}_{\loc}$-control over the metric $\tilde{g}$ in a small coordinate neighbourhood. In the cosmological case one could now directly go over to a two-dimensional problem which is not possible here. Another difference to the cosmological case is that the analogous construction of the loops from broken null geodesics gives a uniformly bounded holonomy. In fact, broken timelike geodesics do not work either. This is a manifestation of the fact that unlike in the cosmological case (cf.\ Remark \ref{RemFLRW}) there is only one standard of finite inertial energy at the weak null singularity. We proceed as follows: using homotopy arguments we show that one can also find a radially outgoing null geodesic $\tau$ in $M$ which leaves for the extension. Assume without loss of generality that $\tau$ approaches the singularity $\{v = 0\}$, i.e., $\tau$ is tangent to $\rd_v$. We then construct a family of mixed-null-and-spacelike loops based at a point of $\tau$ near the singularity $\{v=0\}$ by first moving briefly in the angular $\varphi$ direction on the spheres $\mathbb{S}^2$, then moving only in $v$ towards the singularity, moving back in $\varphi$ and returning by moving back in $v$, cf.\ also Figure \ref{FigLoopsNull} on page \pageref{FigLoopsNull}. One can show that the holonomy along those loops becomes unbounded when they approach the singularity at $\{v = 0\}$. It remains to show that these loops are \emph{local} in the sense discussed earlier, which is not as straightforward as in the cosmological case. For this we also make use of the affine structure. 

One could also define a geometric quantity in the case of the weak null singularities which becomes unbounded, although it is more complicated than the local causal holonomy defined earlier. The interested reader however can easily construct such a quantity from the proof in Section \ref{SecGT}.
\newline

As a first self-contained example of a (non-bifurcate) weak null singularity to which our $C^{0,1}_{\loc}$-inextendibility result applies we give the Reissner-Nordstr\"om-Vaidya (RNV) spacetime in Section \ref{SecRNV}, which models the influx of null-dust into a subextremal Reissner-Nordstr\"om black hole. This model has been the earliest exact-solution model used to understand the singularity forming at the Cauchy horizon of dynamical charged or rotating black holes, see \cite{His81}. In \cite{HerHis92} an argument has been made for an infinite energy transfer from the gravitational field near such a singularity to test bodies. We also point out here that the Hawking mass remains uniformly bounded at the Cauchy horizon of the RNV spacetime and that our $C^{0,1}_{\loc}$-inextendibility result does not require mass-inflation (\cite{PoiIs89}, \cite{PoiIs90}, \cite{Ori91}) as an assumption.  

Our main application of Theorem \ref{ThmInt2}  is in Section \ref{SecDLO} to spacetimes arising from generic and sufficiently small spherically symmetric perturbations of asymptotically flat two-ended subextremal Reissner-Nordstr\"om initial data for the Einstein-Maxwell-scalar field system as studied by Luk and Oh in \cite{LukOh19I}, \cite{LukOh19II}. Their work builds up on results by Dafermos \cite{Daf03}, \cite{Daf05a}, \cite{Daf14} and Dafermos-Rodnianski \cite{DafRod05}. A Penrose diagram of those spacetimes is given in Figure \ref{FigPenroseDLO} on page \pageref{FigPenroseDLO} and we refer the reader to \cite{LukOh19I} for a detailed discussion.  In \cite{LukOh19I}, \cite{LukOh19II} Luk and Oh prove the $C^2$-formulation of strong cosmic censorship for this class of spacetimes. Based on their work we improve this to a $C^{0,1}_{\loc}$-formulation in this paper: it can be directly inferred from the estimates in \cite{LukOh19I} that the interior of the black hole satisfies the assumptions in Theorem \ref{ThmInt2} and is thus future $C^{0,1}_{\loc}$-inextendible. In the appendix of this paper we show that the exterior of the black hole is timelike geodesically complete in the sense that any future inextendible timelike geodesic starting in the exterior is either future complete or enters the black hole interior. This then suffices to infer the global $C^{0,1}_{\loc}$-inextendibility of the spacetime and thus the $C^{0,1}_{\loc}$-formulation of strong cosmic censorship.

Other (closely related) examples of spherically symmetric weak null singularities to which our results apply are those constructed in \cite{VdM18} for the Einstein-Maxwell-charged scalar field sysytem and in \cite{CoGiNaDru17} for the Einstein-Maxwell-scalar field system with a positive cosmological constant.
\newline

Let us conclude the introduction by remarking that we do not attempt in this paper to define precisely what we mean by a \emph{holonomy singularity} in general, since we expect that they can come in various forms still to be explored. As we have seen, the holonomy structures of the cosmological singularities considered and the weak null singularities are already very different.

\subsection*{Acknowledgements}  I would like to thank Eric Ling for a stimulating question and I am also grateful to Jonathan Luk for help with the references \cite{LukOh19I} and \cite{LukOh19II}.

\section{Preliminaries} \label{SecPre}

\subsection{Preliminary definitions}

Let $M$ be a smooth $(d+1)$-dimensional manifold\footnote{We will always assume in this paper that manifolds are smooth. Recall that a $C^1$-structure on a manifold can always be refined to a smooth structure, see also the related Remark \ref{RemEmbeddingSmooth}.}. We briefly recall that a Lorentzian metric $g$ on $M$ is \emph{locally Lipschitz} regular (or $C^{0,1}_{\loc}$-regular)  iff for all smooth charts $\psi : M \supseteq U \to V \subseteq \R^{d+1}$ the coordinate expressions of the metric $g_{\mu \nu} \circ \psi^{-1} : \R^{d+1} \supseteq V \to \R$ are locally Lipschitz, i.e., for all compact sets $K \subseteq V$ there exists a constant $\Lambda(K) >0$ such that for all $x,y \in K$ we have
\begin{equation*}
|(g_{\mu \nu} \circ \psi^{-1}) (x) - (g_{\mu \nu} \circ \psi^{-1})(y) | \leq \Lambda(K) ||x - y||_{\R^{d+1}} \;.
\end{equation*}
Here, $||x - y||_{\R^{d+1}} $ denotes the coordinate distance of $x$ and $y$ in $\R^{d+1}$, which, if $V$ is not convex, is not necessarily the distance of $x$ and $y$ in $V \subseteq \R^{d+1}$. However, it is not difficult to show that a function $f : \R^{d+1} \supseteq V \to \R$ is locally Lipschitz with respect to the ambient distance function of $\R^{d+1}$ if, and only if, it is locally Lipschitz with respect to the intrinsic distance function of $V$ (although the optimal Lipschitz constant will be different in general). Thus, both possible definitions give rise to the same class of functions.

Similarly, one defines a locally Lipschitz curve in $M$ or, in general, locally Lipschitz maps between manifolds. Note that all these definitions are independent of a distance function on $M$ but only depend on the smooth structure.

\begin{definition}
Let $(M,g)$ be a smooth Lorentzian manifold and let $\Gamma$ be a regularity class, for example $\Gamma = C^k$ with $k \in \N \cup \{\infty\}$ or $\Gamma = C^{0,1}_{\loc}$.  A \emph{$\Gamma$-extension of $(M,g)$} consists of a smooth isometric embedding $\iota : M \hookrightarrow \tilde{M}$ of $M$ into a Lorentzian manifold $(\tilde{M}, \tilde{g})$ of the same dimension as $M$ where $\tilde{g}$ is $\Gamma$-regular  and such that $\partial \iota(M) \subseteq \tilde{M}$ is non-empty.

If $(M,g)$ admits a $\Gamma$-extension, then we say that $(M,g)$ is \emph{$\Gamma$-extendible}, otherwise we say $(M,g)$ is \emph{$\Gamma$-inextendbile}.
\end{definition}

\begin{remark}\label{RemEmbeddingSmooth}
Recall that the question of extendibility of Lorentzian manifolds is motivated by the physical question of whether spacetime can be continued. In the process of the mathematical modelling one might wonder whether one gains continuability by also lowering the regularity of the differentiable structure of the manifold itself in the above definition. But indeed this is not the case as long as one does not go below a $C^1$-differentiable structure, which is needed for the existence of a continuous tangent space and the notion of a continuous Lorentzian metric. To see this, let $M$ be a smooth manifold, $g$ a smooth Lorentzian metric on $M$, $\tilde{M}$ a $C^1$ manifold with a $C^0$ Lorentzian metric $\tilde{g}$ and let $\iota : M \hookrightarrow \tilde{M}$ be a $C^1$ isometric embedding. Then $\iota$ induces on $\iota(M) \subseteq \tilde{M}$ a smooth structure which is compatible with the given $C^1$ structure on $\iota(M)$ and with respect to which $\tilde{g}|_{\iota(M)}$ is smooth. The proof of Theorem 2.9 in Chapter 2 of \cite{Hirsch12} shows that this smooth structure can be extended to a smooth one on all of $\tilde{M}$ which is compatible with the given $C^1$ structure -- thus turning $\tilde{M}$ into a smooth manifold and $\iota$ into a smooth isometric embedding, and thus recovering the stronger assumptions in the above definition. 
\end{remark}

Let now $(M,g)$ be a Lorentzian manifold with a continuous metric. In this paper we use the convention that a \emph{timelike curve} is a piecewise smooth curve which has a timelike tangent everywhere -- and at the points of discontinuity of the tangent the right and left tangent vectors lie in the same connectedness component of the timelike double cone of tangent vectors. Similarly we define a \emph{causal curve} as a piecewise $C^1$ curve which has a causal, non-vanishing tangent everywhere -- and at points of discontinuity of the tangent the right and left tangent vectors lie in the same connectedness component of the causal double cone of tangent vectors with the origin removed. Let $(M,g)$ be in addition time-oriented. For $p \in M$ we denote the \emph{timelike future} of $p$ in $M$ by  $I^+(p,M)$, which is  the set of all points $q \in M$ such that there is a future directed timelike curve  from $p$ to $q$. The \emph{causal future} of $p$ in $M$, denoted by $J^+(p,M)$, is the set which contains $p$ and all points $q \in M$ such that there is a future directed causal curve from $p$ to $q$. The sets $I^-(p,M)$ and $J^-(p,M)$ are defined analogously.

Note that for Lorentzian manifolds with a merely continuous metric there are  good reasons for defining timelike and causal curves as locally Lipschitz curves with a timelike or causal tangent almost everywhere, which also leads to different causal sets, cf.\ \cite{ChrusGra12}, \cite{GraKuSaSt19}. However this is not needed for our purposes.

\subsection{Fundamentals of $C^0$-extensions} \label{SecFundResults}

We now recall some fundamental definitions and results for $C^0$-extensions. 

\begin{definition}
Let $(M,g)$ be a smooth time-oriented Lorentzian manifold and $\iota : M \hookrightarrow \tilde{M}$ a $C^0$-extension of $M$. The \emph{future boundary of $M$} is the set $\partial^+\iota(M) $ consisting of all points $\tilde{p} \in \tilde{M}$ such that there exists a smooth timelike curve $\tilde{\gamma} : [-1,0] \to \tilde{M}$ such that $\mathrm{Im}(\tilde{\gamma}|_{[-1,0)}) \subseteq \iota(M)$, $\tilde{\gamma}(0) = \tilde{p} \in \partial \iota(M)$, and $\iota^{-1} \circ \tilde{\gamma}|_{[-1,0)}$ is future directed in $M$.
\end{definition}
Clearly we have $\partial^+\iota(M) \subseteq \partial \iota(M)$. The past boundary $\partial^- \iota(M)$ is defined analogously. 

\begin{definition}
Let $(M,g)$  be a smooth time-oriented Lorentzian manifold and let $\Gamma$ be a regularity class that is equal to or stronger than $C^0$. A \emph{future $\Gamma$-extension of $(M,g)$} is a $\Gamma$-extension $\iota : M \hookrightarrow \tilde{M}$  of $M$ with $\partial^+\iota(M) \neq \emptyset$. If no such extension exists, then $(M,g)$ is said to be \emph{future $\Gamma$-inextendible}.
\end{definition}
Past $\Gamma$-extensions are defined analogously.
The next lemma is a reformulation of Lemma 2.17 in \cite{Sbie15}.
\begin{lemma} \label{LemFuturePastExt}
Let $(M,g)$ be a smooth time-oriented Lorentzian manifold and $\iota : M \hookrightarrow \tilde{M}$ a $C^0$-extension of $M$. Then $\partial^+\iota(M) \cup \partial^- \iota(M) \neq \emptyset$.
\end{lemma}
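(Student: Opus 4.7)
The plan is to produce a point in $\partial^+\iota(M) \cup \partial^-\iota(M)$ by examining the timelike futures and pasts of interior points close to the boundary. Since $\iota$ is a smooth isometric embedding between manifolds of the same dimension, $\iota(M)$ is open in $\tilde M$, so any $\tilde p \in \partial\iota(M)$ lies outside $\iota(M)$. Around such a $\tilde p$ I would fix a small coordinate chart $\tilde U$ in which the continuous metric $\tilde g$ is close to a reference Minkowski metric, so that the tools recalled in Section \ref{SecFundResults} apply: for every $q \in \tilde U$ the sets $I^\pm(q, \tilde U)$ are non-empty and open, and any two chronologically related points of $\tilde U$ can be joined by a smooth future-directed timelike curve whose image lies in $\tilde U$.

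Pick a sequence $\tilde q_n \in \iota(M) \cap \tilde U$ with $\tilde q_n \to \tilde p$, which exists because $\tilde p \in \partial\iota(M)$. I would then split into two cases. Case (A): for some $n$, there is $\tilde s \in I^+(\tilde q_n, \tilde U) \setminus \iota(M)$ (or symmetrically in $I^-$). Join $\tilde q_n$ to $\tilde s$ by a smooth future-directed timelike curve $\tilde\gamma : [0,1] \to \tilde U$, and set $t_1 := \inf\{t \in [0,1] : \tilde\gamma(t) \notin \iota(M)\}$. Since $\iota(M)$ is open and $\tilde\gamma(0) = \tilde q_n \in \iota(M)$, we have $t_1 \in (0,1]$, and since $\iota(M)^c$ is closed, $\tilde\gamma(t_1) \in \partial\iota(M)$ with $\tilde\gamma|_{[0,t_1)} \subseteq \iota(M)$. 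A linear reparametrization onto $[-1,0]$ then exhibits $\tilde\gamma(t_1) \in \partial^+\iota(M)$, and we are done. The symmetric subcase involving $I^-(\tilde q_n, \tilde U)$ produces a point of $\partial^-\iota(M)$ by the same argument run in reverse time.

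Case (B): $I^+(\tilde q_n, \tilde U) \cup I^-(\tilde q_n, \tilde U) \subseteq \iota(M)$ for every $n$. I would then pass to the limit using openness of $I^-$: for any $\tilde r \in I^+(\tilde p, \tilde U)$, the set $I^-(\tilde r, \tilde U)$ is open and contains $\tilde p$, hence also $\tilde q_n$ for all large $n$, giving $\tilde r \in I^+(\tilde q_n, \tilde U) \subseteq \iota(M)$. Thus $I^+(\tilde p, \tilde U) \subseteq \iota(M)$, and analogously $I^-(\tilde p, \tilde U) \subseteq \iota(M)$. Picking any $\tilde r_+ \in I^+(\tilde p, \tilde U)$ and a smooth future-directed timelike curve $\tilde\gamma : [0,1] \to \tilde U$ from $\tilde p$ to $\tilde r_+$, every $\tilde\gamma(t)$ with $t \in (0,1]$ lies in $I^+(\tilde p, \tilde U) \subseteq \iota(M)$. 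Reversing the parametrization onto $[-1,0]$ directly shows $\tilde p \in \partial^-\iota(M)$.

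The main obstacle is not conceptual but a matter of bookkeeping in low regularity: everything hinges on having enough local causal structure for the merely continuous metric $\tilde g$, specifically non-emptiness and openness of $I^\pm(q, \tilde U)$ and the possibility of joining chronologically related points of $\tilde U$ by a smooth timelike curve contained in $\tilde U$. These properties are exactly what is supplied by the preliminary machinery of Section \ref{SecFundResults}; once they are in hand, the three-way case analysis above closes the argument with no further work.
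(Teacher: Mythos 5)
Your argument is correct and is essentially the standard proof of this statement: the paper itself gives no proof but cites Lemma 2.17 of \cite{Sbie15}, whose argument runs along the same two-case lines as yours (either a timelike curve emanating from a nearby interior point $\tilde q_n$ exits $\iota(M)$ inside the near-Minkowski chart, and its first exit point is the desired boundary point, or else $\tilde p$ itself is the endpoint of a timelike curve lying in $\iota(M)$), and the local facts you invoke — non-emptiness and openness of $I^{\pm}(q,\tilde U)$ and joining by timelike curves — do hold for a continuous metric with the paper's piecewise-smooth notion of timelike curve, even though Section \ref{SecFundResults} only states the cone inclusions from which they follow. One small correction: your specific assignments to $\partial^+\iota(M)$ versus $\partial^-\iota(M)$ are not justified, since the chart's coordinate time orientation need not agree with the time orientation $\iota$ carries over from $M$; what is true is that the pullback $\iota^{-1}\circ\tilde\gamma$ of each constructed curve is timelike and hence, by continuity and connectedness of the segment, either everywhere future directed or everywhere past directed in $M$, so each case yields a point of $\partial^+\iota(M)\cup\partial^-\iota(M)$ — which is exactly what the lemma asserts.
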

In particular the Lemma shows that if $(M,g)$ is future and past $\Gamma$-inextendible, then it is also $\Gamma$-inextendible.

The past and future boundary interchange under a change of time orientation of $(M,g)$. It is thus sufficient to focus in the following on the future boundary.
The next proposition is found in \cite{Sbie18}, Proposition 2.2.

\begin{proposition}\label{PropBoundaryChart}
Let $\iota : M \hookrightarrow \tilde{M}$ be a $C^0$-extension of a smooth time-oriented globally hyperbolic Lorentzian manifold $(M,g)$ with Cauchy hypersurface $\Sigma$  and let $\tilde{p} \in \partial^+ \iota(M)$. For every $\delta >0$ there exists a chart $\tilde{\varphi} : \tilde{U} \to(-\varepsilon_0, \varepsilon_0) \times  (-\varepsilon_1, \varepsilon_1)^{d} =: R_{\varepsilon_0, \varepsilon_1}$, $\varepsilon_0, \varepsilon_1 >0$ with the following properties
\begin{enumerate}[i)]
\item $\tilde{p} \in \tilde{U}$ and $\tilde{\varphi}(p) = (0, \ldots, 0)$
\item $|\tilde{g}_{\mu \nu} - m_{\mu \nu}| < \delta$, where $m_{\mu \nu} = \mathrm{diag}(-1, 1, \ldots , 1)$
\item There exists a Lipschitz continuous function $f : (-\varepsilon_1, \varepsilon_1)^d \to (-\varepsilon_0, \varepsilon_0)$ with the following property: 
\begin{equation}\label{PropF1}
\{(x_0,\underline{x}) \in (-\varepsilon_0, \varepsilon_0) \times (-\varepsilon_1, \varepsilon_1)^{d} \; | \: x_0 < f(\underline{x})\} \subseteq \tilde{\varphi} \big( \iota\big(I^+(\Sigma,M)\big)\cap \tilde{U}\big)
\end{equation} and 
\begin{equation}\label{PropF2}
\{(x_0,\underline{x}) \in (-\varepsilon_0, \varepsilon_0) \times (-\varepsilon_1, \varepsilon_1)^{d}  \; | \: x_0 = f(\underline{x})\} \subseteq \tilde{\varphi}\big(\partial^+\iota(M)\cap \tilde{U}\big) \;.
\end{equation}
Moreover, the set on the left hand side of \eqref{PropF2}, i.e. the graph of $f$, is achronal\footnote{With respect to \emph{smooth} timelike curves.} in $(-\varepsilon_0, \varepsilon_0) \times  (-\varepsilon_1, \varepsilon_1)^{d}$.
\end{enumerate}
\end{proposition}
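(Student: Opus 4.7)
The strategy is to build a chart around $\tilde p$ in which $\tilde g$ is close to Minkowski and the $x_0$-axis is aligned with the tangent of a witnessing timelike curve at $\tilde p$; the function $f$ will then be defined as the first-exit height along vertical coordinate lines from the open set $\tilde\varphi(\iota(I^+(\Sigma,M))\cap\tilde U)$. Concretely, let $\tilde\gamma:[-1,0]\to\tilde M$ be the witnessing timelike curve from the definition of $\partial^+\iota(M)$, put $v:=\dot{\tilde\gamma}(0)$, and complete $v/\sqrt{-\tilde g(v,v)}$ to a $\tilde g(\tilde p)$-orthonormal frame at $\tilde p$. Extending these to smooth vector fields on a neighbourhood and straightening them yields a smooth chart $\tilde\varphi$ centered at $\tilde p$ with $\tilde g_{\mu\nu}(\tilde p)=m_{\mu\nu}$. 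By continuity of $\tilde g$, restricting to $R_{\varepsilon_0,\varepsilon_1}$ for sufficiently small $\varepsilon_0,\varepsilon_1>0$ secures (i) and (ii), and in these coordinates $\dot{\tilde\gamma}(0)=\partial_{x_0}|_{\tilde p}$ is future timelike.

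Since $\Sigma$ is Cauchy in $M$ and $\iota^{-1}\circ\tilde\gamma|_{[-1,0)}$ is future inextendible in $M$ (it leaves $M$ at $\tilde p$), the curve eventually enters $\iota(I^+(\Sigma,M))$. Using openness of $\iota(M)$ together with the alignment of the chart, one can arrange $\varepsilon_1\ll\varepsilon_0$ so that the seed slab $W:=\{(x_0,\underline x)\in R_{\varepsilon_0,\varepsilon_1}:x_0\leq-\varepsilon_0/2\}$ is contained in $\tilde\varphi(\iota(I^+(\Sigma,M))\cap\tilde U)$. One then defines
\begin{equation*}
f(\underline x):=\inf\{x_0\in(-\varepsilon_0,\varepsilon_0):(x_0,\underline x)\notin\tilde\varphi(\iota(I^+(\Sigma,M))\cap\tilde U)\} \;.
\end{equation*}
The seed slab gives $f(\underline x)\geq-\varepsilon_0/2$, and the subgraph property \eqref{PropF1} follows from the infimum construction together with openness of $\iota(M)$. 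At $\underline x=\underline 0$ one has $f(\underline 0)=0$ since $\tilde p=(0,\underline 0)\notin\iota(M)$ while the seed slab fills in everything below. Property \eqref{PropF2} is obtained by noting that $s\mapsto(f(\underline x)+s,\underline x)$ on $s\in(-\eta,0]$ is a smooth future-timelike curve in the near-Minkowski chart whose prehistory lies in $\iota(I^+(\Sigma,M))$, so its future endpoint $(f(\underline x),\underline x)$ belongs to $\partial^+\iota(M)$.

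It remains to show that the graph of $f$ is achronal, which will also force the Lipschitz property and the bound $f(\underline x)<\varepsilon_0$. Suppose for contradiction that two graph points $P_1\ll P_2$ were connected by a smooth future-timelike curve $\alpha$ in $R_{\varepsilon_0,\varepsilon_1}$. Shifting to $\alpha-\epsilon\partial_{x_0}$ for small $\epsilon>0$ preserves timelikeness by (ii) and places the endpoints $P_1-\epsilon\partial_{x_0}, P_2-\epsilon\partial_{x_0}$ in the subgraph, hence in $\tilde\varphi(\iota(I^+(\Sigma,M))\cap\tilde U)$. Concatenating with a short vertical future-timelike segment from $P_2-\epsilon\partial_{x_0}$ up to $P_2+\epsilon'\partial_{x_0}$ with $\epsilon'<\epsilon$ yields a future-timelike path $\beta$ from a point of $\iota(I^+(\Sigma,M))$ to a point strictly above $f(\underline x_2)$. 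A continuity/openness argument shows $\beta$ can be taken to lie entirely in $\iota(M)$; causal propagation inside $M$ then forces its endpoint to lie in $\iota(I^+(\Sigma,M))$, contradicting the infimum definition of $f(\underline x_2)$. Achronality combined with (ii) gives the Lipschitz bound, since any chord $(f(\underline x_1)-f(\underline x_2),\underline x_1-\underline x_2)$ between graph points must be non-timelike in the near-Minkowski metric, forcing $|f(\underline x_1)-f(\underline x_2)|\leq L(\delta)|\underline x_1-\underline x_2|$ with $L(\delta)\to 1$ as $\delta\to 0$. This Lipschitz bound applied to the base value $f(\underline 0)=0$ keeps $f$ inside $(-\varepsilon_0,\varepsilon_0)$ provided $\varepsilon_1\ll\varepsilon_0$.

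The main obstacle is the achronality step: one must upgrade a hypothetical timelike connection between two future-boundary points into an actual contradiction with the infimum defining $f$, which requires a careful perturbation keeping the extended path $\beta$ inside $\iota(M)$. This relies on an interplay between openness of $\iota(M)$, the near-Minkowski cone structure (ii), and the fact that $I^+(\Sigma,M)$ propagates forward under future-directed causal motion within $M$; this interplay also governs the relative sizes of $\varepsilon_0$ and $\varepsilon_1$ required for the argument to close.
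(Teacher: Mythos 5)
You are proving a statement that this paper does not prove itself: it is quoted from \cite{Sbie18}, Proposition 2.2, and your scaffolding (near-Minkowski chart centred at $\tilde p$, $f$ defined as the first-exit height of vertical lines from $S:=\tilde\varphi(\iota(I^+(\Sigma,M))\cap\tilde U)$, Lipschitz continuity from achronality plus the flat cone comparison) is indeed the scaffolding of that proof. The gap is that the two load-bearing steps are asserted with justifications that do not work, and both are exactly the places where global hyperbolicity must enter. First, the seed slab: ``openness of $\iota(M)$ together with the alignment of the chart'' gives at best a tubular neighbourhood of $\tilde\gamma$ whose radius is uncontrolled and shrinks to zero as one approaches $\tilde p\notin\iota(M)$; it does not give a slab of full transverse width $\varepsilon_1$ inside $\iota(M)$, and membership in $I^+(\Sigma,M)$ is in any case a causal statement, not a topological consequence of openness. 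The correct mechanism is Lorentzian: for $\varepsilon_1\ll\varepsilon_0$ the slab lies in the chart-timelike past of points $\tilde\gamma(-s)\in\iota(I^+(\Sigma,M))$, and a past-directed timelike segment in the chart starting at such a point cannot leave $\iota(M)$, because otherwise its pullback would be a past-directed, past-inextendible causal curve in $M$ starting in $I^+(\Sigma,M)$ that never meets $\Sigma$, contradicting $\Sigma$ being Cauchy. Your write-up never invokes this; the Cauchy property is used only to see that $\tilde\gamma$ eventually enters $\iota(I^+(\Sigma,M))$, which is far from sufficient. Note the gap is not cosmetic: even your claim $f(\underline 0)=0$ needs it, since the seed slab only covers $x_0\le-\varepsilon_0/2$ and openness of $\iota(M)$ says nothing about the vertical segment between $-\varepsilon_0/2$ and $0$.

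The achronality step has the same defect in a more acute form. The assertion that the future-directed path $\beta$ ``can be taken to lie entirely in $\iota(M)$'' by a continuity/openness argument is untenable: a future-directed causal curve starting in $\iota(I^+(\Sigma,M))$ can perfectly well leave $\iota(M)$ -- that is precisely what happens at the future boundary -- so no openness argument confines it, and forward ``causal propagation'' cannot help. Moreover, even if $\beta$ did stay in $\iota(M)$, landing in $S$ at a height above $f(\underline x_2)$ does not contradict the first-exit definition of $f$, since $S$ may intersect that vertical line again above $f(\underline x_2)$. The contradiction has to be run in the past direction, as in \cite{Sbie18}: from a point just below the higher graph point (which is in $S$ by \eqref{PropF1}) one follows a past-directed timelike curve in the chart to just above the lower graph point and then down its vertical line, and the Cauchy-hypersurface argument above forbids this curve from exiting $\iota(M)$, while the lower graph point is not in $\iota(M)$. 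Smaller, repairable issues: translating $\alpha$ by $-\epsilon\partial_{x_0}$ need not preserve $\tilde g$-timelikeness (only the narrow cone $C^+_{\nicefrac{5}{6}}$ is uniformly timelike; use vertical segments and openness of $I^{\pm}(\cdot,\tilde U)$ instead); for \eqref{PropF2} you must check $(f(\underline x),\underline x)\notin\iota(M)$ for every $\underline x$, not only $\underline x=\underline 0$ (if it were in $\iota(M)$, openness of $\iota(M)$ and the timelike vertical segment below it would place it in $S$, contradicting the definition of $f$); and a priori a vertical line might never exit $S$, so the bound $f<\varepsilon_0$ cannot simply be deferred to a Lipschitz estimate whose proof presupposes that the graph is everywhere defined.
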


Note that any past directed causal curve starting below the graph of $f$ remains below the graph of $f$, since if it crossed the graph of $f$ it would, via $\iota^{-1}$, give rise to a past directed past inextendible causal curve in $M$ which starts in $I^+(\Sigma,M)$ but does not intersect $\Sigma$ -- which contradicts $\Sigma$ being a Cauchy hypersurface.

We define
\begin{itemize}
\item $C^+_a := \big{\{} X \in \R^{d+1} \, | \, \frac{<X,e_0>_{\R^{d+1}}}{||X||_{\R^{d+1}}}   > a \big{\}}$
\item $C^-_a := \big{\{} X \in \R^{d+1} \, | \, \frac{<X,e_0>_{\R^{d+1}}}{||X||_{\R^{d+1}}}   < -a \big{\}}$
\item $C^c_a := \big{\{} X \in \R^{d+1} \, | \, -a < \frac{<X,e_0>_{\R^{d+1}}}{||X||_{\R^{d+1}}}   < a \big{\}}$\;.
\end{itemize}
Here, $C^+_a$ is the forward cone of vectors which form an angle of less than $\cos^{-1}(a)$ with the $x_0$-axis, and $C^-_a$ is the corresponding backwards cone. In Minkowski space, the forward and backward cones of timelike vectors correspond to the value $a = \cos(\frac{\pi}{4}) = \frac{1}{\sqrt{2}}$.

Since $\frac{5}{8} < \frac{1}{\sqrt{2}} < \frac{5}{6}$, one can choose $\delta >0$ in Proposition \ref{PropBoundaryChart} small enough such that in the chart $\tilde{\varphi}$  all vectors in $C^+_{\nicefrac{5}{6}}$ are future directed timelike, all vectors in $C^-_{\nicefrac{5}{6}}$ are past directed timelike, and all vectors in $C^c_{\nicefrac{5}{8}}$ are spacelike.

A first easy consequence of this is that we have the inclusion relations\footnote{See proof of Theorem 3.1, Step 1.2 in \cite{Sbie15} for the second inclusion -- the first follows directly.} 
\begin{equation}\label{EqInclusionRelationFuture}
\begin{split}
&\big( x + C^+_{\nicefrac{5}{6}}\big) \cap  \Reps \subseteq J^+(x, \Reps) \subseteq \big( x + C^+_{\nicefrac{5}{8}}\big) \cap  \Reps \\
&\big( x + C^-_{\nicefrac{5}{6}}\big) \cap  \Reps \subseteq J^-(x, \Reps) \subseteq \big( x + C^-_{\nicefrac{5}{8}}\big) \cap  \Reps  \;.
\end{split}
\end{equation}
A second consequence is that the $x_0$ coordinate is a time function  and thus if $\tilde{\gamma}$ is a causal curve in $R_{\varepsilon_0, \varepsilon_1}$ we may reparametrise it by the $x_0$ coordinate, i.e., $\tilde{\gamma}(s) = \big(s, \overline{\tilde{\gamma}}(s)\big)$. In this parametrisation we have $$\frac{<\dot{\tilde{\gamma}}(s), \partial_0>_{\R^{d+1}}}{||\dot{\tilde{\gamma}}(s)||_{\R^{d+1}}} \geq \frac{5}{8} \;,$$
and thus we obtain the uniform bound 
\begin{equation}\label{EqUniformBoundCausalCurve}
||\dot{\tilde{\gamma}}(s)||_{\R^{d+1}} \leq \frac{8}{5}\;.
\end{equation}

The next proposition follows from Proposition \ref{PropBoundaryChart} together with the proof of Theorem 2 in \cite{GalLinSbi17}, see also Theorem 3.2 in \cite{Sbie18}.

\begin{proposition}\label{PropGeodesicBoundaryChart}
Let $(M,g)$ be a smooth time-oriented globally hyperbolic Lorentzian manifold and $\iota : M \hookrightarrow \tilde{M}$ a $C^0$-extension. Assume that $\partial^+ \iota(M) \neq \emptyset$ and let $\tilde{p} \in \partial^+ \iota(M)$. Let $\tilde{\varphi} : \tilde{U} \to (-\varepsilon_0, \varepsilon_0) \times (-\varepsilon_1, \varepsilon_1)^d$ be a chart around $\tilde{p}$ as in Proposition \ref{PropBoundaryChart}. Then there exists a future directed timelike geodesic $\tau : [-1,0) \to M$ that is future inextendible in $M$ and such that $\tilde{\varphi} \circ \iota \circ \tau :[-1,0) \to (-\varepsilon_0, \varepsilon_0) \times (-\varepsilon_1, \varepsilon_1)^d$ maps into $\{(s,\underline{x}) \in (-\varepsilon_0, \varepsilon_0) \times (-\varepsilon_1, \varepsilon_1)^{d} \; | \: s < f(\underline{x})\}$ and has an endpoint on $\{(s,\underline{x}) \in (-\varepsilon_0, \varepsilon_0) \times (-\varepsilon_1, \varepsilon_1)^{d} \; | \: s = f(\underline{x})\}$.
\end{proposition}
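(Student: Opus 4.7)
The plan is to construct $\tau$ via a length-maximization argument inside the chart from Proposition \ref{PropBoundaryChart}, following closely the strategy of Theorem 2 in \cite{GalLinSbi17}. First, apply Proposition \ref{PropBoundaryChart} with $\delta > 0$ small enough that the cone inclusions \eqref{EqInclusionRelationFuture} and the velocity bound \eqref{EqUniformBoundCausalCurve} are in force throughout $\tilde{U}$. Then fix a base point $\tilde{q} \in \iota(M) \cap \tilde{U}$ with coordinates $\tilde{\varphi}(\tilde{q}) = (-\eta, 0, \ldots, 0)$ for some small $\eta > 0$. By property (iii) of Proposition \ref{PropBoundaryChart} the point $\tilde{q}$ lies strictly below the graph of $f$, hence is indeed in $\iota(M)$; and since $\tilde{p} - \tilde{q} = (\eta, 0, \ldots, 0)$ lies in the open forward cone $C^+_{5/6}$, we get $\tilde{p} \in I^+(\tilde{q}, \tilde{U})$. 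Write $q := \iota^{-1}(\tilde{q}) \in M$.

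Next, choose a sequence $\tilde{p}_n \to \tilde{p}$ with each $\tilde{p}_n \in \iota(M) \cap I^+(\tilde{q}, \tilde{U})$, which is possible by \eqref{EqInclusionRelationFuture} since $\tilde{p}$ lies in the open forward cone from $\tilde{q}$. For each $n$ pick a future directed timelike curve $\gamma_n$ in $M$ from $q$ to $p_n := \iota^{-1}(\tilde{p}_n)$ whose $g$-length is within $1/n$ of the supremum $L_n$ of all such lengths. The suprema $L_n$ are uniformly bounded above thanks to \eqref{EqUniformBoundCausalCurve} and the near-Minkowskian estimate $|\tilde{g}_{\mu \nu} - m_{\mu \nu}| < \delta$, and they converge to a strictly positive number $L$ (bounded below by the near-Minkowski timelike separation of $\tilde{q}$ and $\tilde{p}$). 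Parametrizing $\gamma_n$ proportionally to arc length, the coordinate velocities are uniformly bounded by \eqref{EqUniformBoundCausalCurve}, so the standard limit curve lemma inside the globally hyperbolic manifold $M$ extracts a subsequential future directed causal limit curve $\gamma$ issuing from $q$. Because the $\gamma_n$ are near-maximizers whose lengths tend to the strictly positive value $L$, the Avez--Seifert-type characterization of maximizers in globally hyperbolic spacetimes implies that $\gamma$ is in fact a smooth timelike geodesic of length $L$.

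Now extend $\gamma$ maximally in $M$ to a future inextendible affinely parametrized future directed timelike geodesic $\tau : [-1, 0) \to M$ with $\tau(-1) = q$. The image $\tilde{\varphi} \circ \iota \circ \tau$ stays in $\{x_0 < f(\underline{x})\}$: any crossing of the graph would, via $\iota^{-1}$, produce a past directed past inextendible causal curve in $M$ starting in $I^+(\Sigma, M)$ and failing to meet $\Sigma$, contradicting the Cauchy property of $\Sigma$ (this is exactly the observation recorded after Proposition \ref{PropBoundaryChart}). Moreover, \eqref{EqUniformBoundCausalCurve} together with the monotonicity of $x_0$ along causal curves forces $\tilde{\varphi} \circ \iota \circ \tau(s)$ to possess a well-defined coordinate limit in $\overline{\Reps}$ as $s \to 0^-$. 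This limit cannot lie in the open region $\{x_0 < f(\underline{x})\} \subseteq \tilde{\varphi}(\iota(M) \cap \tilde{U})$: if it did, then the limit point would lie in $\iota(M)$, and $\tau$ would admit a smooth geodesic extension in $M$, contradicting its future inextendibility. The only remaining possibility is that the limit lies on the graph of $f$, which is exactly the required endpoint condition.

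The main technical obstacle is the middle step: verifying that the limit curve $\gamma$ is a genuine smooth timelike geodesic (not merely a Lipschitz causal curve with possible null pieces or corners) and that its length equals the limit $L > 0$ of the $L_n$ rather than something strictly smaller. Global hyperbolicity of $M$ together with the strict positivity of $L$ are the essential inputs here, and they are precisely the ingredients that make the proof of Theorem 2 in \cite{GalLinSbi17} go through; I would invoke that proof directly rather than redo it.
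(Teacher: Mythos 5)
The decisive gap is localization: nothing in your construction forces the curves $\gamma_n$ --- and hence the limit geodesic $\tau$ --- to have image inside the chart $\tilde{U}$. Your supremum $L_n$ runs over \emph{all} timelike curves in $M$ from $q$ to $p_n$; pushed to $\tilde{M}$ such a curve is only known to go from $\tilde{q}$ to $\tilde{p}_n$, and it may leave $\tilde{U}$ and return (causality is global --- exactly the point this paper stresses in the discussion before Lemma \ref{LemCausalHomotopy}). Consequently (i) the asserted uniform bound on $L_n$ via $|\tilde{g}_{\mu\nu}-m_{\mu\nu}|<\delta$ and \eqref{EqUniformBoundCausalCurve} is unjustified, since those estimates only control curves that stay in $\Reps$; (ii) there is no argument that $\iota\circ\tau$ remains in $\tilde{U}$ at all, let alone in $\{x_0<f(\underline{x})\}$ --- your appeal to the remark after Proposition \ref{PropBoundaryChart} only rules out \emph{touching} the graph (which is automatic, the graph being disjoint from $\iota(M)$), not exiting the chart through its lateral faces; and (iii) both the existence of a coordinate limit of $\tilde{\varphi}\circ\iota\circ\tau$ and your final dichotomy ``limit strictly below the graph versus on the graph'' presuppose precisely this missing containment (and, in addition, that the limit lies in the open box rather than on its side boundary). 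Since ``maps into $\{x_0<f(\underline{x})\}$ with endpoint on the graph'' is the entire content of the proposition, this cannot be delegated: it is exactly what the cited proof of Theorem 2 in \cite{GalLinSbi17} (see also Theorem 3.2 in \cite{Sbie18}) has to produce, namely a geodesic that is trapped in the boundary chart, and a global length-maximization in $M$ does not interface with that mechanism. Deferring to that proof only for ``the middle step'' (the geodesic property of the limit) therefore does not patch the argument.

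There is also a secondary gap inside $M$ itself: since the $p_n$ escape every compact set of $M$ (they cannot accumulate in $M$ because $\tilde{p}_n\to\tilde{p}\notin\iota(M)$), the Avez--Seifert characterization does not apply. The limit of near-maximizers is a causal geodesic each of whose compact segments maximizes, but it may be \emph{null}: the positivity of $L$ bounds the lengths of the full curves $\gamma_n$, and that length can be lost at the escaping end, so ``smooth timelike geodesic of length $L$'' is not established. (By contrast, the step you single out at the end --- that a geodesic whose image converges to a point of $\iota(M)$ admits a geodesic extension --- is standard: a continuously extendible geodesic is geodesically extendible; that is not where the difficulty lies.) A smaller point: you should take the $\tilde{p}_n$ \emph{below} the graph, so that the timelike curves in $\tilde{U}$ from $\tilde{q}$ to $\tilde{p}_n$ stay below the graph and hence in $\iota(M)$; otherwise even $p_n\in I^+(q,M)$, and with it the existence of the competitors defining $L_n$, has not been justified.
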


The next lemma will be used in our later applications to obtain topological information from Lorentzian causality.

\begin{lemma}\label{LemCausalHomotopy}
Let $(M,g)$ be a time-oriented Lorentzian manifold with $g \in C^0$ and let $\iota : M \hookrightarrow \tilde{M}$ be a $C^0$-extension of $M$. Moreover, let $\gamma : [0,1] \to M$ be a future directed timelike curve and let $\tilde{U} \subseteq \tilde{M}$ be an open set. Assume that $\tilde{\gamma} := \iota \circ \gamma$ maps into $\tilde{U}$ and that $J^+(\tilde{\gamma}(0), \tilde{U}) \cap J^-(\tilde{\gamma}(1), \tilde{U}) \subset \subset \tilde{U}$ is compactly contained in $\tilde{U}$.

Let $\Gamma : [0,1] \times [0,1] \to M$ be a causal homotopy of $\gamma$ with fixed endpoints, i.e., \begin{enumerate}
\item $s \mapsto \Gamma(s ; r)$ is a future directed causal curve for all $r \in [0,1]$ with $\Gamma(0;r) = \gamma(0)$ and $\Gamma(1;r) = \gamma(1)$
\item $\Gamma(s;0) = \gamma(s)$ for all $s \in [0,1]$.
\end{enumerate}

Then $\iota \circ \Gamma$ maps into $\tilde{U}$.
\end{lemma}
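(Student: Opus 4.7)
My plan is a straightforward connectedness argument on the homotopy parameter. Define
\begin{equation*}
A := \{r \in [0,1] \; | \; (\iota \circ \Gamma)([0,1] \times \{r\}) \subseteq \tilde{U}\} \;.
\end{equation*}
The hypothesis that $\tilde{\gamma} = \iota \circ \gamma = \iota \circ \Gamma(\cdot, 0)$ maps into $\tilde{U}$ says that $0 \in A$. I will show that $A$ is both open and closed in $[0,1]$; since $A$ is non-empty, this forces $A = [0,1]$, which is the claim.

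The key observation for \emph{closedness} is that for each $r \in A$ and each $s \in [0,1]$, the point $(\iota \circ \Gamma)(s,r)$ actually lies in the compactly contained set $K := J^+(\tilde{\gamma}(0), \tilde{U}) \cap J^-(\tilde{\gamma}(1), \tilde{U})$, not merely in $\tilde{U}$: the restrictions $\iota \circ \Gamma([0,s] \times \{r\})$ and $\iota \circ \Gamma([s,1] \times \{r\})$ are future directed causal curves in $\tilde{U}$ connecting $\tilde{\gamma}(0)$ to $(\iota \circ \Gamma)(s,r)$ and $(\iota \circ \Gamma)(s,r)$ to $\tilde{\gamma}(1)$ respectively. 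So if $r_n \in A$ and $r_n \to r$, joint continuity of $\iota \circ \Gamma$ in $(s,r)$ gives $(\iota \circ \Gamma)(s,r) = \lim_n (\iota \circ \Gamma)(s,r_n) \in \overline{K} \subseteq \tilde{U}$ for every $s$, so $r \in A$. This is the step where the hypothesis $K \subset\subset \tilde{U}$ is genuinely used: without it, a limit curve could in principle touch $\partial \tilde{U}$.

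For \emph{openness}, let $r_0 \in A$. Then the image $C := (\iota \circ \Gamma)([0,1] \times \{r_0\})$ is compact (continuous image of a compact set) and contained in the open set $\tilde{U}$. Choosing any auxiliary Riemannian metric on $\tilde{M}$, the distance from $C$ to $\tilde{M} \setminus \tilde{U}$ is some $\delta > 0$ (or $\tilde{U} = \tilde{M}$ and the claim is trivial). By uniform continuity of $\iota \circ \Gamma$ on the compact square $[0,1]^2$ there exists $\varepsilon > 0$ such that $|r - r_0| < \varepsilon$ implies $\mathrm{dist}\bigl((\iota \circ \Gamma)(s,r), (\iota \circ \Gamma)(s,r_0)\bigr) < \delta$ for all $s$. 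For such $r$, the curve $(\iota \circ \Gamma)(\cdot, r)$ lies in the open $\delta$-neighborhood of $C$, hence in $\tilde{U}$, so $(r_0 - \varepsilon, r_0 + \varepsilon) \cap [0,1] \subseteq A$.

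I do not foresee any genuine obstacle; the only point that requires a little care is recording that sub-arcs of future directed causal curves are again future directed causal curves (so one really does land in $K$, not merely in $\tilde{U}$), which in turn is what makes the closedness step work.
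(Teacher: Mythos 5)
Your proof is correct and follows essentially the same route as the paper: a connectedness argument in the homotopy parameter, with openness coming from continuity of $\iota\circ\Gamma$ and the openness of $\tilde{U}$, and closedness from the fact that each causal curve of the homotopy lies in $J^+(\tilde{\gamma}(0),\tilde{U})\cap J^-(\tilde{\gamma}(1),\tilde{U})\subset\subset\tilde{U}$, so limits stay in $\tilde{U}$. Your write-up in fact spells out the openness step (via uniform continuity and a compact image) in more detail than the paper does; otherwise the two arguments coincide.
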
 

We briefly elaborate on the statement of this lemma. We recall that the causal relations are global in nature. Given a causal curve $\sigma$ in $M$ from $\gamma(0)$ to $\gamma(1)$, then $\iota \circ \sigma$ lies by definition in  $J^+(\tilde{\gamma}(0),\tilde{M}) \cap J^-(\tilde{\gamma}(1),\tilde{M})$, but in general it does not have to lie in $J^+(\tilde{\gamma}(0),\tilde{U}) \cap J^-(\tilde{\gamma}(1),\tilde{U})$, even if the latter set is compact in $\tilde{U}$.  In applications $\tilde{U}$ will be chosen to be a small neighbourhood of $\gamma$ and thus the lemma shows that causal curves that are causally homotopic to $\gamma$ belong to the \emph{local causality} of $\tilde{U}$. Related techniques have already been employed by the author in \cite{Sbie15}.

\begin{proof}
The proof is by continuity. Let $I:= \{ r \in [0,1] \; | \; s \mapsto (\iota \circ \Gamma) (s ; r) \textnormal{ maps into } \tilde{U}\}$. Since $\tilde{\gamma}$ maps into $\tilde{U}$ we have $0 \in I$ and thus $I $ is not empty. Also $I$ is open since $\tilde{U}$ is open. To show that $I$ is closed, let $r_n \in I$ be a sequence with $r_n \to r_\infty \in [0,1]$ for $n \to \infty$. Since $s \mapsto (\iota \circ \Gamma)(s ; r_n)$ is a future directed causal curve in $\tilde{U}$ from $\tilde{\gamma}(0)$ to $\tilde{\gamma}(1)$, it lies in $J^+(\tilde{\gamma}(0), \tilde{U}) \cap J^-(\tilde{\gamma}(1), \tilde{U})$. Since this set is compactly contained in $\tilde{U}$ by assumption, it follows that also $s \mapsto (\iota \circ \Gamma)(s; r_\infty)$ maps into $\tilde{U}$. This shows that $I = [0,1]$, which concludes the proof.
\end{proof}

\subsection{Lemma for bounding local holonomy in $C^{0,1}_{\loc}$-extensions}
\label{SecLemBound}
The following lemma is  fundamental to all our later applications.

\begin{lemma}\label{LemBoundParallelTransport}
Let $(\tilde{O}, \tilde{g})$ be a Lorentzian manifold with a $C^1$-regular Lorentzian metric $\tilde{g}$, which, moreover, is endowed with a global coordinate chart $\tilde{\varphi} : \tilde{O} \to \tilde{V} \subseteq \R^{d+1}$ with respect to which the metric components satisfy $|\tilde{g}_{\mu \nu}| \leq C_{\tilde{g}}$ and $|\partial_\kappa \tilde{g}_{\mu \nu}| \leq C_{\partial \tilde{g}}$. Let $\tilde{\gamma} : [0,T] \to \tilde{O}$ be a smooth curve with\footnote{Here, for a vector $X \in T_{\tilde{p}}\tilde{O}$, the norm $||X||_{\R^{d+1}} := \sqrt{\sum_{\mu = 0}^d (X^\mu)^2}$ denotes the standard Euclidean norm with respect to the global coordinate chart.}  
$||\dot{\tilde{\gamma}}||_{\R^{d+1}} \leq C_{\dot{\tilde{\gamma}}}$, let $X \in T_{\tilde{\gamma}(0)}\tilde{O}$, and let $P : T_{\tilde{\gamma}(0)}\tilde{O} \to T_{\tilde{\gamma}(T)}\tilde{O}$ denote the parallel transport map along $\tilde{\gamma}$.

Then $||P(X)||_{\R^{d+1}} \leq ||X||_{\R^{d+1}} \cdot e^{C(C_{\tilde{g}}, C_{\partial \tilde{g}}, C_{\dot{\tilde{\gamma}}}) \cdot T}$, where the constant $C(C_{\tilde{g}}, C_{\partial \tilde{g}}, C_{\dot{\tilde{\gamma}}})$ depends on $\tilde{\gamma}$ only via $C_{\dot{\tilde{\gamma}}}$.
\end{lemma}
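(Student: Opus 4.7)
The plan is to reduce the statement to a standard Grönwall argument applied to the linear system of ODEs defining parallel transport. In the global chart, let $X(s) \in \R^{d+1}$ denote the coordinate components of the parallel transport of $X$ along $\tilde{\gamma}$. Then $X(s)$ satisfies
\begin{equation*}
\frac{dX^\mu}{ds}(s) = -\Gamma^\mu_{\alpha\beta}\bigl(\tilde{\gamma}(s)\bigr)\,\dot{\tilde{\gamma}}^\alpha(s)\,X^\beta(s),\qquad X(0)=X,
\end{equation*}
with the Christoffels given by the usual formula in terms of $\tilde{g}_{\mu\nu}$, its inverse $\tilde{g}^{\mu\nu}$, and $\partial_\kappa \tilde{g}_{\mu\nu}$.

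First I would verify that each $|\Gamma^\mu_{\alpha\beta}|$ admits a bound of the form $K(C_{\tilde{g}}, C_{\partial \tilde{g}})$. The bound on $\partial_\kappa \tilde{g}_{\mu\nu}$ is given; the only subtlety is converting the pointwise bound $|\tilde{g}_{\mu\nu}|\le C_{\tilde{g}}$ (together with the fact that $\tilde g$ is a Lorentzian metric, whose determinant is controlled away from zero in the charts appearing in our applications, e.g.\ those of Proposition~\ref{PropBoundaryChart}) into a bound on the entries of $\tilde{g}^{\mu\nu}$ via Cramer's rule. Combining with $||\dot{\tilde{\gamma}}||_{\R^{d+1}}\le C_{\dot{\tilde{\gamma}}}$, this yields a pointwise operator bound
\begin{equation*}
\bigl|\Gamma^\mu_{\alpha\beta}(\tilde{\gamma}(s))\,\dot{\tilde{\gamma}}^\alpha(s)\bigr|\le C(C_{\tilde{g}}, C_{\partial \tilde{g}}, C_{\dot{\tilde{\gamma}}})
\end{equation*}
on the $(d+1)\times(d+1)$ matrix appearing in the ODE above.

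Next I would run the standard energy estimate: differentiating $||X(s)||_{\R^{d+1}}^2=\sum_\mu (X^\mu(s))^2$ gives
\begin{equation*}
\frac{d}{ds}||X(s)||_{\R^{d+1}}^2 = -2\sum_{\mu} X^\mu(s)\,\Gamma^\mu_{\alpha\beta}\bigl(\tilde{\gamma}(s)\bigr)\dot{\tilde{\gamma}}^\alpha(s) X^\beta(s),
\end{equation*}
and Cauchy--Schwarz together with the Christoffel bound yields $\frac{d}{ds}||X(s)||_{\R^{d+1}}^2 \le 2C\,||X(s)||_{\R^{d+1}}^2$ for a constant $C$ depending only on $C_{\tilde{g}}$, $C_{\partial \tilde{g}}$, $C_{\dot{\tilde{\gamma}}}$. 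Grönwall's inequality then gives $||X(s)||_{\R^{d+1}}\le ||X||_{\R^{d+1}}\cdot e^{C\cdot s}$, and evaluating at $s=T$ yields the claim. The dependence on $\tilde{\gamma}$ enters only through $C_{\dot{\tilde{\gamma}}}$, as required.

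The only genuine obstacle is the first step, namely controlling $\tilde{g}^{\mu\nu}$ from an upper bound on $\tilde{g}_{\mu\nu}$; everything else is Grönwall. In the intended applications this is a non-issue because the chart of Proposition~\ref{PropBoundaryChart} is chosen so that $\tilde{g}_{\mu\nu}$ is uniformly close to $m_{\mu\nu}$, making $\tilde{g}^{\mu\nu}$ automatically bounded. For the lemma to hold abstractly one should therefore interpret $C_{\tilde{g}}$ as encoding both an upper bound on $|\tilde{g}_{\mu\nu}|$ and a uniform bound on $|\tilde{g}^{\mu\nu}|$; equivalently, a lower bound on $|\det \tilde{g}|$.
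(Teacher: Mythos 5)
Your argument is essentially the paper's own proof: write the parallel transport equation as a linear ODE in the global chart, bound the coefficient matrix $\Gamma^\mu_{\kappa\rho}(\tilde{\gamma}(s))\dot{\tilde{\gamma}}^\kappa(s)$ by a constant depending only on $C_{\tilde{g}}, C_{\partial\tilde{g}}, C_{\dot{\tilde{\gamma}}}$, and conclude with Gr\"onwall (the paper differentiates $||X(s)||_{\R^{d+1}}$ directly rather than its square, which is immaterial). Your side remark is also fair: bounding the Christoffel symbols needs control of $\tilde{g}^{\mu\nu}$, which the paper asserts without comment, but as you note this is harmless since in every application the chart of Proposition \ref{PropBoundaryChart} keeps $\tilde{g}_{\mu\nu}$ uniformly close to $m_{\mu\nu}$, so the inverse metric is automatically bounded.
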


This lemma will be applied to the region below the graph of $f$ in a chart $\tilde{U}$ as in Proposition \ref{PropBoundaryChart} to uniformly bound the parallel transport around a loop. Note that all that is needed to do so is a uniform bound on the Euclidean norm of the tangent vector  and the domain of definition of the curve. 

\begin{proof}
Let $X \in T_{\tilde{\gamma}(0)}\tilde{O}$ and let $s \mapsto X(s) \in T_{\tilde{\gamma}(s)} \tilde{O}$ denote the parallel transport of $X$ along $\tilde{\gamma}$. In the global coordinate chart we have $$ 0 = \Big(\frac{D}{ds} X(s)\Big)^\mu = \frac{d}{ds} X^\mu(s) + \Gamma^\mu_{\kappa \rho}(\tilde{\gamma}(s)) \dot{\tilde{\gamma}}^\kappa(s) X^\rho(s) = \frac{d}{ds} X^\mu(s) + A^\mu_{\;\; \rho}(s) X^\rho(s) \;,$$
where we have defined $A^\mu_{\;\; \rho}(s) :=   \Gamma^\mu_{\kappa \rho}(\tilde{\gamma}(s)) \dot{\tilde{\gamma}}^\kappa(s)$, an $s$-dependent linear map from $\R^{d+1}$ to $\R^{d+1}$.
By the assumptions we have $|\Gamma^\mu_{\kappa \rho}(\tilde{\gamma}(s)) \dot{\tilde{\gamma}}^\kappa(s)| \leq C(C_{\tilde{g}}, C_{\partial{\tilde{g}}}, C_{\dot{\tilde{\gamma}}})$ and thus we obtain for the operator norm of $A$, $||A(s)|| \leq  C(C_{\tilde{g}}, C_{\partial{\tilde{g}}}, C_{\dot{\tilde{\gamma}}})$. Furthermore, we compute
\begin{equation*}
\begin{split}
\frac{d}{ds} ||X(s)||_{\R^{d+1}} &= \frac{\sum_{\mu = 0}^d X^\mu(s) \frac{d}{ds} X^\mu(s)}{||X(s)||_{\R^{d+1}}}\leq \frac{||X(s)||_{\R^{d+1}} ||\frac{d}{ds} X^\mu(s)||_{\R^{d+1}}}{||X(s)||_{\R^{d+1}}}  = ||A(s) X(s)||_{\R^{d+1}} \\
&\leq C(C_{\tilde{g}}, C_{\partial{\tilde{g}}}, C_{\dot{\tilde{\gamma}}}) ||X(s)||_{\R^{d+1}} \;,
\end{split}
\end{equation*}
which gives $||X(s)||_{\R^{d+1}} \leq ||X(0)||_{\R^{d+1}} e^{C(C_{\tilde{g}}, C_{\partial{\tilde{g}}}, C_{\dot{\tilde{\gamma}}}) s}$ using Gronwall's inequality.
\end{proof}

\section{Cosmological singularities} \label{SecCos}

In this section we consider the following class of \emph{cosmological warped product spacetimes} $(M,g)$: Let $(\overline{M}, \overline{g})$ be a $3$-dimensional complete Riemannian manifold\footnote{Everything goes through  unchanged up to notation also for $d$-dimensional complete Riemannian manifolds, where $d \geq 1$.}, let $0 < b \leq \infty$ and let $a :(0,b) \to (0, \infty)$ be a smooth function.   We then set $M= (0,b) \times \overline{M}$ and 
\begin{equation}
\label{EqMetricWP} g = -dt^2 + a(t)^2 \overline{g}\;.
\end{equation} 
We fix a time-orientation on $(M,g)$ by stipulating that $\partial_t$ is future directed. Clearly, all spacetimes $(M,g)$ in our class are globally hyperbolic.  

Let $i,j,k$ denote local spatial coordinates on $(\overline{M},\overline{g})$. The Christoffel symbols of $g = -dt^2 + a(t)^2 \overline{g}$ are then given by
\begin{equation}\label{EqChristoffel}
\begin{aligned}
\Gamma^t_{tt} &= 0 \qquad  &&\Gamma^t_{ti} =0 \qquad  &&&\Gamma^t_{ij} = \dot{a}(t) a(t) \overline{g}_{ij} \\
\Gamma^i_{tt} &= 0 \qquad  &&\Gamma^i_{tj} = \frac{\dot{a}(t)}{a(t)} \delta^i_{\; \; j} \qquad  &&&\Gamma^i_{jk} = \overline{\Gamma}^i_{jk} \;,
\end{aligned}
\end{equation}
where $\overline{\Gamma}^i_{jk}$ denotes the Christoffel symbols of $(\overline{M}, \overline{g})$.
Let now $\gamma$ be a geodesic in $(M,g)$. It thus satisfies
\begin{align*} 
0&=\ddot{\gamma}^t + \Gamma^t_{ij} \dot{\gamma}^i \dot{\gamma}^j \\
0&= \ddot{\gamma}^i + 2\Gamma^i_{tj} \dot{\gamma}^t \dot{\gamma}^j + \Gamma^i_{jk} \dot{\gamma}^j \dot{\gamma}^k = \ddot{\gamma}^i + 2 \frac{\frac{d}{dt}a}{a} \dot{\gamma}^t \dot{\gamma}^i + \overline{\Gamma}^i_{jk} \dot{\gamma}^j \dot{\gamma}^k \;,
\end{align*}
which shows that if $\overline{\gamma}$ denotes the projection of $\gamma$ to $\overline{M}$, then $\overline{\nabla}_{\dot{\overline{\gamma}}} \dot{\overline{\gamma}} $ is proportional to $\dot{\overline{\gamma}}$. Thus, the projected geodesics are still geodesics in $(\overline{M}, \overline{g})$, although in general not affinely parametrised ones. 

Choose now polar normal coordinates $(\chi, \theta, \varphi) \in (0,\chi_0) \times \mathbb{S}^2$ on some open set $\overline{U} \subseteq \overline{M}$ such that in those coordinates the metric $\overline{g}$ takes the form 
\begin{equation}\label{EqMetricPolarNC}
\overline{g} = d \chi^2 + \sum_{A,B = 1}^2\overline{g}_{AB}(\chi, \theta, \varphi) \,dx^A \otimes dx^B\;,
\end{equation} 
where $x^1 = \theta$ and $x^2=\varphi$. Then the radial curves of constant $\theta$ and $\varphi$ are geodesics. It thus follows that if a geodesic in $(0,b) \times \overline{U} \subseteq M$ has initial tangent vector which lies in $\mathrm{span} \{\partial_t, \partial_\chi\}$ then it stays in the $\{t, \chi\}$-plane\footnote{Note that the curves of constant $\chi, \theta, \varphi$ are also geodesic, which follows from $\Gamma^t_{tt} = 0$.}. We have thus shown the following\footnote{Of course, one can also use \eqref{EqChristoffel} to show that the extrinsic curvature of the $\{t, \chi\}$-planes vanishes in order to infer the following proposition.}

\begin{proposition}\label{PropTotGeod}
Let $(M,g)$ be a cosmological warped product spacetime in the above class and let $(\chi, \theta, \varphi) \in (0,\chi_0) \times \mathbb{S}^2$ be local polar normal coordinates on some open set $\overline{U} \subseteq \overline{M}$. Then the $\{t, \chi\}$-planes in $(0,b) \times \overline{U}$ are totally geodesic.
\end{proposition}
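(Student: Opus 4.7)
My plan is to prove the proposition by verifying directly that the second fundamental form of the submanifolds $\Sigma_{(\theta_0,\varphi_0)} := \{(t,\chi,\theta_0,\varphi_0) : t \in (0,b),\ \chi \in (0,\chi_0)\}$ vanishes identically. Recall that a submanifold $\Sigma \subseteq M$ is totally geodesic iff for every pair of vector fields $X,Y$ tangent to $\Sigma$ the ambient covariant derivative $\nabla_X Y$ is again tangent to $\Sigma$. Since $\partial_t$ and $\partial_\chi$ frame each $\Sigma_{(\theta_0,\varphi_0)}$, it suffices to show that $\nabla_{\partial_a}\partial_b \in \mathrm{span}\{\partial_t,\partial_\chi\}$ for all $a,b \in \{t,\chi\}$, or equivalently that $\Gamma^A_{ab} = 0$ for $A \in \{\theta,\varphi\}$ and $a,b \in \{t,\chi\}$.

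The key geometric input is the following well-known consequence of working in polar normal coordinates: the radial curves $\chi \mapsto (\chi,\theta_0,\varphi_0)$ are geodesics of $(\overline{M},\overline{g})$, which forces $\overline{\Gamma}^i_{\chi\chi}(\chi,\theta,\varphi) = 0$ for all $i \in \{\chi,\theta,\varphi\}$ throughout $\overline{U}$. I would invoke this and read off the ambient Christoffel symbols from \eqref{EqChristoffel}. The computation splits into three cases: (i) from $\Gamma^i_{tt}=0$ and $\Gamma^t_{tt}=0$ we have $\nabla_{\partial_t}\partial_t = 0$; (ii) from $\Gamma^t_{ti}=0$ and $\Gamma^i_{tj} = \tfrac{\dot a}{a}\delta^i_{\;j}$, we get $\nabla_{\partial_t}\partial_\chi = \tfrac{\dot a}{a}\partial_\chi$; (iii) from $\Gamma^t_{ij} = \dot a\, a\, \overline{g}_{ij}$ together with $\overline{g}_{\chi\chi}=1$ and $\overline{g}_{\chi A}=0$ (both holding in polar normal coordinates), and the vanishing $\overline{\Gamma}^i_{\chi\chi}=0$ established above, we obtain $\nabla_{\partial_\chi}\partial_\chi = \dot a\, a\, \partial_t$. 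In each case the result lies in $\mathrm{span}\{\partial_t,\partial_\chi\}$, which is exactly the content of the proposition.

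There is no genuine obstacle: the whole argument is a two-line substitution once one remembers that polar normal coordinates are precisely characterised by making the radial lines (unit-speed) geodesics and by making $\overline{g}_{\chi\chi}=1$, $\overline{g}_{\chi A}=0$. If a reader prefers the dual formulation, I would alternatively note that any geodesic of $(M,g)$ with initial data in $T\Sigma_{(\theta_0,\varphi_0)}$ satisfies an ODE system in which the $\theta$ and $\varphi$ components have trivial initial data and vanishing source (again by $\overline{\Gamma}^A_{\chi\chi}=0$ and $\Gamma^A_{tt}=\Gamma^A_{t\chi}=0$), so that the angular components remain zero for all time by uniqueness — giving the same conclusion.
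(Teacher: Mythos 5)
Your proof is correct. Where the paper's displayed argument is dynamical, yours is directly extrinsic: the paper observes from \eqref{EqChristoffel} that the projection to $\overline{M}$ of any $M$-geodesic is an (unparametrised) $\overline{M}$-geodesic, combines this with the fact that the radial coordinate lines of polar normal coordinates are $\overline{M}$-geodesics (and that $\Gamma^t_{tt}=0$), and concludes that an $M$-geodesic with initial velocity in $\mathrm{span}\{\partial_t,\partial_\chi\}$ never leaves the $\{t,\chi\}$-plane; you instead verify that the second fundamental form vanishes by computing $\nabla_{\partial_t}\partial_t=0$, $\nabla_{\partial_t}\partial_\chi=\tfrac{\dot a}{a}\partial_\chi$, $\nabla_{\partial_\chi}\partial_\chi=\dot a\,a\,\partial_t$, using $\overline{g}_{\chi\chi}=1$, $\overline{g}_{\chi A}=0$ and $\overline{\Gamma}^i_{\chi\chi}=0$ (which the paper itself flags as a possible alternative in a footnote). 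Both routes rest on the same two facts about polar normal coordinates and on \eqref{EqChristoffel}; your computation has the small advantage of delivering immediately the property the paper actually uses afterwards, namely that parallel transport of vectors tangent to the plane can be computed intrinsically (vanishing of $\mathrm{II}$ says exactly that $\nabla_XY$ stays tangential), whereas the paper's route directly produces the ``geodesics with tangential initial data stay in the plane'' statement that is invoked when reducing to the two-dimensional metric $-dt^2+a(t)^2d\chi^2$; since the plane is a non-degenerate (Lorentzian) submanifold the two characterisations are equivalent. Your closing ODE-uniqueness remark is essentially the paper's argument restated, and your checking only the coordinate fields $\partial_t,\partial_\chi$ is justified by the tensoriality of the second fundamental form together with $[\partial_t,\partial_\chi]=0$.
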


Results regarding the future $C^0$-inextendibility of a subclass of cosmological warped product spacetimes have also been established by Galloway and Ling in \cite{GalLin16}. Here we prove

\begin{theorem}\label{ThmFLRWFuture}
Let $(M,g)$ be a cosmological warped product spacetime in the above class with  $b = \infty$ and let the scale factor $ a : (0,\infty) \to (0,\infty)$ satisfy $\int_1^\infty \frac{a(t)}{ \sqrt{a(t)^2 + 1}}\, dt = \infty$.
Then $(M,g)$ is future $C^0$-inextendible.
\end{theorem}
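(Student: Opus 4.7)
The plan is to reduce the statement to future timelike geodesic completeness of $(M,g)$ and then to invoke the result of \cite{GalLinSbi17} that a globally hyperbolic, future timelike geodesically complete spacetime is future $C^0$-inextendible. Global hyperbolicity is automatic from the setup: by the warped product structure and the completeness of $(\overline{M}, \overline{g})$, each slice $\{t\} \times \overline{M}$ is a Cauchy hypersurface. Thus the entire task reduces to showing that every future directed timelike geodesic in $(M,g)$ is future complete.

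Let $\gamma(\tau) = (t(\tau), \overline{\gamma}(\tau))$ be such a geodesic, parametrised by proper time. The spatial components of the geodesic equation, read off from \eqref{EqChristoffel}, give $\overline{\nabla}_{\dot{\overline{\gamma}}}\dot{\overline{\gamma}} = -2\,(\dot{a}(t)/a(t))\,\dot{t}\,\dot{\overline{\gamma}}$; pairing this with $\dot{\overline{\gamma}}$ and integrating in $\tau$ yields the conservation law $a(t(\tau))^4\,\overline{g}(\dot{\overline{\gamma}},\dot{\overline{\gamma}}) \equiv C$ for some constant $C \geq 0$. Combining this with the unit-timelike normalisation $-\dot{t}^2 + a(t)^2\,\overline{g}(\dot{\overline{\gamma}},\dot{\overline{\gamma}}) = -1$ and with the future-orientation of $\gamma$ produces $\dot{t} = \sqrt{a(t)^2 + C}/a(t)$, which rearranges to the elementary identity
\[
\tau - \tau_0 \;=\; \int_{t_0}^{t(\tau)} \frac{a(t')}{\sqrt{a(t')^2 + C}}\,dt'\;.
\]

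The integral hypothesis on $a$ is then precisely what is needed to conclude $\tau \to \infty$ as $t \to \infty$: for $C=0$ the integrand equals $1$, while for $C > 0$ the elementary pointwise bound $\frac{a}{\sqrt{a^2 + C}} \geq \min\{1, C^{-1/2}\}\cdot\frac{a}{\sqrt{a^2+1}}$ reduces it to the hypothesis. A short continuation argument rules out the one remaining failure mode, namely that $\gamma$ escapes $M$ at a finite proper time $\tau_{\max}$ with $t$ staying bounded: on a bounded $t$-interval $a$ is bounded below away from $0$, so $|\dot{\overline{\gamma}}|_{\overline{g}}^2 = C/a^4$ is bounded and hence $\overline{\gamma}$ has finite $\overline{g}$-length on $[\tau_0, \tau_{\max})$; by completeness of $(\overline{M}, \overline{g})$ the curve admits a limit point in $\overline{M}$, and combined with $t \to t_{\max} < \infty$ this contradicts the maximality of $\gamma$. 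Therefore $\gamma$ is future complete, and the theorem follows from \cite{GalLinSbi17}. The only computation of real substance is deriving the conservation law; the rest is an integral comparison plus citation, so I do not anticipate any genuine obstacle.
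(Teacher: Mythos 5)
Your proposal is correct and takes essentially the same route as the paper's proof: reduce the statement to future timelike geodesic completeness and cite \cite{GalLinSbi17}, derive $\frac{d\tau}{dt} = \frac{a(t)}{\sqrt{a(t)^2 + C}}$ from the conserved quantity and the unit normalisation, compare the integrand with $\frac{a}{\sqrt{a^2+1}}$, and rule out the bounded-$t$ case via completeness of $(\overline{M},\overline{g})$ and a limit point in $M$. The only (harmless, arguably cleaner) deviation is that you obtain the conservation law $a^4\,\overline{g}(\dot{\overline{\gamma}},\dot{\overline{\gamma}}) \equiv C$ directly from the projected geodesic equation and metric compatibility of $\overline{\nabla}$, whereas the paper produces the same constant $\kappa^2$ by working in the totally geodesic $\{t,\chi\}$-plane of polar normal coordinates, using the Killing field $\partial_\chi$ there and patching such charts along $\overline{\gamma}$.
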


Note that the condition in the above theorem is in particular satisfied if $a(t)$ is bounded away from $0$ for large $t$.

\begin{proof}
We only need to establish the future timelike geodesic completeness of $(M,g)$. The theorem then follows from Theorem 3.6 in \cite{GalLinSbi17}.

So let $\gamma : (0,d) \to M$ be a future directed future inextendible timelike geodesic that is parametrised by proper time, $\tau \mapsto \big(t(\tau), \overline{\gamma}(\tau)\big)$. We need to show that $d = \infty$. Let $\tau_0 \in (0,d)$. We choose local polar normal coordinates $(\chi, \theta, \varphi) \in (0,\chi_0) \times \mathbb{S}^2$ on some open neighbourhood $\overline{U} \subseteq \overline{M}$ of $\overline{\gamma}(\tau_0)$ such that $\dot{\overline{\gamma}}(\tau_0)$ is proportional to $\partial_\chi$ (or it vanishes). Then $\dot{\gamma}(\tau_0) \in \mathrm{span} \{\partial_t, \partial_\chi\}$ and by Proposition \ref{PropTotGeod} $\gamma$ stays in the $\{t,\chi\}$-plane in $(0,\infty) \times \overline{U}$. Note that by \eqref{EqMetricWP} and \eqref{EqMetricPolarNC} $\partial_\chi$ is a Killing vector field \emph{in the $\{t,\chi\}$-plane} and thus we have $g|_{\{t,\chi\}\mathrm{-plane}}(\dot{\gamma}, \partial_\chi) = a(t)^2 \dot{\gamma}^\chi = \kappa \in \R$. Hence, we obtain
\begin{equation}\label{EqSpeedSpatial}
a^2(t) \overline{g}(\dot{\overline{\gamma}}, \dot{\overline{\gamma}}) = a(t)^2 \dot{\gamma}^\chi \dot{\gamma}^\chi = \frac{\kappa^2}{a(t)^2}
\end{equation}
for the connected neighbourhood of $\tau_0 \in (0,d)$ such that $\overline{\gamma}$ remains in $\overline{U}$. We can now cover the image of $\overline{\gamma}$ in $\overline{M}$ with local polar normal coordinates as above and repeat the argument. In the overlap covered by two such charts the constants $\kappa$ in \eqref{EqSpeedSpatial} have to agree, since \eqref{EqSpeedSpatial} is independent of the choice of coordinates. We thus obtain \eqref{EqSpeedSpatial} for all $\tau \in (0,d)$. Moreover, together with  $$-1 = g(\dot{\gamma}, \dot{\gamma}) = -(\dot{\gamma}^t)^2 + a(t)^2\overline{g}(\dot{\overline{\gamma}}, \dot{\overline{\gamma}})  = -(\dot{\gamma}^t)^2 + \frac{\kappa^2}{a(t)^2} $$ we obtain $\Big(\frac{dt}{d\tau}\Big)^2 = 1 + \frac{\kappa^2}{a(t)^2} = \frac{a(t)^2 + \kappa^2}{a(t)^2}$. Note that for $\tau \to d$ we must have $t(\tau) \to \infty$, since if $t(\tau) \to t_0 < \infty$, then it follows from \eqref{EqSpeedSpatial} together with the completeness of $(\overline{M}, \overline{g})$ that $\overline{\gamma}(\tau)$ has a limit point in $\overline{M}$ for $\tau \to d$ -- and thus $\gamma(\tau)$ has a limit point in $M$ for $\tau \to d$, contradicting the future inextendibility of $\gamma$. 

Since $\frac{dt}{d\tau} >0$ we obtain $\frac{d \tau}{dt}  = \frac{a(t)}{\sqrt{a(t)^2 + \kappa^2}}$. 
If $|\kappa| < 1$ then $\frac{a(t)}{\sqrt{a(t)^2 + \kappa^2}} \geq \frac{a(t)}{\sqrt{a(t)^2 + 1}}$ and if $|\kappa| \geq 1$, then $\frac{1}{|\kappa|} \frac{a(t)}{\sqrt{a(t)^2 + 1}} = \frac{a(t)}{\sqrt{\kappa^2 a(t)^2 + \kappa^2}} \leq \frac{a(t)}{\sqrt{a(t)^2 + \kappa^2}}$. It thus follows by assumption that $\int_1^\infty \frac{a(t)}{\sqrt{a(t)^2 + \kappa^2}} \, dt = \infty$ and thus $d = \infty$.
\end{proof}

We now discuss the inextendibility to the past and show that if  $\lim_{t \to 0} a(t) = 0$ and $\frac{1}{a(t)}$ is integrable near $0$, then a holonomy singularity is present. 

\begin{theorem}\label{ThmFLRWPast}
Let $(M,g)$ be a cosmological warped product spacetime in the above class satisfying  $\lim_{t \to 0} a(t) = 0$ and $\int_0^{\nicefrac{b}{2}} \frac{1}{a(t')} \, dt' < \infty$. Then $(M,g)$ is past $C^{0,1}_{\loc}$-inextendible.
\end{theorem}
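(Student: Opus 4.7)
The plan is to argue by contradiction: assuming a past $C^{0,1}_{\loc}$-extension exists, I will construct a family of loops at a fixed base point whose associated holonomy transformations form an unbounded subset of the Lorentz group, and simultaneously establish a uniform upper bound on these holonomies via Lemma \ref{LemBoundParallelTransport}, yielding a contradiction.

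Suppose $\iota : M \hookrightarrow \tilde{M}$ is a past $C^{0,1}_{\loc}$-extension and reverse the time orientation on $(M,g)$, so that it becomes a future extension. Propositions \ref{PropBoundaryChart} and \ref{PropGeodesicBoundaryChart} furnish a chart $\tilde{\varphi} : \tilde{U} \to \Reps$ around a future boundary point $\tilde{q}$ in which the components $\tilde{g}_{\mu\nu}$ and (after possibly shrinking $\tilde{U}$) their coordinate derivatives are uniformly bounded, together with a future-directed future-inextendible timelike geodesic $\gamma$ in $M$ whose image accumulates at $\tilde{q}$. Repeating the Killing-constant calculation from the proof of Theorem \ref{ThmFLRWFuture} in polar normal coordinates around any cluster point of the projection $\overline{\gamma}$ to $\overline{M}$, one bounds the spatial length of $\overline{\gamma}$ by $\int_0^{t_0}\tfrac{dt}{a(t)} < \infty$; by completeness of $(\overline{M}, \overline{g})$ the projection $\overline{\gamma}$ thus converges to some $\overline{p} \in \overline{M}$ as $\gamma$ exits $M$, and along this exit one must have $t \to 0$, for otherwise $\gamma$ itself would admit a limit point in $M$. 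Choose polar normal coordinates $(\chi,\theta,\varphi)$ centered at $\overline{p}$ so that $\dot{\overline{\gamma}}$ eventually aligns with $\partial_\chi$; Proposition \ref{PropTotGeod} then confines the terminal segment of $\gamma$ to the totally geodesic $\{t,\chi\}$-plane, and parallel transport in this plane along curves in this plane coincides with its intrinsic 2D parallel transport.

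It therefore suffices to produce unbounded holonomy in the two-dimensional spacetime with metric $-dt^2 + a(t)^2 d\chi^2$. Pass to the tortoise coordinate $\hat{t}(t) = \int_0^t \tfrac{dt'}{a(t')}$, finite by assumption, in which the 2D metric is conformally flat, $a^2(-d\hat{t}^2 + d\chi^2)$, with singularity at $\hat{t}=0$. Fix $\mu$ with $\iota(\gamma(-\mu)) \in \tilde{U}$ and, for each parameter value $-\mu_0$ with $\gamma(-\mu_0)$ lying between $\gamma(-\mu)$ and the exit, form the loop of Figure \ref{Fig1}: an outgoing null segment $\ell$ starting at $\gamma(-\mu)$, an ingoing null segment $\underline{\ell}$ meeting $\gamma$ at $\gamma(-\mu_0)$, then the $\gamma$-segment back to $\gamma(-\mu)$. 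A direct Christoffel-symbol computation in the conformally flat 2D metric shows that parallel transport of a null vector along each null segment rescales it by a factor depending only on $a$ at its endpoints; concatenating the three segments shows the holonomy at $\gamma(-\mu)$ is a boost whose rapidity diverges as $\hat{t}(\gamma(-\mu_0)) \to 0$.

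It remains to show that each such loop is \emph{local} in the sense required by Lemma \ref{LemBoundParallelTransport}: after embedding via $\iota$ it sits inside $\tilde{U}$ with uniformly bounded coordinate velocity and coordinate domain. Both the $\gamma$-segment and the broken null arc $\ell * \underline{\ell}$ are future-directed causal curves (in the reversed orientation) from $\gamma(-\mu)$ to $\gamma(-\mu_0)$ within the simply connected $(\hat{t},\chi)$-plane, and in these conformally flat coordinates one can write down an explicit causal homotopy with fixed endpoints interpolating between them (causal character being invariant under conformal change). For $\mu$ and $\mu_0$ chosen with both endpoints close to $\tilde{q}$, the achronality output of Proposition \ref{PropBoundaryChart} forces $J^+(\iota(\gamma(-\mu)), \tilde{U}) \cap J^-(\iota(\gamma(-\mu_0)), \tilde{U})$ to be compactly contained in $\tilde{U}$, so Lemma \ref{LemCausalHomotopy} makes every loop embed into $\tilde{U}$. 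The uniform bound \eqref{EqUniformBoundCausalCurve} on causal-curve velocity in $\Reps$, combined with the uniformly bounded coordinate domain, then feeds into Lemma \ref{LemBoundParallelTransport} to yield a uniform bound on parallel transport around every loop, contradicting the divergence above. The main obstacle lies in this final step: one must exhibit the causal homotopy, verify causality is preserved throughout it, and arrange $\mu, \mu_0$ and the null-rectangle tip to sit in a fixed compact subset of $\tilde{U}$ uniformly as $\mu_0$ varies; once this locality is in hand, the two contradictory uniform bounds close the argument.
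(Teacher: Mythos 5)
Your proposal is correct and follows essentially the same route as the paper's own proof: reduction via particle horizons and the totally geodesic $\{t,\chi\}$-plane to the conformally flat two-dimensional problem, a family of broken-null-geodesic loops based at $\gamma(-\mu)$ whose holonomy is an unboundedly boosting family, locality of the loops via a causal homotopy and Lemma \ref{LemCausalHomotopy}, and the contradiction with the uniform bound from Lemma \ref{LemBoundParallelTransport}. The only cosmetic deviations are that you re-derive the particle-horizon convergence directly instead of invoking Lemma \ref{LemHorizon}, and that the compact containment of $J^+\cap J^-$ in $\tilde U$ comes from the near-Minkowskian cone inclusions \eqref{EqInclusionRelationFuture} after shrinking $\mu$ (with the corner of the null rectangle automatically staying away from $\{\hat t=0\}$ because it lies on the fixed ray $v=v(\gamma(-\mu))>\chi_0/2$), rather than from the achronality statement of Proposition \ref{PropBoundaryChart}.
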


We need the following well-known fact which in particular expresses that the class of spacetimes considered in Theorem \ref{ThmFLRWPast} exhibit past particle horizons.

\begin{lemma}\label{LemHorizon}
Let $(M,g)$ be a cosmological warped product spacetime in the above class satisfying $\int_0^{\nicefrac{b}{2}} \frac{1}{a(t')} \, dt' < \infty$ and let $\sigma : [-1,0) \to M$ be a past directed past inextendible causal curve in $M$. Then $\lim_{\tau \to 0} t\big(\sigma(\tau)\big) = 0$ and $\lim_{\tau \to 0} \overline{\sigma}(\tau)$ exists in $\overline{M}$.
\end{lemma}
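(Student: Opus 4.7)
The strategy is to reduce to a one-dimensional estimate by exploiting the warped-product form of the metric, and then to invoke the integrability hypothesis $\int_0^{b/2} \tfrac{1}{a(t')}\,dt' < \infty$ as the finite-horizon condition.

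First I would observe that along any past directed causal curve $\sigma(\tau) = \bigl(t(\tau), \overline{\sigma}(\tau)\bigr)$, the coordinate $t$ is (weakly) monotonically decreasing, since $dt$ is past directed timelike with respect to $g$. Hence $t\bigl(\sigma(\tau)\bigr)$ converges to some limit $t_* \in [0,b)$ as $\tau \to 0$. It remains to show $t_* = 0$ and that the spatial projection has a limit.

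Next, I would translate the causality condition $g(\dot\sigma,\dot\sigma) \leq 0$ into the pointwise bound
\begin{equation*}
a(t)\,\sqrt{\overline{g}(\dot{\overline{\sigma}},\dot{\overline{\sigma}})} \leq |\dot t|\;,
\end{equation*}
valid almost everywhere (and everywhere the tangent exists). Integrating and changing variables from $\tau$ to $t$, one obtains for $\tau_1 < \tau_2 < 0$ the estimate
\begin{equation*}
L_{\overline{g}}\bigl(\overline{\sigma}|_{[\tau_1,\tau_2]}\bigr) \leq \int_{\tau_1}^{\tau_2} \frac{-\dot t}{a(t)}\,d\tau = \int_{t(\tau_2)}^{t(\tau_1)} \frac{1}{a(t)}\,dt \;.
\end{equation*}

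Now I would argue by contradiction to force $t_* = 0$. If $t_* > 0$, then since $a$ is smooth and strictly positive, $\int_{t_*}^{t(\tau_1)} \tfrac{1}{a(t)}\,dt$ is finite. Letting $\tau_2 \to 0$ shows that $\overline{\sigma}|_{[\tau_1,0)}$ has finite $\overline{g}$-length and is therefore Cauchy with respect to the Riemannian distance. Completeness of $(\overline{M},\overline{g})$ produces a limit point $\overline{p} \in \overline{M}$, and together with $t\bigl(\sigma(\tau)\bigr) \to t_* > 0$ this yields a limit point of $\sigma$ in $M$, contradicting past inextendibility. Hence $t_* = 0$.

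Finally, with $t_* = 0$ established, I would use the hypothesis $\int_0^{b/2} \tfrac{1}{a(t')}\,dt' < \infty$: for $\tau_1$ close enough to $0$ we have $t(\tau_1) < b/2$, and the length estimate becomes
\begin{equation*}
L_{\overline{g}}\bigl(\overline{\sigma}|_{[\tau_1,\tau_2]}\bigr) \leq \int_0^{b/2} \frac{1}{a(t)}\,dt < \infty \;.
\end{equation*}
Again $\overline{\sigma}$ is Cauchy as $\tau \to 0$, and completeness of $(\overline{M},\overline{g})$ delivers the desired limit in $\overline{M}$. I expect no serious obstacle here; the only minor point to watch is that a causal curve is only piecewise $C^1$, so the length integral should be handled piecewise, but the monotonicity of $t$ and the change-of-variables step go through without modification.
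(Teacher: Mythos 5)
Your proposal is correct and follows essentially the same route as the paper's proof: causality gives the pointwise bound $\sqrt{\overline{g}(\dot{\overline{\sigma}},\dot{\overline{\sigma}})}\leq \frac{1}{a(t)}$ (after parametrising by $t$, equivalently your change of variables), and then completeness of $(\overline{M},\overline{g})$ together with $\int_0^{\nicefrac{b}{2}}\frac{1}{a}\,dt'<\infty$ yields the spatial limit. The paper merely states the $t\to 0$ step as a consequence of past inextendibility and completeness, which is exactly the contradiction argument you spell out.
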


\begin{proof}
The past inextendibility  of $\sigma$ together with the completeness of $(\overline{M}, \overline{g})$ imply that $\lim_{\tau \to 0} t\big(\sigma(\tau)\big) = 0$. To see the second claim, we parametrise $\sigma$ by the $t$-coordinate, i.e., $t \mapsto \sigma(t) = (t, \overline{\sigma}(t))$. We then have $0 \geq g(\dot{\sigma}(t), \dot{\sigma}(t)) = -1 + a(t)^2 \overline{g}(\dot{\overline{\sigma}}(t), \dot{\overline{\sigma}}(t))$, which yields $$\frac{1}{a(t)} \geq \sqrt{\overline{g}(\dot{\overline{\sigma}}(t), \dot{\overline{\sigma}}(t))} \;.$$
The assumption $\int_0^{\nicefrac{b}{2}} \frac{1}{a(t')} \, dt' < \infty$ together with the completeness of $(\overline{M}, \overline{g})$ implies that $\lim_{t \to 0} \overline{\sigma}(t)$ exists in $\overline{M}$. 
\end{proof}

\begin{proof}[Proof of Theorem \ref{ThmFLRWPast}:]
In order to use the results from Section \ref{SecFundResults} without changing from future to past, let us change the time orientation on $(M,g)$, i.e., \underline{we redefine $-\partial_t$ to be future directed}, and we thus have to show that $(M,g)$ is future $C^{0,1}_{\loc}$-inextendible.  The proof is by contradiction and proceeds in four steps.
\newline

\underline{\textbf{Step 1:}} Assume $\iota : M \hookrightarrow \tilde{M}$ is a $C^{0,1}_{\loc}$-extension with $\partial^+ \iota(M) \neq \emptyset$. Let $\tilde{p} \in \partial^+ \iota(M)$. Then by Proposition \ref{PropBoundaryChart} there exists a chart $\tilde{\varphi} : \tilde{U} \to(-\varepsilon_0, \varepsilon_0) \times  (-\varepsilon_1, \varepsilon_1)^{d} =: R_{\varepsilon_0, \varepsilon_1}$ as in Proposition \ref{PropBoundaryChart} with $\delta >0$ so small that all vectors in $C^+_{\nicefrac{5}{6}}$ are future directed timelike, all vectors in $C^-_{\nicefrac{5}{6}}$ are past directed timelike, and all vectors in $C^c_{\nicefrac{5}{8}}$ are spacelike. After making the chart slightly smaller if necessary, we can also assume that in this chart the Lorentzian metric $\tilde{g}$ on $\tilde{U}$ satisfies a global Lipschitz condition
\begin{equation*}
|\tilde{g}_{\mu \nu}(x) - \tilde{g}_{\mu \nu}(y)| \leq \Lambda ||x - y||_{\R^{d+1}} 
\end{equation*}
for all $x,y \in \Reps$, where $\Lambda >0$ is a constant. In the region $\{(x_0,\underline{x}) \in \Reps \; | \: x_0 < f(\underline{x})\}$ below the graph of $f$ the metric components $\tilde{g}_{\mu \nu}$ are smooth and satisfy the bounds $|\partial_\kappa \tilde{g}_{\mu \nu}| \leq \Lambda$. By Proposition \ref{PropGeodesicBoundaryChart} there exists a future directed timelike  geodesic $\gamma : [-\mu,0) \to M$ that is future inextendible in $M$ and such that $\tilde{\varphi} \circ \iota \circ \gamma :[-\mu,0) \to \Reps$ maps below the graph of $f$ and has a future endpoint on the graph of $f$. We also define $\tilde{\gamma} := \iota \circ \gamma$.
\newline

\underline{\textbf{Step 2:}} \underline{\textbf{Setting up a coordinate system on $M$ adapted to $\gamma$:}} 

By Lemma \ref{LemHorizon} we know that $\lim_{s \to 0} t(\gamma(s)) = 0$  and that $\lim_{s \to 0} \overline{\gamma}(s)$ exists in $\overline{M}$. In particular it is easy to see from \eqref{EqMetricWP} that $\gamma : [-\mu, 0 ) \to M$ must be future incomplete. We can thus assume without loss of generality that it is affinely parametrised.
As in the proof of Theorem \ref{ThmFLRWFuture}, using Proposition \ref{PropTotGeod}, we now choose polar normal coordinates $(\chi, \theta, \varphi) \in (0,\chi_0) \times \mathbb{S}^2$ on some open neighbourhood $\overline{U} \subseteq \overline{M}$ of $\lim_{s \to 0} \overline{\gamma}(s)$ such that, after making $\mu>0$ smaller if necessary, $\gamma|_{[-\mu,0)}$ lies completely in the $\{t, \chi\}$-plane of the coordinate chart on $(0,b) \times \overline{U} \subseteq M$   and such that, without loss of generality, $\lim_{s \to 0} \chi(\overline{\gamma}(s)) = \frac{\chi_0}{2}$. 

Let us denote the $\{t,\chi\}$-plane by $M_{t, \chi}$. The metric restricted to $M_{t, \chi}$ is $g|_{\{t,\chi\}} = -dt^2 + a(t)^2 d\chi^2$. The affinely parametrised timelike geodesic takes the coordinate form $s \mapsto \big(\gamma^t(s), \gamma^\chi(s)\big)$. Using that $\partial_\chi$ is a Killing vector field in $M_{t, \chi}$ we obtain that $\gamma$ satisfies the equations
\begin{equation*}
\begin{split}
\R \ni \kappa &= g(\dot{\gamma}, \partial_\chi) = a(t)^2 \dot{\gamma}^\chi \\
-1 &= - (\dot{\gamma}^t)^2 + a(t)^2 \big(\dot{\gamma}^\chi\big)^2 = - (\dot{\gamma}^t)^2 + \frac{\kappa}{a(t)^2} \;.
\end{split}
\end{equation*}
We thus obtain $\dot{\gamma} = - \frac{\sqrt{a(t)^2 + \kappa^2}}{a(t)} \partial_t + \frac{\kappa}{a(t)^2} \partial_\chi$. Without loss of generality we can assume that the polar normal coordinates $(\chi, \theta, \varphi)$ were chosen such that $\kappa \geq 0$\footnote{I.e., such that the projected geodesic $\overline{\gamma}$ in $\overline{M}$ is an outgoing radial geodesic in the normal coordinates.}, which we will do from now on.

\underline{\textbf{Setting up a frame field adapted to $\gamma$:}}

We define an orthonormal frame field on $M_{t, \chi}$, depending on $\kappa \geq 0$, by \begin{align*}
f_{0,\kappa} &:= -\frac{\sqrt{a(t)^2 + \kappa^2}}{a(t)} \partial_t + \frac{\kappa}{a(t)^2} \partial_\chi \\
f_{1,\kappa} &:= -\frac{\kappa}{a(t)} \partial_t + \frac{\sqrt{a(t)^2 + \kappa^2}}{a(t)^2} \partial_\chi \;.
\end{align*}
We have $g|_{\{t, \chi\}}(f_{0, \kappa}, f_{0, \kappa}) = -1$, $g|_{\{t, \chi\}}(f_{1, \kappa}, f_{1, \kappa}) = 1$, and $g|_{\{t, \chi\}}(f_{0, \kappa}, f_{1, \kappa}) = 0$. Since along $\gamma$ we have $f_{0, \kappa} = \dot{\gamma}$, it follows directly that $f_{0, \kappa}$ and $f_{1, \kappa}$ are parallely propagated along $\gamma$.

We also define a null frame, depending on $\kappa \geq 0$, by
\begin{equation}
\label{EqDefNullFrame}
\begin{split}
\ell_\kappa &:= f_{0,\kappa} + f_{1, \kappa} =-\frac{1}{a(t)} \big[ \sqrt{a(t)^2 + \kappa^2} + \kappa\big] \partial_t + \frac{1}{a(t)^2} \big[\sqrt{a(t)^2 + \kappa^2} + \kappa\big] \partial_\chi \\
\underline{\ell}_\kappa &:= f_{0,\kappa} - f_{1, \kappa} =-\frac{1}{a(t)} \big[ \sqrt{a(t)^2 + \kappa^2} - \kappa\big] \partial_t - \frac{1}{a(t)^2} \big[\sqrt{a(t)^2 + \kappa^2} - \kappa\big] \partial_\chi \;.
\end{split}
\end{equation}
Of course also $\ell_\kappa$ and $\underline{\ell}_\kappa$ are parallely propagated along $\gamma$.

\underline{\textbf{Setting up a family of loops along which holonomy will be computed:}}

We introduce a new $\hat{t}$ coordinate by $\hat{t}(t) := \int_0^t \frac{1}{a(t')} \, dt'$ and let $\hat{b} := \int_0^b \frac{1}{a(t')} \, dt'$. Thus, $dt = a(t) d\hat{t}$ and the metric on $M_{t, \chi}$ reads   $$ g|_{\{t,\chi\}} = a(t)^2 (-d\hat{t}^2 + d\chi^2) $$ in the new coordinates $(\hat{t}, \chi) \in (0, \hat{b})\times (0,\chi_0)$.
We define the null coordinates $v = \hat{t} + \chi$ and $u = \hat{t} - \chi$. Since $\gamma$ is timelike it follows that $ [-\mu, 0 ) \ni s \mapsto v(\gamma(s))$ and 
$[-\mu, 0) \ni s \mapsto u(\gamma(t))$ are strictly decreasing functions and thus \begin{equation}
\label{EqBoundV}v(\gamma(-\mu)) > \lim_{s \to 0} v(\gamma(s)) = \frac{\chi_0}{2}
\end{equation}
and $u(\gamma(s)) < \lim_{s \to 0} u(\gamma(s)) = -\frac{\chi_0}{2}$. If necessary, we now make $\mu >0$ even smaller such that the function $\chi$ is bounded away from $0$ and $\chi_0$ on the set $$\Big( \{-\frac{\chi_0}{2} \leq u \leq u(\gamma(-\mu))\}  \cap \{\frac{\chi_0}{2} \leq v \leq v(\gamma(-\mu))\} \Big) \setminus \{u = -\frac{\chi_0}{2}, v = \frac{\chi_0}{2}\}  \subset (0, \hat{b}) \times (0,\chi_0) \;, $$ see also Figure \ref{FigChartTChi}. Moreover, after making $\mu>0$ even smaller we can also ensure that\footnote{Recall that the notation $\subset \subset$ denotes compact inclusion.} $$\Big(\tilde{\varphi}\big(\tilde{\gamma}(-\mu)\big) + C^+_{\nicefrac{5}{8}}\Big) \cap \Big(\lim_{s \to 0}\tilde{\varphi}\big(\tilde{\gamma}(s)\big) + C^-_{\nicefrac{5}{8}}\Big) \subset\subset \Reps \;,$$ see also Figure \ref{FigBoundaryChart}.
\begin{figure}[h]
\centering
\begin{minipage}{.5\textwidth}
  \centering
 \def\svgwidth{9cm}
   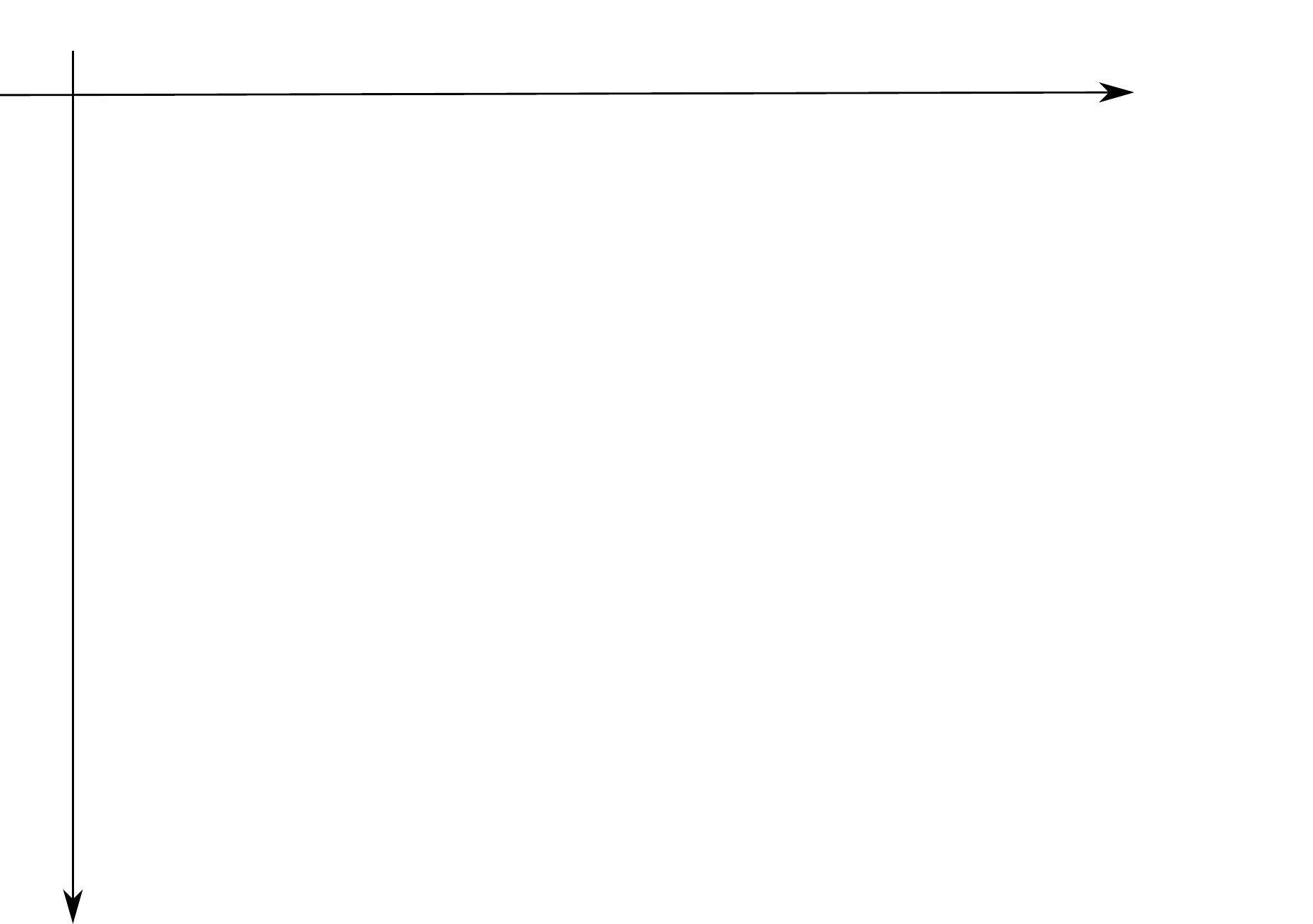
      \caption{The $\{t,\chi\}$-plane in $M$.} \label{FigChartTChi}
\end{minipage}%
\begin{minipage}{.5\textwidth}
  \centering
  \def\svgwidth{7cm}
  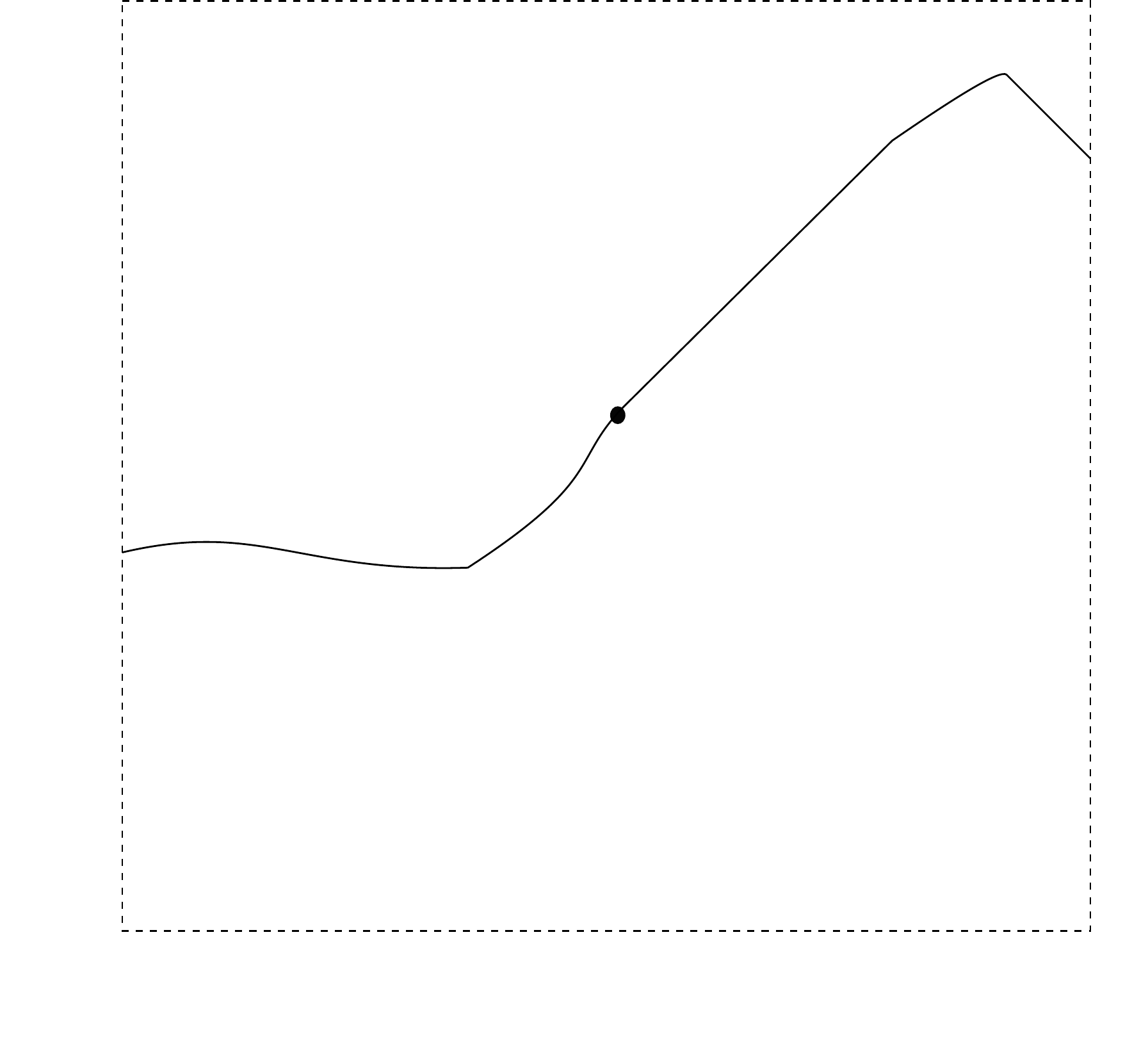
     \caption{The boundary chart in the extension $\tilde{M}$.} \label{FigBoundaryChart}
\end{minipage}
\end{figure}
Let now $-\mu < -\mu_0 < 0$. We construct a causal homotopy $\Gamma_{\mathrm{right}, \mu_0} : [-\mu, -\mu_0] \times [-\mu, -\mu_0] \to M_{t,\chi}$ with fixed endpoints of $\gamma|_{[-\mu, -\mu_0]}$ by
\begin{equation*}
\Gamma_{\mathrm{right}, \mu_0}  (s;r) = \begin{cases} \big( u(\gamma(2s + \mu)), v(\gamma(-\mu))\big) \qquad &\textnormal{ for } - \mu \leq s \leq \frac{-\mu + r}{2} \\
\big(u(\gamma(r)), v(\gamma(2s - r))\big) \qquad &\textnormal{ for } \frac{-\mu + r}{2} \leq s \leq r \\
\gamma(s) &\textnormal{ for } s \geq r \;,
\end{cases}
\end{equation*}
where we have used $(u,v)$-coordinates in the definition. Then for each $r \in [-\mu, -\mu_0]$ the curves $ s \mapsto \Gamma_{\mathrm{right}, \mu_0} (s ; r)$ are future directed causal curves with $\Gamma_{\mathrm{right}, \mu_0} (-\mu;r) = \gamma(-\mu)$ and $\Gamma_{\mathrm{right}, \mu_0} (-\mu_0; r) = \gamma(-\mu_0)$, and we have $\Gamma_{\mathrm{right}, \mu_0} (s; -\mu) = \gamma|_{[-\mu, -\mu_0]}(s)$. See also Figure \ref{FigChartTChi}.

We also define the broken null geodesics $\sigma_{\mathrm{right}, \mu_0} : [-\mu, -\mu_0] \to M_{t,\chi}$ by $\sigma_{\mathrm{right}, \mu_0}(s) := \Gamma_{\mathrm{right}, \mu_0}(s; -\mu_0)$. The family of loops, based at $\gamma(-\mu)$ and depending on $\mu_0$, is then given by $\sigma_{\mathrm{right}, \mu_0}$ followed by the reversal of $\gamma|_{[-\mu, -\mu_0]}$.
\newline

\underline{\textbf{Step 3: Uniform boundedness of holonomy by virtue of the $C^{0,1}_{\loc}$-extension.}}

Recalling the inclusion relations \eqref{EqInclusionRelationFuture} we note that the assumptions of Lemma \ref{LemCausalHomotopy} with $\Gamma = \Gamma_{\mathrm{right}, \mu_0}$ are met and thus we conclude that $\tilde{\Gamma}_{\mathrm{right}, \mu_0} := \iota \circ \Gamma_{\mathrm{right}, \mu_0}$ maps into $\tilde{U}$ -- and in fact into the region below the graph of $f$. In particular $\tilde{\sigma}_{\mathrm{right}, \mu_0} := \iota \circ \sigma_{\mathrm{right}, \mu_0}$ maps into the region of $\tilde{U}$ that is below the graph of $f$ for all $-\mu < -\mu_0 < 0$. Using that the $x_0$-coordinate on $\tilde{U}$ is a time-coordinate, we can reparametrise $\tilde{\gamma}|_{[-\mu, -\mu_0]}$ and $\tilde{\sigma}_{\mathrm{right}, \mu_0}$ by $x_0$. The size of the domain of definition of the curves in this new parametrisation is obviously uniformly bounded in $\mu_0$ by $2\varepsilon_0$. Moreover, as in \eqref{EqUniformBoundCausalCurve} we also obtain a uniform bound on the coordinate velocity of the curves. It now follows from Lemma \ref{LemBoundParallelTransport} that there is a constant $C>0$ such that 
\begin{equation}\label{EqUnifBoundPT}
|| \big(P^{-1}_{\tilde{\gamma}|_{[-\mu,-\mu_0]} }\circ P_{\tilde{\sigma}_{\mathrm{right}, \mu_0}} \big)(X) ||_{\R^{4}} \leq C \cdot ||X||_{\R^4} \;,
\end{equation}
where $X \in T_{ \tilde{\gamma}(-\mu) }\tilde{U}$, $|| \cdot ||_{\R^4}$ denotes the Euclidean norm induced by the coordinate chart $\tilde{\varphi}$, and $P_\tau$ denotes parallel transport along a curve $\tau$.
\newline

\underline{\textbf{Step 4: Showing that the holonomy along those loops in $M$ is unbounded.}}

We now compute the parallel transport of $\underline{\ell}_\kappa$ along $\sigma_{\mathrm{right}, \mu_0}$. For this, let us set $e_0 := \partial_t$ and $e_1 := \frac{1}{a(t)} \partial_\chi$. It then follows from \eqref{EqChristoffel} that
\begin{equation*}
\nabla_{e_0} e_0 = 0, \qquad \nabla_{e_0} e_1 = 0, \qquad \nabla_{e_1} e_0 = \frac{\dot{a}(t)}{a(t)} e_1, \qquad \nabla_{e_1} e_1 = \frac{\dot{a}(t)}{a(t)} e_0 \;.
\end{equation*}
Defining $\underline{L} := e_0 + e_1$ and $L := e_0 - e_1$ this gives
\begin{equation}\label{EqPTL}
\begin{aligned}
\nabla_L \Big( \frac{L}{a(t)} \Big) &= 0\;, \qquad \qquad &&\nabla_{\underline{L}}\Big( \frac{\underline{L}}{a(t)}\Big) =0\;, \\
\nabla_{L}\big(a(t) \underline{L}\big) &=0\;, \qquad \qquad &&\nabla_{\underline{L}} \big(a(t) L\big) =0 \;.
\end{aligned}
\end{equation}
We furthermore have $L = - \frac{\sqrt{a(t)^2 + \kappa^2} - \kappa}{a(t)} \ell_\kappa$ and $\underline{L} = - \frac{\sqrt{a(t)^2 + \kappa^2} + \kappa}{a(t)} \underline{\ell}_\kappa$, which together with \eqref{EqPTL} finally gives
\begin{equation}
\label{EqPTell}
\begin{aligned}
\nabla_{\ell_\kappa} \Big(\frac{\sqrt{a(t)^2 + \kappa^2} - \kappa}{a(t)^2} \ell_\kappa\Big) &= 0 \;, \qquad \qquad &&\nabla_{\underline{\ell}_\kappa} \Big( \frac{\sqrt{a(t)^2 + \kappa^2} + \kappa}{a(t)^2} \underline{\ell}_\kappa\Big) = 0\;, \\
\nabla_{\ell_\kappa} \Big( \big[ \sqrt{a(t)^2 + \kappa^2} + \kappa\big] \underline{\ell}_\kappa\Big) &=0 \;, \qquad \qquad &&\nabla_{\underline{\ell}_{\kappa}} \Big(\big[ \sqrt{a(t)^2 + \kappa^2} - \kappa\big]\ell_\kappa\Big) =0 \;.
\end{aligned}
\end{equation}
Let $\hat{t}_{\mathrm{right},\mu_0}$ be the value of the $\hat{t}$-coordinate at the point where the tangent of $\sigma_{\mathrm{right}, \mu_0}$ changes from $\ell_\kappa$ to $\underline{\ell}_\kappa$, and let $\hat{t}_{\mu_0} = \hat{t}(\gamma(-\mu_0))$ and $\hat{t}_{\mu} = \hat{t}(\gamma(-\mu))$, see also Figure \ref{FigHolonomyTChi}. We denote the value of the $t$-coordinate at $\hat{t}_{\mathrm{right},\mu_0}$ by $t_{\mathrm{right}, \mu_0}$, etc.. 
\begin{figure}[h]
\centering
 \def\svgwidth{7cm}
   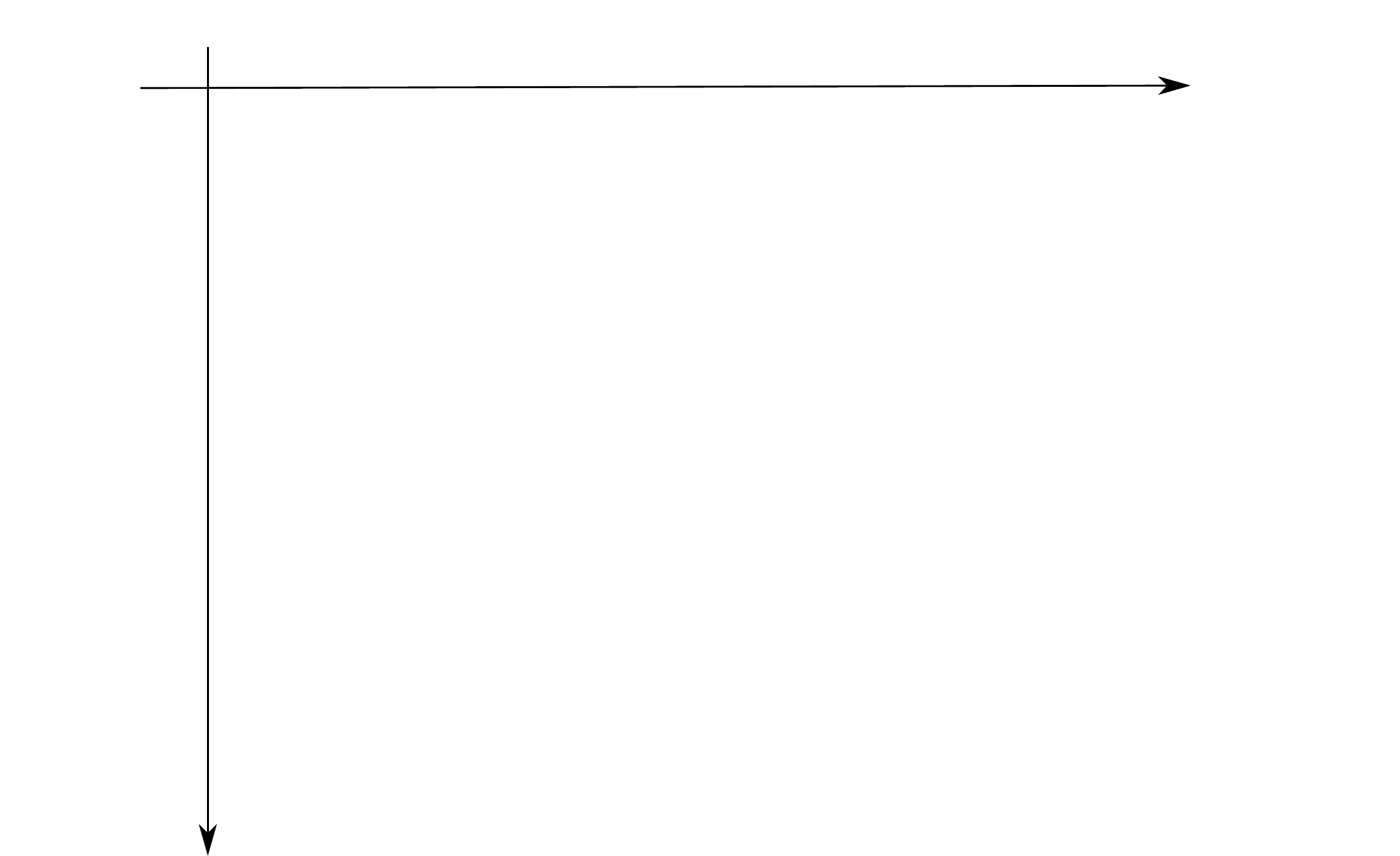
      \caption{Parallel transport along $\sigma_{\mathrm{right},\mu_0}$ in $M$.} \label{FigHolonomyTChi}
\end{figure}
Recall that the $\{t,\chi\}$-plane in $M$ is totally geodesic and thus parallel transport in $M$ of vectors tangent to $M_{t,\chi}$ may be computed in $M_{t,\chi}$. Using \eqref{EqPTell} we now obtain that the parallel transport of the vector $\underline{\ell}_\kappa$ from $\gamma(-\mu)$ along $\sigma_{\mathrm{right},\mu_0}$ to the point where the tangent of $\sigma_{\mathrm{right},\mu_0}$  changes from $\ell_\kappa$ to $\underline{\ell}_\kappa$,  is given by $$\frac{\sqrt{a(t_{\mathrm{right}, \mu_0})^2 + \kappa^2} + \kappa}{\sqrt{a(t_\mu)^2 + \kappa^2} + \kappa} \cdot \underline{\ell}_\kappa \;.$$ Continuing the parallel transport along $\sigma_{\mathrm{right},\mu_0}$ to the point $\gamma(-\mu_0)$ we obtain
\begin{equation*}
P_{\sigma_{\mathrm{right}, \mu_0}} (\underline{\ell}_\kappa) =  \frac{\sqrt{a(t_{\mu_0})^2 + \kappa^2} + \kappa}{a(t_{\mu_0})^2} \cdot \frac{a(t_{\mathrm{right}, \mu_0})^2}{\sqrt{a(t_\mu)^2 + \kappa^2} + \kappa} \cdot \underline{\ell}_\kappa \;.
\end{equation*}
Note that we have $\hat{t}_{\mathrm{right},\mu_0} > \hat{t}_{\mathrm{right}} := \frac{1}{2}\big(v(\gamma(-\mu)) - \frac{\chi_0}{2}\big) >0$ by \eqref{EqBoundV}, and thus $t_{\mathrm{right}, \mu_0}$ is uniformly bounded away from $0$ in $\mu_0$.
Thus for $\mu_0 \to 0$ we obtain $P_{\sigma_{\mathrm{right}, \mu_0}} (\underline{\ell}_\kappa) \sim \frac{\sqrt{a(t_{\mu_0})^2 + \kappa^2} + \kappa}{a(t_{\mu_0})^2} \cdot \underline{\ell}_\kappa$. Recalling that $\underline{\ell}_\kappa$ is parallel along $\gamma$ we finally find
\begin{equation*}
\big(P^{-1}_{\gamma|_{[-\mu_0,-\mu]}} \circ P_{\sigma_{\mathrm{right}, \mu_0}}\big) (\underline{\ell}_\kappa) \sim \frac{\sqrt{a(t_{\mu_0})^2 + \kappa^2} + \kappa}{a(t_{\mu_0})^2} \cdot \underline{\ell}_\kappa 
\end{equation*}
for $\mu_0 \to 0$. This, however, is in contradiction to \eqref{EqUnifBoundPT} derived in Step 3: We set $X:= \iota_* \Big(\underline{\ell}_\kappa|_{\gamma(-\mu)} \Big)= \underline{\ell}_\kappa^i \partial_i $ with $\underline{\ell}_\kappa^i \in \R$ and thus obtain 
\begin{equation*}
\begin{split}
C ||\iota_* \Big(\underline{\ell}_\kappa|_{\gamma(-\mu)}\Big)||_{\R^4} &\geq  ||\big(P^{-1}_{\tilde{\gamma}|_{[-\mu,-\mu_0]} }\circ P_{\tilde{\sigma}_{\mathrm{right}, \mu_0}} \big)(\iota_* \underline{\ell}_\kappa)||_{\R^4} \\
&= ||\iota_* \Big( \big(P^{-1}_{\gamma|_{[-\mu_0,-\mu]}} \circ P_{\sigma_{\mathrm{right}, \mu_0}}\big) (\underline{\ell}_\kappa)\Big)||_{\R^4} \\
&\sim \frac{\sqrt{a(t_{\mu_0})^2 + \kappa^2} + \kappa}{a(t_{\mu_0})^2} || \iota_* \Big(\underline{\ell}_\kappa|_{\gamma(-\mu)}\Big)||_{\R^4} \;.
\end{split}
\end{equation*}
Recall that we have $\kappa \geq 0$. For $\kappa = 0$ we have $\frac{\sqrt{a(t_{\mu_0})^2 + \kappa^2} + \kappa}{a(t_{\mu_0})^2} \sim \frac{1}{a(t_{\mu_0})}$ for $\mu_0 \to 0$, and for $\kappa >0$ we have $\frac{\sqrt{a(t_{\mu_0})^2 + \kappa^2} + \kappa}{a(t_{\mu_0})^2} \sim \frac{1}{a(t_{\mu_0})^2}$ for $\mu_0 \to 0$.
This concludes the proof of Theorem \ref{ThmFLRWPast}.
\end{proof}

\begin{remark} \label{RemFLRW}
\begin{enumerate}
\item Note that the parallel transport map $\big(P^{-1}_{\gamma|_{[-\mu_0,-\mu]}} \circ P_{\sigma_{\mathrm{right}, \mu_0}}\big)$ corresponds to an infinite boost in the direction of $-f_{1,\kappa}$ for $\mu_0 \to 0$.

\item Analogously to $\sigma_{\mathrm{right}, \mu_0}$ one can define $$\sigma_{\mathrm{left}, \mu_0} (s) := \begin{cases} \big(u(\gamma(-\mu), v(\gamma(2s + \mu))\big) \qquad &\textnormal{ for } - \mu \leq s \leq \frac{-\mu - \mu_0}{2} \\
\big(u(\gamma(2s +\mu_0)), v(\gamma(-\mu_0))\big) \qquad &\textnormal{ for } \frac{-\mu - \mu_0}{2} \leq s \leq -\mu_0 \;. \end{cases} $$
Using \eqref{EqPTell} one finds that the parallel transport of $\ell_\kappa$ along $\sigma_{\mathrm{left}, \mu_0}$ is given by 
\begin{equation}\label{EqPTLEFT}
P_{\sigma_{\mathrm{left}, \mu_0}} \ell_\kappa = \frac{\sqrt{a(t_\mu)^2 + \kappa^2} - \kappa}{a(t_{\mu_0})^2} \cdot \frac{a(t_{\mathrm{left}, \mu_0})^2}{\sqrt{a(t_\mu)^2 + \kappa^2} - \kappa} \cdot \ell_\kappa\;,
\end{equation}
where $t_{\mathrm{left}, \mu_0}$ is defined analogously to $t_{\mathrm{right}, \mu_0}$ and is also bounded uniformly in $\mu_0$ away from $0$. For $\kappa = 0$ we have $\frac{\sqrt{a(t_\mu)^2 + \kappa^2} - \kappa}{a(t_{\mu_0})^2}  \sim \frac{1}{a(t_{\mu_0})}$ for $\mu_0 \to 0$ and thus \eqref{EqPTLEFT} gives again an infinite boost. However, for $\kappa >0$ we observe that $\sqrt{a(t_{\mu_0})^2 + \kappa^2} = \kappa + \frac{a(t_{\mu_0})^2}{2\kappa} + \mathcal{O}\big(a(t_{\mu_0})^4\big) $ and thus \eqref{EqPTLEFT} shows $P_{\sigma_{\mathrm{right}, \mu_0}} \ell_\kappa \sim \ell_\kappa$ for $\mu_0 \to 0$, i.e., we obtain a finite boost in the limit. This is intimately related to the following point:
\item There are three standards of finite energy at the singularity for observers following timelike geodesics in the $\{t,\chi\}$-plane, given by $\kappa >0, \kappa <0, \kappa = 0$:\footnote{The computations below show that if an inertial observer with say $\kappa >0$ approaches the singularity, then he measures the velocity/energy of other inertial observers with $\kappa >0$ crossing his worldline  as finite, while inertial observers with $\kappa = 0$ or $\kappa <0$ seem to reach infinite velocities/energies. Analogously for inertial observers with $\kappa = 0$ or $\kappa <0$.}
 Recall that the vector field $f_{0,\kappa} = -\frac{\sqrt{a(t)^2 + \kappa^2}}{a(t)} \partial_t + \frac{\kappa}{a(t)^2} \partial_\chi $ is the affine velocity vector field along geodesics with parameter $\kappa$. It is now an easy computation to verify that for $\kappa, \rho>0$ we have for $t \to 0$ 
\begin{align*}
g(f_{0,\kappa}, f_{0, 0}) &\sim g(f_{0,-\kappa}, f_{0,0}) \sim - \frac{1}{a(t)} \\
g(f_{0,\kappa}, f_{0, -\rho}) &\sim -\frac{1}{a(t)^2} \\
g(f_{0,\kappa}, f_{0, \rho}) &\sim g(f_{0,-\kappa}, f_{0, -\rho}) \sim -1 \;.
\end{align*}
\end{enumerate}
\end{remark}

\begin{remark} \label{RemSchwarz}
We remark that the Schwarzschild singularity is also a holonomy singularity: Consider the interior $(M,g)$ of the Schwarzschild solution, where $M = \R \times (0,2m) \times \mathbb{S}^2$ with standard $(t,r,\theta, \varphi)$-coordinates and $g = -(1 - \frac{2m}{r}) \, dt^2 + \frac{1}{1 - \frac{2m}{r}} \, dr^2 + r^2 (d\theta^2 + \sin^2 \theta \, d\varphi^2) $. We define an orthonormal basis by $e_0 := -(\frac{2m}{r} - 1)^{\nicefrac{1}{2}} \, \partial_r$, $e_1:= (\frac{2m}{r} - 1)^{-\nicefrac{1}{2}} \, \partial_t$, $e_2:= \frac{1}{r} \, \partial_\theta$, $e_3 := \frac{1}{r \sin\theta} \, \partial_\varphi$. Note that this frame is parallel along $e_0$ and that the $\{t,r\}$-plane as well as the $\{r, \theta\}$-plane are totally geodesic. We also have future particle horizons. For simplicity we restrict our consideration here to an observer with affine velocity vector $e_0$. 

Let us consider first a null basis $L = e_0 + e_1$ and $\underline{L} = e_0 - e_1$ for the $\{t,r\}$-plane. A straightforward computation now yields $\nabla_L( \frac{\sqrt{r}}{\sqrt{2m - r}} L) = 0$ and $\nabla_L( \frac{\sqrt{2m - r}}{\sqrt{r}}\underline{L}) = 0$, which shows that light propagating in the $L$-direction is infinitely red-shifted when approaching $r=0$ as observed by our observer, as expected from the infinite expansion in the $t$-direction. Note that this is different to the situation of the cosmological singularities we just discussed, where we encounter an infinite blue-shift of radiation approaching the singularity. However, now the parallel transport of $\underline{L}$ along $L$ blows up when approaching $r=0$. 

Considering the situation in the $\{r,\theta\}$-plane, we set $\ell := e_0 + e_2$ and $\underline{\ell} := e_0 - e_2$. A computation gives $\nabla_\ell (\frac{1}{r} \ell) = 0$, which shows that we have an infinite blue-shift of radiation travelling in $\ell$-direction when approaching $r=0$ as observed by our observer, in qualitatively the same manner as in the cosmological singularities discussed earlier.

With a bit of additional work one can now extend the method of proof of Theorem \ref{ThmFLRWPast} also to the Schwarzschild singularity, showing $C^{0,1}_{\loc}$-inextendibility. But of course this statement is strictly weaker than the $C^0$-inextendibility proven in \cite{Sbie15}. However, the blow-up of the local causal holonomy is of independent interest.
\end{remark}

Theorem \ref{ThmFLRWFuture} and Theorem \ref{ThmFLRWPast} together directly give the following
\begin{corollary} \label{CorFLRW}
Let $(M,g)$ be a cosmological warped product spacetime with $b=\infty$ and let the scale factor $ a : (0,\infty) \to (0,\infty)$ satisfy  $\lim_{t \to 0} a(t) = 0$, $\int_1^\infty \frac{a(t)}{ \sqrt{a(t)^2 + 1}}\, dt = \infty$ and $\int_0^1 \frac{1}{a(t')} \, dt' < \infty$. Then $(M,g)$ is $C^{0,1}_{\loc}$-inextendible.
\end{corollary}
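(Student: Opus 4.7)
The plan is to derive Corollary \ref{CorFLRW} purely as a combination of the two preceding theorems together with Lemma \ref{LemFuturePastExt}, since all the hard work has already been done. First I would verify that under the stated assumptions on the scale factor $a$, the hypotheses of both Theorem \ref{ThmFLRWFuture} and Theorem \ref{ThmFLRWPast} are met: the condition $\int_1^\infty \frac{a(t)}{\sqrt{a(t)^2+1}} \, dt = \infty$ is exactly the hypothesis of Theorem \ref{ThmFLRWFuture}, while $\lim_{t\to 0} a(t) = 0$ and $\int_0^1 \frac{1}{a(t')} \, dt' < \infty$ are exactly the hypotheses of Theorem \ref{ThmFLRWPast} (noting that $b = \infty$, so in particular $b/2 = \infty$, and the integral $\int_0^{b/2} \frac{1}{a(t')} \, dt' = \int_0^1 \frac{1}{a(t')} \, dt' + \int_1^\infty \frac{1}{a(t')} \, dt'$; strictly speaking one should observe that the relevant part is the singular behaviour at $t \to 0$, so what is really needed is integrability on an interval of the form $(0, \varepsilon)$, which is equivalent to integrability on $(0,1)$).

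Second, I would invoke Theorem \ref{ThmFLRWFuture} to conclude that $(M,g)$ is future $C^0$-inextendible. Since any $C^{0,1}_{\loc}$-extension is in particular a $C^0$-extension, this immediately implies that $(M,g)$ is future $C^{0,1}_{\loc}$-inextendible. Third, I would invoke Theorem \ref{ThmFLRWPast} to conclude that $(M,g)$ is past $C^{0,1}_{\loc}$-inextendible.

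Finally, Lemma \ref{LemFuturePastExt} — applied in the $C^{0,1}_{\loc}$-setting — asserts that any $C^{0,1}_{\loc}$-extension must have a nonempty future or past boundary. Since we have just ruled out both, no $C^{0,1}_{\loc}$-extension can exist, which is exactly the claim. There is essentially no obstacle here: the corollary is a bookkeeping assembly of the two theorems. The one mild subtlety worth spelling out is the reduction of integrability of $1/a$ on $(0, b/2)$ to integrability on $(0,1)$ in the case $b = \infty$, which is trivial since $a$ is smooth and positive on $[1, \infty)$ so the only issue is near $t = 0$.
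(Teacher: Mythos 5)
Your proposal is correct and is essentially the paper's own (implicit) proof: the paper simply states that Theorems \ref{ThmFLRWFuture} and \ref{ThmFLRWPast} "directly give" the corollary, and your assembly via future $C^0$-inextendibility $\Rightarrow$ future $C^{0,1}_{\loc}$-inextendibility, past $C^{0,1}_{\loc}$-inextendibility, and Lemma \ref{LemFuturePastExt} (which applies since any $C^{0,1}_{\loc}$-extension is a $C^0$-extension) is exactly the intended argument. Your remark about the hypothesis $\int_0^{\nicefrac{b}{2}}\frac{1}{a}\,dt<\infty$ of Theorem \ref{ThmFLRWPast} versus $\int_0^1\frac{1}{a}\,dt<\infty$ when $b=\infty$ is a genuine (if minor) point the paper glosses over, and you resolve it correctly: only integrability of $1/a$ near $t=0$ is used in Lemma \ref{LemHorizon} and in the construction of the $\hat{t}$-coordinate.
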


The physically most interesting examples of cosmological warped product spacetimes, and thus the physically most relevant application of Theorem \ref{ThmFLRWFuture}, Theorem \ref{ThmFLRWPast}, and Corollary \ref{CorFLRW}, are to the class of isotropic and homogeneous cosmological models, known as \emph{FLRW spacetimes}. Here, $(\overline{M}_K, \overline{g}_K)$ is a $3$-dimensional complete Riemannian manifold of constant sectional curvature $K$. 
By redefining the scale factor function $a(t)$ one may restrict to the three cases $K = -1, 0, +1$. In all three cases one can locally introduce polar normal coordinates\footnote{See for example \cite{LeeRiem}, Chapter 10.} $(\chi, \theta, \varphi) \in (0,\chi_0) \times \mathbb{S}^2$ on $\overline{M}_K$ such that the $3$-metric in these coordinates reads $$\overline{g}_K = d\chi^2 + f_K(\chi)^2(d\theta^2 + \sin^2\theta \, d\varphi^2) \quad \textnormal{ with }\quad  \begin{cases} f_K(\chi) = \sinh \chi \qquad &\textnormal{ for } K = -1 \\
f_K(\chi) = \chi &\textnormal{ for } K = 0 \\
f_K(\chi) = \sin \chi &\textnormal{ for } K = +1 \;. \end{cases}$$
Thus, in the coordinates $(t,\chi,\theta, \varphi)$ the spacetime metric reads
$$g  = -dt^2 + a(t)^2\big[ d\chi^2 + f_K(\chi)^2(d\theta^2 + \sin^2\theta \, d\varphi^2)\big]\;.$$
Note that various topologies are possible for $\overline{M}_K$. The classical, simply connected choices are $\overline{M}_{-1} = \overline{M}_0 = \R^3$ with almost global coordinates $(\chi, \theta, \varphi) \in (0,\infty) \times (0,\pi) \times (0,2\pi)$, and  $\overline{M}_1 = \mathbb{S}^3$ with almost global coordinates $(\chi, \theta, \varphi) \in (0,\pi) \times (0,\pi) \times (0,2\pi)$.

\section{Spherically symmetric weak null singularities}

\subsection{General theorem} \label{SecGT}

We consider the following class of spacetimes $(M,g)$:
Let $$M := \underbrace{(-\infty,0) \times (-\infty, 0)}_{=: Q} \times \mathbb{S}^2 \qquad \textnormal{ and } \qquad \overline{M} :=\underbrace{(-\infty, 0] \times (-\infty, 0]}_{=: \overline{Q}} \times \mathbb{S}^2 \;,$$
with standard $u,v$ coordinates on the first two factors of $M$ and $\overline{M}$, and let
$$g = -\frac{\Omega^2}{2} \big( du \otimes dv + dv \otimes du\big) + r^2 \mathring{\gamma} \;,$$
where $\mathring{\gamma}$ denotes the standard round metric on $\mathbb{S}^2$. For the metric components we assume that
\begin{itemize}
\item $\Omega : Q \to (0,\infty)$  is smooth and extends continuously to $\overline{Q}$ as a positive function.\\[-15pt]
\item $r : Q \to (0,\infty)$ is smooth and extends continuously to $\overline{Q}$ as a positive function.\\[-15pt]
\item  $\lim_{v \to 0} \partial_v r (u,v) = -\infty $ for all $u \in (-\infty, 0)$. \\[-15pt]
\item For each $u_0 \in (-\infty, 0)$ there exists a $v_0(u_0)  \in (-\infty, 0)$ such that $\partial_u r (u_0,v) <0$ for all $v \in [v_0(u_0), 0)$.
\end{itemize}
We fix the time-orientation by stipulating that $\partial_u + \partial_v$ is future directed. Clearly $(M,g)$ is globally hyperbolic. See also Figure \ref{FigHomotopyNull} on page \pageref{FigHomotopyNull}  for a Penrose diagram. By assumption $(\overline{M}, g)$ is a future $C^0$-extension of $(M,g)$. The hypersurface $\{v = 0\}$ in $\overline{M}$ is known as a weak null singularity.

Our first result shows that there is no $C^{0,1}_{\loc}$-extension through $\{v=0\}$. The precise statement is
\begin{theorem}\label{ThmOneNullSing}
Let $(M,g)$ be as above. Then there exists no $C^{0,1}_{\loc}$-extension $\iota : M \hookrightarrow \tilde{M}$ with the property that there is a future directed future inextendible timelike geodesic $\gamma : [-1,0) \to M$ with $\lim_{s \to 0} v(\gamma(s)) = 0$ and such that $\lim_{s \to 0} (\iota \circ \gamma)(s)$ exists in $\tilde{M}$.
\end{theorem}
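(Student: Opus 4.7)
The plan is to follow the strategy of Theorem \ref{ThmFLRWPast}: argue by contradiction by assuming such a $C^{0,1}_{\loc}$-extension $\iota : M \hookrightarrow \tilde{M}$ exists with the asserted geodesic $\gamma$, construct a family of loops in $M$ along which the holonomy diverges, and derive a contradiction from the uniform bound on parallel transport provided by Lemma \ref{LemBoundParallelTransport}. Setting $\tilde{q} := \lim_{s \to 0}(\iota \circ \gamma)(s) \in \partial^+\iota(M)$, Proposition \ref{PropBoundaryChart} produces a chart $\tilde{\varphi}: \tilde{U} \to \Reps$ around $\tilde{q}$ in which $\tilde{g}$ obeys a global Lipschitz bound and in which the cone inclusions \eqref{EqInclusionRelationFuture} hold. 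After shortening $\gamma$ to some $\gamma|_{[-\mu, 0)}$, I may assume that $\iota \circ \gamma$ lies below the graph of $f$ and that the causal diamond $J^+(\tilde{\gamma}(-\mu), \tilde{U}) \cap J^-(\tilde{q}, \tilde{U})$ is compactly contained in $\tilde{U}$, which is exactly the setup needed to apply Lemma \ref{LemCausalHomotopy} to causal homotopies of $\gamma|_{[-\mu, 0)}$.

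The new conceptual feature compared to the cosmological case is that there is no reduction to a two-dimensional totally geodesic surface and, as emphasised in the introduction, loops built from broken null or timelike segments in the $(u,v)$-plane give only bounded holonomy. The first step is therefore to replace $\gamma$ by an outgoing radial null geodesic tangent to $\partial_v$. Since $\partial_v$ is future directed null, I would construct a causal homotopy of $\gamma|_{[-\mu, 0)}$ (with endpoints fixed at $\gamma(-\mu)$ and a point approaching $\tilde{q}$ along $\gamma$) to a broken causal curve whose outgoing leg is an integral curve of $\partial_v$; Lemma \ref{LemCausalHomotopy} keeps this homotopy inside $\tilde{U}$, and the argument underlying Proposition \ref{PropGeodesicBoundaryChart} applied to the outgoing $\partial_v$-leg then produces a future inextendible radially outgoing null geodesic $\tau : [-\mu', 0) \to M$, tangent to $\partial_v$, whose image under $\iota$ stays in $\tilde{U}$, below the graph of $f$, and limits to a point of $\partial^+\iota(M)$. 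The hypothesis $\lim_{v \to 0} \partial_v r = -\infty$ forces $\tau$ to be future incomplete with $\lim_{s \to 0} v(\tau(s)) = 0$.

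Fix now a base point $\tau(-\nu)$, a small parameter $\varepsilon > 0$ in the $\varphi$-direction on $\Sp^2$ (to be taken small and then held), and a parameter $\varepsilon_0 > 0$ to be sent to zero. I would construct a four-legged loop $\sigma_{\varepsilon, \varepsilon_0}$ based at $\tau(-\nu)$ (cf.\ the figure referred to in the introduction): a spacelike arc of length $\varepsilon$ in the angular $\varphi$-direction on the sphere $\Sp^2$ through $\tau(-\nu)$; an integral segment of $\partial_v$ pushing toward the singularity until reaching $\{v = -\varepsilon_0\}$; a spacelike arc of length $-\varepsilon$ in $\varphi$; and an integral segment of $-\partial_v$ returning to $\tau(-\nu)$. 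A direct computation of the Christoffel symbols of $g$ and of parallel transport along the two $\partial_v$-legs, in the spirit of \eqref{EqPTell}, shows that the resulting holonomy at $T_{\tau(-\nu)}M$ acts in the $(v,\varphi)$-plane as a Lorentz transformation whose magnitude is controlled by an integral involving $\partial_v r / r$ along the $v$-legs. By the hypothesis $\lim_{v \to 0} \partial_v r = -\infty$ (together with $r$ remaining bounded), this quantity diverges as $\varepsilon_0 \to 0$, so the family of holonomies around $\sigma_{\varepsilon, \varepsilon_0}$ forms an unbounded subset of $O(1,3)(T_{\tau(-\nu)}M)$.

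The remaining, and main, obstacle is to show that the loops $\iota \circ \sigma_{\varepsilon, \varepsilon_0}$ are \emph{local} in $\tilde{U}$ in the sense required by Lemma \ref{LemBoundParallelTransport}, i.e., that their coordinate velocity in $\tilde{\varphi}$ and their parameter length are bounded uniformly in $\varepsilon_0$. Because the $\varphi$-legs are spacelike, Lemma \ref{LemCausalHomotopy} no longer suffices on its own and one has to exploit the affine structure of the loops. With $\varepsilon$ held small, the two $\varphi$-arcs lie on spheres of areal radius $r$, and since $r$ extends continuously to the singularity as a positive function, their coordinate extent in $\tilde{\varphi}$ is bounded uniformly in $\varepsilon_0$; the two $\partial_v$-legs are causal, so an argument analogous to the one used to produce $\tau$, together with \eqref{EqUniformBoundCausalCurve}, keeps them inside a compact subset of $\tilde{U}$ with bounded coordinate velocity, and their affine parameter length is controlled by the $v$-extent of $\tilde{U}$. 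Feeding these uniform bounds into Lemma \ref{LemBoundParallelTransport} yields a uniform upper bound on the Euclidean norm of the holonomy of $\sigma_{\varepsilon, \varepsilon_0}$ in $\tilde{\varphi}$-coordinates, contradicting the divergence established in the previous paragraph and completing the proof.
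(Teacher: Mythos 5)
Your overall architecture matches the paper's (reduce to a radially outgoing null geodesic $\tau$, build rectangular loops with two angular legs and two $\partial_v$-legs, bound the extension-side holonomy via Lemma \ref{LemBoundParallelTransport}, exhibit blow-up of the intrinsic holonomy), but the two steps that carry the actual content both have genuine gaps. First, the blow-up mechanism you invoke is wrong. Along a $\partial_v$-leg the parallel transport ODE has coefficients $\Gamma^v_{vv}=\partial_v\log\Omega^2$ and $\Gamma^A_{Bv}=\frac{\partial_v r}{r}\delta^A_{\;B}$ (cf.\ \eqref{EqChristoffelNullSing}, \eqref{EqParallelTAlongV}), so ``the integral involving $\partial_v r/r$ along the $v$-legs'' is exactly $\log\big(r(\mathrm{end})/r(\mathrm{start})\big)$, which stays \emph{bounded} precisely because $r$ extends continuously as a positive function; parallel transport along the $v$-legs is uniformly bounded (this is \eqref{EqPTNULL} in the paper), so your claimed divergence does not occur. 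The unboundedness instead comes from the angular leg at $v=v_1$ close to the singularity, through the pointwise blow-up of $\Gamma^u_{\varphi\varphi}=\frac{2}{\Omega^2}r\,\partial_v r$; obtaining a genuine lower bound there requires solving the coupled transport along $\partial_\varphi$ (the paper uses the null frame $\ell,\underline{\ell}$, needs $\partial_u r\cdot\partial_v r>0$ near $v=0$ — this is where the sign hypothesis on $\partial_u r$ enters, which your sketch never uses — and gets $\|P\,\partial_\varphi\|\gtrsim r\lambda\,|\partial_v r|/\Omega^2\to\infty$, \eqref{EqPPHIINFINITE}). As written, your computation would just produce bounded holonomy and no contradiction.

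Second, the locality of the loops is not established by your argument. For the spacelike $\varphi$-arcs, continuity of $r$ only bounds their $g$-length; in a merely Lipschitz chart a bound on the Lorentzian norm of a spacelike tangent gives no bound on its Euclidean coordinate norm (near-null vectors have small $\tilde g$-norm and huge coordinate norm), and Lemma \ref{LemCausalHomotopy} does not apply to spacelike curves, so you have shown neither that the arc at $v_1$ near $0$ stays in $\tilde U$ nor that its coordinate velocity is uniformly bounded. Likewise, the displaced $\partial_v$-legs (at $\varphi=\pi\pm\lambda/2$) are future inextendible, so there is no fixed-endpoint causal homotopy ``analogous to the one used to produce $\tau$'' that confines them to $\tilde U$. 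This is exactly what the paper's Step 3 supplies: one takes the frame $e_1=\frac1r\partial_\theta$, $e_2=\frac{1}{r\sin\theta}\partial_\varphi$, $e_3=\partial_u$, $e_4=\frac{1}{\Omega^2}\partial_v$, which is parallel along the $v$-lines, bounds $\iota_*e_\mu$ along those causal lines via Lemma \ref{LemBoundParallelTransport} and \eqref{EqUniformBoundCausalCurve}, and then runs a continuity/bootstrap argument (the set $J$ and the tube $\overline{B_{2\hat C\lambda}(\tilde\tau)}$) to show that a whole coordinate box $[v_0,0)\times B_\lambda$ maps into $\tilde U$ with uniformly bounded Jacobian $|\partial\tilde x^\alpha/\partial x^\mu|$; only then do all four legs satisfy the hypotheses of Lemma \ref{LemBoundParallelTransport}. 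A smaller but real omission: your homotopy from $\gamma$ to a curve with a pure $\partial_v$-leg ignores the angular momentum of $\gamma$; the paper first has to homotope $\gamma$ into the $\{u,v\}$-plane by slightly accelerating it in $\varphi$ (Step 2.1) before the broken-null homotopy of Step 2.2 can be applied.
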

\begin{remark}
\begin{enumerate}
\item Note that the theorem does not require the Hawking mass to go to infinity for $v \to 0$.
\item Also note that the theorem in particular rules out the existence of $C^{0,1}_{\loc}$-extensions $\iota : M \hookrightarrow \tilde{M}$ with the property that there is a future directed future inextendible timelike geodesic $\gamma : [-1,0) \to M$ which leaves into $\tilde{M} \setminus \iota(M)$ through $\{u=0\} \cap \{v=0\}$\footnote{That means $\lim_{s \to 0} v(\gamma(s)) = 0 = \lim_{s \to 0} u(\gamma(s))$ and $\lim_{s \to 0} (\iota \circ \gamma)(s)$ exists in $\tilde{M}$.}, although we have not made any blow-up assumption at $\{u=0\} \cap \{v=0\}$. This is due to the fact, established in the proof of Theorem \ref{ThmOneNullSing}, that any $C^0$-extension that extends through $\{u=0\} \cap \{v=0\}$ also necessarily extends through a bit of $ \{ -\infty < u < 0\} \cap \{v=0\}$, see Step 2 and, in particular, Step 2.2 of the proof.
\item The Theorem remains valid for a much larger class of spacetimes. For the proof, as written, to go through it suffices that in addition to the first two bullet points one assumes that for each $u_0 \in (-\infty, 0)$ there exists a sequence $v_n \in (-\infty, 0)$ with $v_n \to 0$ for $n \to \infty$ such that $(\rd_u r \cdot \rd_v r) (u_0, v_n) >0$ and, moreover, such that $|\rd_u r (u_0, v_n)| \to \infty$ or $|\rd_v r(u_0, v_n)| \to \infty$ for $n \to \infty$. We would expect that one can in addition remove the assumption on the sign of $\rd_u r \cdot \rd_v r$, but this would require a modification of the proof. All of the physically interesting applications known to the author are however covered by the assumptions made in the theorem.
\end{enumerate}
\end{remark}

The next theorem covers the case that $\{u = 0\} \subseteq \tilde{M}$ is also a weak null singularity. It is more or less a direct consequence of Theorem \ref{ThmOneNullSing}.
\begin{theorem} \label{ThmTwoNullSing}
Let $(M,g)$ as above satisfy in addition 
\begin{itemize}
\item $\lim_{u \to 0} \partial_u r(u,v) = -\infty$ for all $v \in (-\infty,0)$.\\[-15pt]
\item For each $v_0 \in (-\infty, 0) $ there exists a $ u_0(v_0) \in (-\infty, 0)$ such that $\partial_v r (u,v_0)<0$ for all $u \in [u_0(v_0),0)$.
\end{itemize}
Then $(M,g)$ is future $C^{0,1}_{\loc}$-inextendable.
\end{theorem}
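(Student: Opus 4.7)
The plan is to reduce Theorem \ref{ThmTwoNullSing} to Theorem \ref{ThmOneNullSing} by a short case analysis on how a future-leaving timelike geodesic can exit $M$. Suppose for contradiction that a future $C^{0,1}_{\loc}$-extension $\iota:M\hookrightarrow\tilde M$ exists. Since $(M,g)$ is globally hyperbolic, Proposition \ref{PropGeodesicBoundaryChart} supplies a future directed future inextendible timelike geodesic $\gamma:[-1,0)\to M$ such that $\iota\circ\gamma$ possesses a future endpoint $\tilde q\in\partial^+\iota(M)$.

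Next I would show that $\gamma$ must exit through $\{u=0\}\cup\{v=0\}$. The time-orientation convention makes $\rd_u+\rd_v$ future directed, and any future directed causal vector $X=X^u\rd_u+X^v\rd_v+X^A\rd_A$ satisfies $X^uX^v\geq 0$ and $X^u+X^v>0$, forcing $X^u>0$ and $X^v>0$. Hence $s\mapsto u(\gamma(s))$ and $s\mapsto v(\gamma(s))$ are strictly increasing and possess limits $u_\infty,v_\infty\in(-\infty,0]$. If both $u_\infty<0$ and $v_\infty<0$, then $\gamma([-1,0))$ lies in the compact subset $K:=[u(\gamma(-1)),u_\infty]\times[v(\gamma(-1)),v_\infty]\times\mathbb{S}^2\subseteq M$; any subsequential limit $p\in K$ then satisfies $\iota(p)=\tilde q$, contradicting $\tilde q\in\tilde M\setminus\iota(M)$. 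Thus at least one of $u_\infty,v_\infty$ equals $0$.

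Three cases remain. If $v_\infty=0$, Theorem \ref{ThmOneNullSing} applies directly (the remark following it explicitly handles the subcase $u_\infty=v_\infty=0$) and yields the required contradiction. If $u_\infty=0$ and $v_\infty<0$, I would exploit the $u\leftrightarrow v$ symmetry: the diffeomorphism $\Phi(u,v,\omega):=(v,u,\omega)$ pulls $g$ back to a metric $\Phi^*g$ on $M$ of the same double-null form, with $\Omega,r$ replaced by their swaps $\Omega\circ\Phi,r\circ\Phi$, and with the time-orientation preserved since $\Phi_*(\rd_u+\rd_v)=\rd_u+\rd_v$. The additional hypotheses of Theorem \ref{ThmTwoNullSing} are exactly the hypotheses of Theorem \ref{ThmOneNullSing} for the spacetime $(M,\Phi^*g)$, and $\iota\circ\Phi$ is a $C^{0,1}_{\loc}$-extension of $(M,\Phi^*g)$ along which the pulled-back geodesic $\Phi\circ\gamma$ has its new $v$-coordinate tending to zero. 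Theorem \ref{ThmOneNullSing} then closes this case as well.

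Essentially no new obstacle arises beyond Theorem \ref{ThmOneNullSing}; the present argument is a clean case split. The one step requiring some care is the monotonicity argument in paragraph two that rules out interior accumulation of $\gamma$ in $Q\times\mathbb{S}^2$, without which the geodesic could in principle spiral on the $\mathbb{S}^2$-factor while $u,v$ stay bounded away from zero.
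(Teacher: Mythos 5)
Your proposal is correct and is essentially the paper's own proof: the paper likewise invokes Proposition \ref{PropGeodesicBoundaryChart} to obtain a future inextendible timelike geodesic with a future limit point in $\tilde{M}$, notes that future inextendibility forces $\lim_{s\to 0}v(\gamma(s))=0$ or $\lim_{s\to 0}u(\gamma(s))=0$, and concludes by Theorem \ref{ThmOneNullSing} or its dual with $u$ and $v$ interchanged. Your added details (the causal monotonicity of $u,v$ along $\gamma$ plus the compactness argument ruling out both limits being negative, and the explicit swap isometry realising the dual) merely spell out steps the paper leaves implicit.
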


\begin{proof}[Proof of Theorem \ref{ThmTwoNullSing}:]
Assume $\iota : M \hookrightarrow \tilde{M}$ is a future $C^{0,1}_{\loc}$-extension. Then by Proposition \ref{PropGeodesicBoundaryChart} there is a future directed and future inextendible timelike geodesic $\gamma : [-1,0) \to M$ such that $\iota \circ \gamma$ has a future limit point in $\tilde{M}$. The future inextendibility of $\gamma$ implies that we have $\lim_{s \to 0} v(\gamma(s)) = 0$ or $\lim_{s \to 0} u(\gamma(s)) = 0$ (or in fact both). We then obtain a contradiction to the statement of Theorem \ref{ThmOneNullSing} (or its dual where $u$ and $v$ are interchanged).
\end{proof}

Before we begin with the proof of Theorem \ref{ThmOneNullSing} let us collect the expressions for the Christoffel symbols of $(M,g)$ in the coordinates $(u,v, x^A)$, where $x^A$ denotes a coordinate system on $\mathbb{S}^2$. A direct computation gives
\begin{equation}
\label{EqChristoffelNullSing}
\begin{aligned}
\Gamma^u_{uu} &= \partial_u \log \Omega^2 \ \qquad \qquad \hspace{2cm} &&\Gamma^u_{AB} = \frac{2}{\Omega^2} r \partial_vr  \cdot \mg_{AB} \\
\Gamma^u_{uv} &= \Gamma^u_{vv} = \Gamma^u_{Au} = \Gamma^u_{Av} = 0 \\[5mm]
\Gamma^v_{vv} &= \partial_v \log \Omega^2 \qquad \quad &&\Gamma^v_{AB} = \frac{2}{\Omega^2} r \partial_ur  \cdot \mg_{AB} \\
\Gamma^v_{vu} &= \Gamma^v_{uu} = \Gamma^v_{Av} = \Gamma^v_{Au} = 0 \\[5mm]
\Gamma^A_{Bu} &= \frac{1}{r} \partial_u r \cdot \delta^A_{\; \; B} &&\Gamma^A_{Bv} = \frac{1}{r} \partial_v r \cdot \delta^A_{\; \; B} \qquad \qquad &&&\Gamma^A_{BC} = \mathring{\Gamma}^A_{BC} \\
\Gamma^A_{uu} &= \Gamma^A_{uv} = \Gamma^A_{vv} = 0 \;,
\end{aligned}
\end{equation}
where $\mathring{\Gamma}^A_{BC} $ denotes the Christoffel symbols of $(\mathbb{S}^2, \mg)$.

\begin{proof}[Proof of Theorem \ref{ThmOneNullSing}:]
The proof is by contradiction. 

\underline{\textbf{Step 1:}} Assume $\iota : M \hookrightarrow \tilde{M}$ is a $C^{0,1}_{\loc}$-extension and let $\gamma : [-\mu,0) \to M$ be an affinely\footnote{It follows easily from the $C^0$-extension $(\overline{M}, g)$ that a timelike geodesic as in Theorem \ref{ThmOneNullSing} has to be future incomplete. We can thus assume without loss of generality that it is affinely parametrised on the interval $[-\mu,0)$.} parametrised future directed timelike geodesic with $\lim_{s \to 0} v(\gamma(s)) = 0$ and $\lim_{s \to 0} \tilde{\gamma}(s) = \tilde{p} \in \tilde{M}$, where $\tilde{\gamma} := \iota \circ \gamma$. Then by Proposition \ref{PropBoundaryChart} there exists a chart $\tilde{\varphi} : \tilde{U} \to(-\varepsilon_0, \varepsilon_0) \times  (-\varepsilon_1, \varepsilon_1)^{d} =: R_{\varepsilon_0, \varepsilon_1}$ as in Proposition \ref{PropBoundaryChart} with $\delta >0$ so small that all vectors in $C^+_{\nicefrac{5}{6}}$ are future directed timelike, all vectors in $C^-_{\nicefrac{5}{6}}$ are past directed timelike, and all vectors in $C^c_{\nicefrac{5}{8}}$ are spacelike. After making the chart slightly smaller if necessary, we can also assume that in this chart the Lorentzian metric $\tilde{g}$ on $\tilde{U}$ satisfies a global Lipschitz condition
\begin{equation*}
|\tilde{g}_{\mu \nu}(\tilde{x}) - \tilde{g}_{\mu \nu}(\tilde{y})| \leq \Lambda ||\tilde{x} - \tilde{y}||_{\R^{d+1}} 
\end{equation*}
for all $\tilde{x},\tilde{y} \in \Reps$, where $\Lambda >0$ is a constant. In the region $\{(\tilde{x}_0,\underline{\tilde{x}}) \in \Reps \; | \: \tilde{x}_0 < f(\underline{\tilde{x}})\}$ below the graph of $f$ the metric components $\tilde{g}_{\mu \nu}$ are smooth and satisfy the bounds $|\partial_\kappa \tilde{g}_{\mu \nu}| \leq \Lambda$. Let $\mu>0$ be so small that $\tilde{\gamma}$ is contained in $\tilde{U}$ and such that $$\Big(C^+_{\nicefrac{5}{8}} + \tilde{\varphi}(\tilde{\gamma}(-\mu))\Big) \cap  C^-_{\nicefrac{5}{8}} \subset \subset \Reps \;.$$
\newline

\underline{\textbf{Step 2:}} \hspace{5mm} \underline{\textbf{Step 2.1:}} We construct a timelike curve leaving $M$ that lies in the $\{u,v\}$-plane.

Using the spherical symmetry of $(M,g)$ we can choose standard $(\theta, \varphi)$ coordinates on $\mathbb{S}^2$ such that $\gamma$ lies in the $\{\theta = \frac{\pi}{2}\}$-plane. Then $\gamma(s) = \big(\gamma^u(s), \gamma^v(s), \frac{\pi}{2}, \gamma^\varphi(s)\big)$ satisfies \begin{equation}
\label{EqG}-1 = -\Omega^2 \dot{\gamma}^u \dot{\gamma}^v + r^2 \big(\dot{\gamma}^\varphi\big)^2 \qquad \qquad \textnormal{ and } \qquad \qquad \R \ni \kappa = g(\partial_\varphi, \dot{\gamma}) = r^2 \dot{\gamma}^\varphi \;.
\end{equation}
We can choose the $\varphi$ coordinate such that we have $\kappa \geq 0$. Since $\gamma$ is future directed timelike we have $\dot{\gamma}^u, \dot{\gamma}^v >0$, and thus $\lim_{s \to 0} \gamma^u(s) \in (-\infty, 0]$ exists. The continuous extension of $r$ then implies that $r_\infty := \lim_{s \to 0} r(\gamma(s)) >0$ also exists. In particular $r$ is bounded away from $0$ uniformly along $\gamma$. It then follows from $\dot{\gamma}^\varphi = \frac{\kappa}{r^2}$ that $\lim_{s \to 0} \gamma^\varphi(s) =: \varphi_\infty$ also exists and, without loss of generality, we can assume that $\varphi_\infty = \pi$.

If $\kappa = 0$ then we are already done. So let us assume $\kappa >0$. It follows from \eqref{EqG} that $-1 = -\Omega^2 \dot{\gamma}^u \dot{\gamma}^v + \frac{\kappa^2}{r^2}$. 
We now accelerate  the curve slightly in the $\varphi$-direction  such that it is still timelike. For $s \in [-\mu, 0)$ and $\varepsilon>0$ we have
\begin{equation}
\label{EqAcc}
\Big{|} \frac{(\kappa + \varepsilon)^2}{r^2(\gamma(s))} - \frac{\kappa^2}{r^2(\gamma(s))} \Big{|} \leq \Big{|} \frac{(\kappa + \varepsilon)^2}{r^2(\gamma(s))} - \frac{(\kappa + \varepsilon)^2}{r^2_\infty} \Big{|} + \Big{|} \frac{(\kappa + \varepsilon)^2}{r^2_\infty} - \frac{\kappa^2}{r_\infty^2}\Big{|} + \Big{|} \frac{\kappa^2}{r_\infty^2} - \frac{\kappa^2}{r^2(\gamma(s))} \Big{|} \;.
\end{equation}
Choosing first $\varepsilon>0$ so small such that the second term on the right hand side of \eqref{EqAcc} is  less than $\frac{1}{6}$, and then $\mu>0$ so small such that for all $s \in [-\mu, 0)$ we have that the first and third term on the right hand side are also less than $\frac{1}{6}$, we obtain that the right hand side of \eqref{EqAcc} is less than $\frac{1}{2}$. Hence, defining the curve $$ [-\mu, 0) \ni s \mapsto \sigma(s) := \big(\gamma^u(s), \gamma^v(s), \frac{\pi}{2}, \gamma^\varphi(-\mu) + \int_{-\mu}^s \frac{\kappa +\varepsilon}{r^2(\gamma(s'))} \, ds' \big) $$
we obtain $g(\dot{\sigma}, \dot{\sigma}) = -\Omega^2 \dot{\gamma}^u \dot{\gamma}^v + \frac{(\kappa + \varepsilon)^2}{r^2} < -\frac{1}{2}$ for all $s \in [-\mu, 0)$, and thus $\sigma$ is timelike. Since $\sigma$ moves slightly faster in $\varphi$ than $\gamma$ there is now a $-\mu < -\mu_0 < 0$ such that $\sigma^\varphi(-\mu_0) = \varphi_\infty = \pi$. Moreover, $\sigma^\varphi$ is strictly increasing and thus $$(\sigma^\varphi)^{-1} \Big|_{[\gamma^\varphi(-\mu), \pi)} : [\gamma^\varphi(-\mu), \pi) \to [-\mu, -\mu_0)$$ is  strictly increasing, bijective, and continuous.  Let $-\mu < -\mu_1 <0$. We now define a causal homotopy $\Gamma_{\mu_1} : [-\mu, -\mu_1] \times [-\mu, -\mu_1] \to M$ of $\gamma|_{[-\mu, -\mu_1]}$ with fixed endpoints by
$$\Gamma_{\mu_1} (s ; \lambda) = \begin{cases} \sigma(s) \qquad \qquad &\textnormal{ for } -\mu \leq s \leq (\sigma^\varphi)^{-1} \big(\gamma^\varphi(\lambda)\big) \\
\big(\gamma^u(s), \gamma^v(s), \frac{\pi}{2}, \gamma^\varphi(\lambda)\big) \qquad \qquad &\textnormal{ for } (\sigma^\varphi)^{-1} \big(\gamma^\varphi(\lambda) \big)\leq s \leq \lambda \\
\gamma(s) &\textnormal{ for } \lambda \leq s \leq -\mu_1 \;. \end{cases}$$
For each $\lambda \in [-\mu, -\mu_0]$ the curve $s \mapsto \Gamma_{\mu_1} (s; \lambda)$ is future directed timelike from $\gamma(-\mu)$ to $\gamma(-\mu_1)$. Note that the homotopy only homotopes the $\varphi$-component of $\gamma$, see Figure \ref{FigPhiHomotop}; the projection of $\Gamma_{\mu_1}( \cdot \, ; \lambda)$ to $Q$ traces out the projection of $\gamma|_{[-\mu, -\mu_0]}$ to $Q$ for all $\lambda$.\footnote{Let us point out that the spherical symmetry of $g$ is important for our construction of the homotopy to yield causal curves: the spherical symmetry implies \begin{equation*} \begin{split} g|_{\Gamma_{\mu_1}(s;\lambda)} &\big(\partial_s \Gamma_{\mu_1} (s;\lambda), \partial_s \Gamma_{\mu_1}(s;\lambda)\big)  \\
&= g|_{\Gamma_{\mu_1}(s;\lambda)}  \big( \dot{\Gamma}_{\mu_1}^u (s;\lambda) \partial_u + \dot{\Gamma}_{\mu_1}^v(s;\lambda) \partial_v + \dot{\Gamma}_{\mu_1}^\varphi(s;\lambda) \partial_\varphi, \dot{\Gamma}_{\mu_1}^u (s;\lambda) \partial_u + \dot{\Gamma}_{\mu_1}^v(s;\lambda) \partial_v + \dot{\Gamma}_{\mu_1}^\varphi(s;\lambda) \partial_\varphi\big) \\
&\overset{!}{=}g|_{\gamma(s)}  \big( \dot{\Gamma}_{\mu_1}^u (s;\lambda) \partial_u + \dot{\Gamma}_{\mu_1}^v(s;\lambda) \partial_v + \dot{\Gamma}_{\mu_1}^\varphi(s;\lambda) \partial_\varphi, \dot{\Gamma}_{\mu_1}^u (s;\lambda) \partial_u + \dot{\Gamma}_{\mu_1}^v(s;\lambda) \partial_v + \dot{\Gamma}_{\mu_1}^\varphi(s;\lambda) \partial_\varphi\big) \;.
\end{split}
\end{equation*}
}
\begin{figure}[h]
\centering
 \def\svgwidth{8cm}
   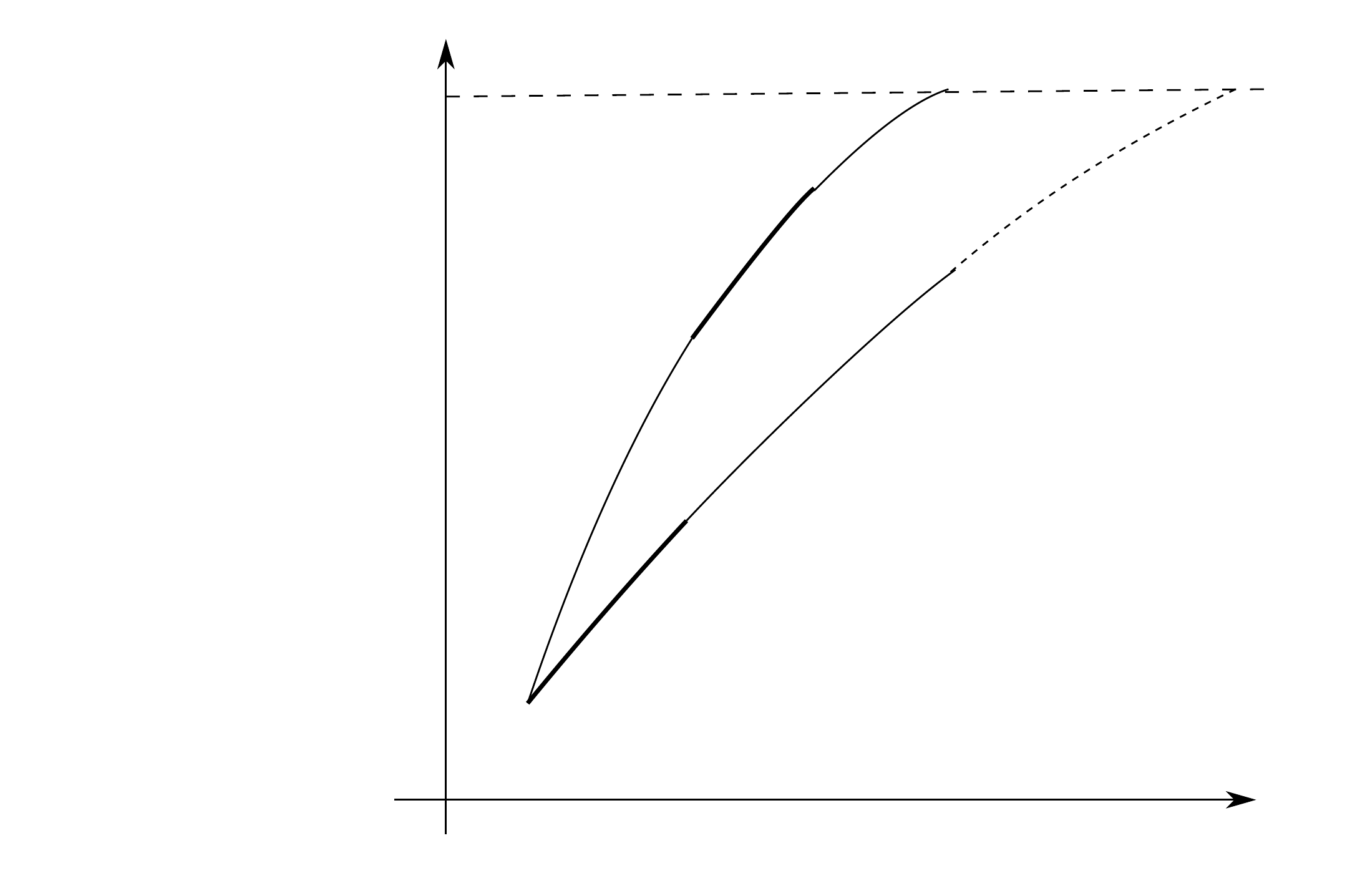
      \caption{The homotopy in the $\varphi$-component of $\Gamma_{\mu_1}$.} \label{FigPhiHomotop}
\end{figure}
By Lemma \ref{LemCausalHomotopy} $\iota \circ \Gamma_{\mu_1}$ maps into $\tilde{U}$ for all $-\mu_1 \in [-\mu, 0)$. In particular the $\tilde{\varphi} \circ \iota$-image of the curves $\sigma_{\mu_1}(s) := \Gamma_{\mu_1}(s; -\mu_1)$ maps into $\Big(C^+_{\nicefrac{5}{8}} + \tilde{\varphi}(\tilde{\gamma}(-\mu))\Big) \cap  C^-_{\nicefrac{5}{8}} \subset \subset \Reps$ for all $-\mu_1 \in [-\mu,0)$.  We define a future directed and future inextendible timelike curve $\sigma_0 : [-\mu,0) \to M$ by $$\sigma_0(s) := \begin{cases} \sigma(s) \qquad \qquad &\textnormal{ for } s \in [-\mu, -\mu_0] \\
\big(\gamma^u(s), \gamma^v(s), \frac{\pi}{2}, \pi\big) \qquad \qquad &\textnormal{ for } s \in [-\mu_0, 0) \;. \end{cases}$$
Note that $\sigma_0$ lies in the closure of $$\Big{\{}\sigma_{\mu_1} \big( [ -\mu, -\mu_1]) \; | \; -\mu < -\mu_1 <0 \Big{\}} \subseteq M \;.$$ The continuity of $\iota$ now implies that $\varphi \circ \iota \circ \sigma_0$ lies in the closure of $\Big(C^+_{\nicefrac{5}{8}} + \tilde{\varphi}(\tilde{\gamma}(-\mu))\Big) \cap C^-_{\nicefrac{5}{8}} \subset \subset \Reps$. Since it also lies below the graph of $f$ and $\sigma_0$ is future inextendible in $M$ it follows that it has a limit point on the graph of $f$.\footnote{Although it is not needed for the remainder of the argument, one can show that $\lim_{s \to 0} (\iota \circ \sigma_0)(s) = \tilde{p}$ as expected. For this let $\tilde{O} \subseteq \tilde{U}$ be a neighbourhood of $\tilde{p}$ and let $-\mu < -\hat{\mu} <0$ be so small that $$ \tilde{\varphi}^{-1} \Big(\overline{ \Big(C^+_{\nicefrac{5}{8}} + \tilde{\varphi}(\tilde{\gamma}(-\hat{\mu}))\Big) \cap C^-_{\nicefrac{5}{8}}}\Big) \subseteq \tilde{O} \;.$$ We can now repeat the construction of the causal homotopy $\Gamma_{\mu_1}$ but with $\mu$ replaced by $\hat{\mu}$. Note that $\mu_0$ will also be replaced by a $-\mu_0 < -\hat{\mu}_0 < 0$. Then the same line of reasoning leads to $(\iota \circ \sigma_0) \big([-\hat{\mu}_0,0)\big) \subseteq \tilde{\varphi}^{-1}\Big(\overline{\Big(C^+_{\nicefrac{5}{8}} + \tilde{\varphi}(\tilde{\gamma}(-\hat{\mu}))\Big) \cap C^-_{\nicefrac{5}{8}}}\Big) \subseteq \tilde{O}$, which shows $\lim_{s \to 0} (\iota \circ \sigma_0)(s) = \tilde{p}$.} Note that $\sigma_0|_{[-\mu_0, 0)}$ lies in the $\{u,v\}$-plane as desired.
\newline

\textbf{\underline{Step 2.2:}} We construct a null curve $[v_0, 0)  \ni s \overset{\tau}{\mapsto} (u_0, s, \frac{\pi}{2}, \pi)$, where $-\infty < u_0 < 0$, such that $\lim_{s \to 0} (\iota \circ \tau)(s) \in \tilde{M}$ exists.

Note that this step serves two purposes. On the one hand it is convenient for the construction in Step 3 to work with a radially outgoing null geodesic. On the other hand we have ensured that even if $0 = \lim_{s \to 0} u(\sigma_0(s)) = \lim_{s \to 0} u(\gamma(s))$ we can also find a curve that `leaves $M$ through $\{v= 0\} \cap \{-\infty < u < 0\}$'\footnote{The precise definition of `a curve $\sigma : [-1,0) \to M$ leaving $M$ through $\{v = 0\} \cap \{-\infty < u < 0\}$' is that $\lim_{s \to 0} v(\sigma(s)) = 0$, $\lim_{s \to 0} u (\sigma(s)) \in (-\infty, 0)$, and $\lim_{s \to 0} (\iota \circ \sigma)(s)$ exists in $\tilde{M}$.}, i.e., one cannot extend only through $\{v = 0\} \cap \{ u =0\}$ without extending at the same time through a bit of $\{v = 0\} \cap \{-\infty < u < 0\}$.

The first part of the argument is analogous to the one in Step 2 of the proof of Theorem \ref{ThmFLRWPast}. For $-\mu_0 < -\mu_1 <0$ we define a causal homotopy $\Gamma_{\mu_1} : [-\mu_0, -\mu_1] \times [-\mu_0, -\mu_1] \to M$  of $\sigma_0|_{[-\mu_0, -\mu_1]}$ with fixed endpoints by
\begin{equation*}
\Gamma_{\mu_1}(s ; \lambda) = \begin{cases} \big( u(\sigma_0(-\mu_0)), v(\sigma_0(2s + \mu_0)), \frac{\pi}{2}, \pi\big) \qquad \qquad &\textnormal{ for } s \in [-\mu_0, \frac{-\mu_0 + \lambda}{2}] \\
\big(u(\sigma_0(2s - \lambda)), v(\sigma_0(\lambda)), \frac{\pi}{2}, \pi\big) &\textnormal{ for } s \in [\frac{-\mu_0 + \lambda}{2}, \lambda] \\
\sigma_0(s) &\textnormal{ for } s \in [\lambda , - \mu_1] \;.
\end{cases}
\end{equation*}
See also Figure \ref{FigHomotopyNull}. By Lemma \ref{LemCausalHomotopy} $\iota \circ \Gamma_{\mu_1}$ maps into $\tilde{\varphi}^{-1}\Big(\Big(C^+_{\nicefrac{5}{8}} + \tilde{\varphi}(\tilde{\gamma}(-{\mu}))\Big) \cap C^-_{\nicefrac{5}{8}}\Big) \subseteq \tilde{U}$ for all $-\mu_0 < -\mu_1 <0$. Defining $\tau(s) := \big(u(\sigma_0(-\mu_0)), v(\sigma_0(s)), \frac{\pi}{2}, \pi\big)$ for $s \in [-\mu_0, 0)$ we obtain that $\iota \circ \tau$ maps into $\tilde{\varphi}^{-1}\Big(\Big(C^+_{\nicefrac{5}{8}} + \tilde{\varphi}(\tilde{\gamma}(-{\mu}))\Big) \cap C^-_{\nicefrac{5}{8}}\Big) \subseteq \tilde{U}$ and has a limit point on the graph of $f$. We can thus set $v_0 := v(\sigma_0(-\mu_0))$ and $u_0 := u(\sigma_0(-\mu_0))$.
\begin{figure}[h]
\centering
 \def\svgwidth{7cm}
   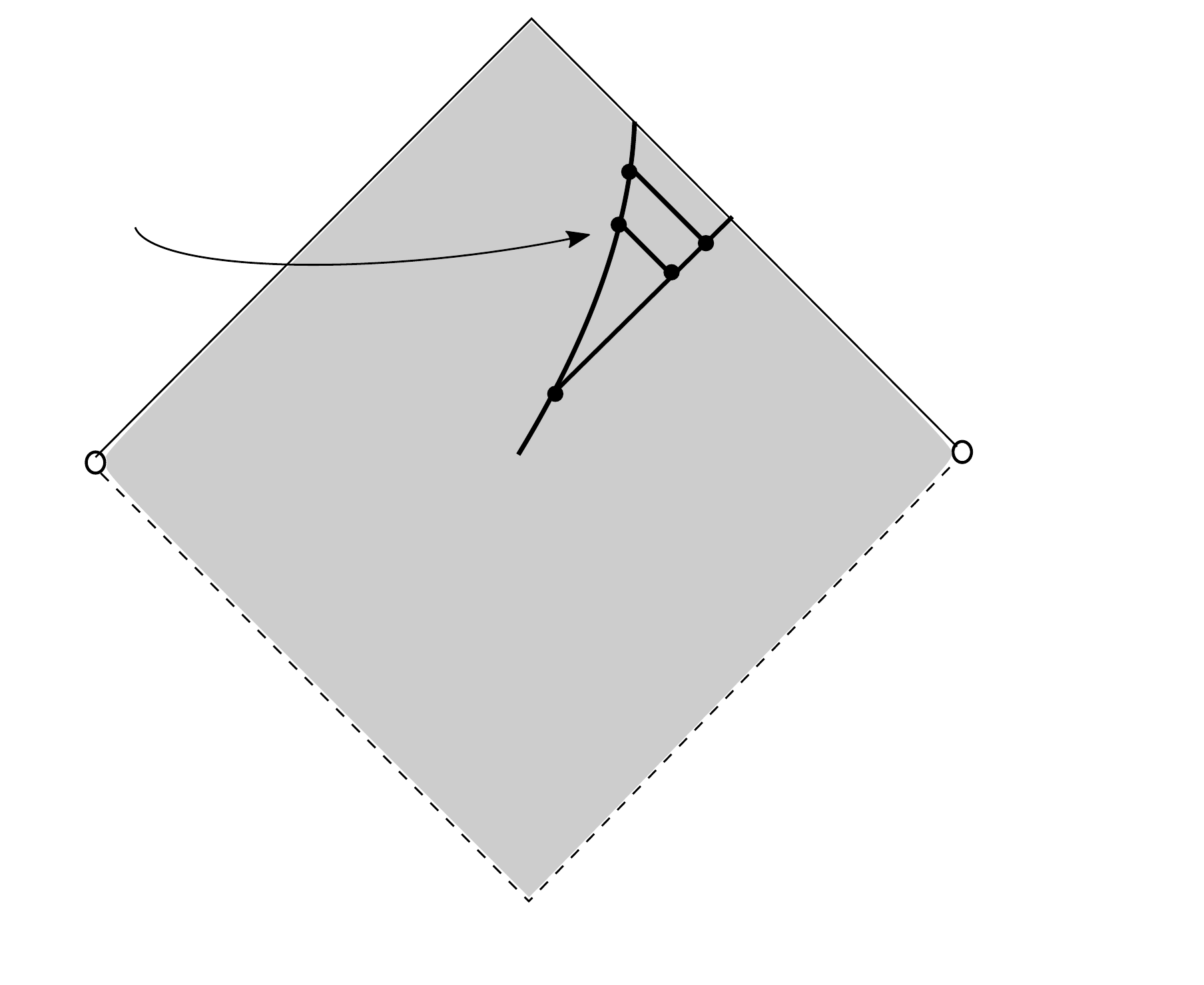
      \caption{The homotopy in the $\{u,v\}$-plane.} \label{FigHomotopyNull}
\end{figure}
\newline

\underline{\textbf{Step 3:}}  Let $B_\lambda (u_0, \frac{\pi}{2}, \pi) \subseteq (-\infty, 0) \times (0, \pi) \times (0, 2\pi)$ be the coordinate ball of radius $\lambda >0$ around $(u = u_0, \theta = \frac{\pi}{2}, \varphi = \pi)$. We show that $\lambda >0$ can be chosen so small that\footnote{Note that this notation unfortunately changes the ordering of the coordinates: by $[v_0, 0) \times B_\lambda(u_0, \frac{\pi}{2}, \pi)$ we denote the set of all $(u,v, \theta, \varphi) \in (-\infty, 0) \times (-\infty, 0) \times (0, \pi) \times (0, 2\pi)$ such that $(u, \theta, \varphi) \in B_\lambda (u_0, \frac{\pi}{2}, \pi)$ and $v \in [v_0, 0)$.}
\begin{equation*}
\begin{split}
&\iota \Big([v_0, 0) \times B_\lambda(u_0, \frac{\pi}{2}, \pi)\Big) \subseteq \tilde{U} \\
& \frac{\partial}{\partial x^\mu} = \frac{\partial \tilde{x}^\alpha}{\partial x^\mu} \frac{\partial}{\partial \tilde{x}^\alpha} \qquad \textnormal{ with } \qquad \Big{|}\frac{\partial \tilde{x}^\alpha}{\partial x^\mu}\Big{|} \leq C
\end{split}
\end{equation*}
for some $C \geq 0$. Here, $x^\mu$ denotes the set of coordinates $(u,v, \theta, \varphi)$ and $\tilde{x}^\alpha$ denotes the set of coordinates induced by $\tilde{\varphi} \circ \iota$.
\newline

We begin by recalling from Step 2.2 that $\tilde{\tau}\big([v_0, 0)\big) $ is compactly contained in $\tilde{U}$, where $\tilde{\tau} := \iota \circ \tau$. Using the continuity of $\iota$ we can choose $\lambda_1 >0$ so small that $\iota\Big( [v_0, \frac{19}{20} v_0] \times B_{\lambda_1}(u_0, \frac{\pi}{2}, \pi) \Big) \subset \subset \tilde{U}$. We also note that it follows from \eqref{EqChristoffelNullSing} that
\begin{equation}\label{EqParallelTAlongV}
\nabla_{\partial_v} \Big( \frac{1}{\Omega^2} \partial_v\Big) = 0\;, \qquad \quad \nabla_{\rd_v} \rd_u = 0\;, \qquad \quad \nabla_{\rd_v} \Big( \frac{1}{r} \rd_\theta\Big) = 0\;, \qquad \quad \nabla_{\rd_v} \Big(\frac{1}{r \sin \theta} \rd_\varphi \Big) = 0 \;.
\end{equation}
We define $e_1 := \frac{1}{r} \rd_\theta$, $e_2:= \frac{1}{r \sin \theta} \rd_\varphi$, $e_3 := \rd_u$, $e_4 := \frac{1}{\Omega^2} \rd_v$, which is thus a parallel frame field along the integral curves of $\partial_v$. We clearly have
\begin{equation}
\label{EqRelY}
\frac{\partial}{\rd x^\mu} = Y_\mu^{\; \; \nu} e_\nu \qquad \textnormal{ with } \qquad |Y_\mu^{\; \; \nu}| \leq C_0
\end{equation}
for some $C_0 > 0$. By compactness we have
\begin{equation}
\label{EqRelXTilde}
\iota_* e_\mu = \tilde{X}_\mu^{\; \; \alpha} \frac{\rd}{\rd \tilde{x}^\alpha} \qquad \textnormal{ with } \qquad |\tilde{X}_{\mu}^{\; \; \alpha} | \leq C_1 \qquad \textnormal{ on } \qquad \iota\Big([v_0, \frac{19}{20} v_0] \times B_{\lambda_1}(u_0, \frac{\pi}{2}, \pi)\Big) 
\end{equation}
for some $C_1 > 0$. For $\underline{x} = (u, \theta, \varphi) \in B_{\lambda_1}(u_0, \frac{\pi}{2}, \pi)$ let $v(\underline{x}) := \sup\big{\{} \hat{v} \in [v_0,0) \; | \; \iota\big([v_0, \hat{v}) \times \{\underline{x}\} \big) \subseteq \tilde{U} \big{\}}$.
\newline

\underline{\textbf{Step 3.1:}} We show that on $\bigcup_{\underline{x} \in B_{\lambda_1}(u_0, \frac{\pi}{2}, \pi)} \Big(\big[v_0, v(\underline{x})\big) \times \{\underline{x}\}\Big) \subseteq [v_0, 0) \times B_{\lambda_1}(u_0, \frac{\pi}{2}, \pi)$ we have $\Big|\frac{\rd \tilde{x}^\alpha}{\rd x^\mu} \Big| \leq \hat{C}$.
\newline
 
To show this let $\ux  \in B_{\lambda_1}(u_0, \frac{\pi}{2}, \pi)$ and let $\tau_{\ux} : [v_0, v(\ux)) \to M$, $\tau_{\ux} (s) = (s, \ux)$ be the outgoing null geodesics. Then $\tilde{\tau}_{\ux} := \iota \circ \tau_{\ux}$ is a causal curve mapping into $\tilde{U}$ and since $\tilde{x}_0$ is a time function on $\tilde{U}$ we can reparametrise $\tilde{\tau}_{\ux}$ by $\tilde{x}_0$ to obtain, in the $\tilde{x}^\alpha$ coordinates, the curve
\begin{equation*}
\begin{split}
\tilde{\tau}_{\mathrm{rep}, \ux} : (-\varepsilon_0, \varepsilon_0) \supseteq I_{\ux} &\to (-\varepsilon_0, \varepsilon_0) \times (-\varepsilon_1, \varepsilon_1)^3 \\
\tilde{\tau}_{\mathrm{rep}, \ux}(s) &= \big(s, \overline{\tilde{\tau}}_{\mathrm{rep}, \ux}(s)\big) \;.
\end{split}
\end{equation*}
Clearly we have $|I_{\ux}| \leq 2 \varepsilon_0$ and, as in \eqref{EqUniformBoundCausalCurve}, we have $||\dot{\tilde{\tau}}_{\mathrm{rep}, \ux} ||_{\R^4} \leq \frac{8}{5}$, both uniform in $\ux \in B_{\lambda_1}(u_0, \frac{\pi}{2}, \pi)$.
It now follows from Lemma \ref{LemBoundParallelTransport} that the parallel transport map $P_{\tilde{\tau}_{\mathrm{rep}, \ux}} (s_0, s_1) : T_{\tilde{\tau}_{\mathrm{rep}, \ux}(s_0)} \tilde{U} \to T_{\tilde{\tau}_{\mathrm{rep}, \ux}(s_1)} \tilde{U}$ is uniformly bounded in the $\tilde{x}^\alpha$-coordinates by a constant $C_2 >0$ , independent of $\ux \in B_{\lambda_1}(u_0, \frac{\pi}{2}, \pi)$. Since $e_\mu$ is parallel along $\tau_{\ux}$ and since parallel transport commutes with isometries we thus obtain 
\begin{equation}
\label{EqIsoPaTComm}
\big(\iota_* e_\mu)|_{\tilde{\tau}_{\mathrm{rep}, \ux} (s_1)} = \iota_* \Big( P_{\tau_{\mathrm{rep}, \ux}} (s_0, s_1) \big(e_\mu|_{\tau_{\mathrm{rep}, \ux}(s_0)}\big)\Big) = P_{\tilde{\tau}_{\mathrm{rep}, \ux}} (s_0, s_1) \Big( \big(\iota_* e_\mu\big)|_{\tilde{\tau}_{\mathrm{rep}, \ux}(s_0)}\Big) \;,
\end{equation}
where we have denoted $\iota^{-1} \circ \tilde{\tau}_{\mathrm{rep}, \ux}$ by $\tau_{\mathrm{rep}, \ux}$. It now follows from \eqref{EqRelXTilde} together with \eqref{EqIsoPaTComm} that
\begin{equation}
\label{EqPushForwardEMuBounded}
||\big( \iota_* e_\mu\big)|_{\tilde{\tau}_{\mathrm{rep}, \ux}} (s_1) ||_{\R^4} \leq 2 C_1 \cdot C_2  \quad \textnormal{ for all } s_1 \in I_{\ux}\;.
\end{equation} 
Moreover, we have $$\frac{\rd \tilde{x}^\alpha}{\rd x^\mu} \frac{\rd}{\rd \tilde{x}^\alpha} = \iota_* \Big( \frac{\rd}{\rd x^\mu}\Big) = \iota_* \Big( Y_{\mu}^{\; \; \nu} e_\nu\Big) = \iota_* \big( Y_{\mu}^{\; \; \nu}\big)  \cdot \big(\iota_* e_{\nu}\big) \;, $$
and thus $\frac{\rd \tilde{x}^\alpha}{\rd x^\mu} = \iota_*(Y_{\mu}^{\; \; \nu}) \cdot (\iota_* e_\nu)^\alpha$, which, together with \eqref{EqRelY} and \eqref{EqPushForwardEMuBounded} proves the claim in Step 3.1.
\newline

\underline{\textbf{Step 3.2:}} We now choose $0 < \lambda < \lambda_1$ so small that $$\overline{B_{2 \hat{C} \lambda}(\tilde{\tau})} := \Big{\{} \tilde{x} = (\tilde{x}_0, \ldots, \tilde{x}_3) \in \Reps \simeq \tilde{U} \; \Big| \; d\Big( \tilde{x}, (\tilde{\varphi} \circ \tilde{\tau})\big([v_0, 0)\big)\Big) \leq 2\hat{C} \lambda \Big{\}} $$
is compactly contained in $\tilde{U}$. Here, $d(\cdot, \cdot)$ denotes the Euclidean coordinate distance function  on $\Reps = (-\varepsilon_0, \varepsilon_0) \times (-\varepsilon_1, \varepsilon_1)^3$. Let $$J =\Big{\{} \hat{v} \in (v_0, 0) \; \Big| \; \iota\Big( [v_0, \hat{v}) \times \overline{B_{\lambda}(u_0, \frac{\pi}{2}, \pi)}\Big) \subseteq \overline{B_{2 \hat{C} \lambda}(\tilde{\tau})} \Big{\}} \;. $$
We show by continuity that $J = (v_0, 0)$. By the choice of $0 < \lambda < \lambda_1$ we have $\iota\Big( [v_0, \frac{19}{20} v_0] \times \overline{B_{\lambda}(u_0, \frac{\pi}{2}, \pi)}\Big) \subseteq \tilde{U}$. Thus, by Step 3.1 this gives $|| \iota_* \frac{\rd}{\rd x^\mu} ||_{\R^4} \leq 2 \hat{C}$ on $\iota\Big( [v_0, \frac{19}{20} v_0] \times \overline{B_{\lambda}(u_0, \frac{\pi}{2}, \pi)}\Big) $. Thus integrating the integral curves of $\iota_* \frac{\partial}{\rd u}$, $\iota_* \frac{\rd}{\rd \theta}$, $\iota_* \frac{\rd}{\rd \varphi}$ from $\tilde{\tau}\big( [v_0, \frac{19}{20} v_0]\big)$ we obtain that we have in fact $$\iota\Big( [v_0, \frac{19}{20} v_0] \times \overline{B_{\lambda}(u_0, \frac{\pi}{2}, \pi)}\Big)  \subseteq \overline{B_{2 \hat{C} \lambda}(\tilde{\tau})} \;.$$
Thus, $J$ is non-empty. To show the openness of $J$ let $\hat{v} \in J$. Then by the continuity of $\iota$ and the openness of $\tilde{U}$ there exists $\varepsilon >0$ such that  $\iota\Big( [v_0, \hat{v} + \varepsilon) \times \overline{B_{\lambda}(u_0, \frac{\pi}{2}, \pi)}\Big) \subseteq \tilde{U}$. It then follows again from Step 3.1 that on $\iota\Big( [v_0, \hat{v} + \varepsilon) \times \overline{B_{\lambda}(u_0, \frac{\pi}{2}, \pi)}\Big) $ we have $|| \iota_* \frac{\rd}{\rd x^\mu} ||_{\R^4} \leq 2 \hat{C}$ and the same argument as before shows then that $\iota\Big( [v_0, \hat{v} + \varepsilon) \times \overline{B_{\lambda}(u_0, \frac{\pi}{2}, \pi)}\Big) \subseteq \overline{B_{2 \hat{C} \lambda}(\tilde{\tau})}$. The closedness of $J$ is immediate by its definition. We thus obtain $J = (v_0, 0)$, which, together with Step 3.1 concludes the proof of Step 3.
\newline

\underline{\textbf{Step 4:}} We define a family $\sigma_{v_1}$ of loops in $[v_0, 0) \times B_\lambda (u_0, \frac{\pi}{2}, \pi)$ and show that the assumption of $\iota : M \hookrightarrow \tilde{M}$ being a $C^{0,1}_{\loc}$-extension implies that the holonomy along those loops is uniformly bounded.
\newline

Let $v_1 \in (v_0, 0)$. In $(v,u, \theta, \varphi)$-coordinates on $M$ we define the following points
\begin{equation*}
\begin{aligned}
&A := (v = v_0, u= u_0, \frac{\pi}{2}, \pi + \frac{\lambda}{2}) \qquad \qquad &&B:= (v = v_0, u = u_0, \frac{\pi}{2}, \pi - \frac{\lambda}{2}) \\
&C(v_1):=( v=v_1, u=u_0, \frac{\pi}{2}, \pi - \frac{\lambda}{2}) \qquad \qquad &&D(v_1) := (v = v_1, u = u_0, \frac{\pi}{2}, \pi + \frac{\lambda}{2}) \;,
\end{aligned}
\end{equation*}
and also the following curves:
\begin{equation*}
\begin{aligned}
&\overrightarrow{AB} : [0, \lambda] \to M\;, \qquad \qquad &&\overrightarrow{AB}(s) :=(v_0, u_0, \frac{\pi}{2}, \pi + \frac{\lambda}{2} - s) \\
&\overrightarrow{BC(v_1)} : [v_0, v_1] \to M \;, \qquad \qquad &&\overrightarrow{BC(v_1)}(s) := (s, u_0, \frac{\pi}{2}, \pi - \frac{\lambda}{2}) \\
&\overrightarrow{C(v_1)D(v_1)} : [0, \lambda] \to M\;, \qquad \qquad &&\overrightarrow{C(v_1)D(v_1)}(s) := (v_1, u_0, \frac{\pi}{2}, \pi - \frac{\lambda}{2} + s ) \\
&\overrightarrow{D(v_1) A} : [v_0, v_1] \to M \;, \qquad \qquad &&\overrightarrow{D(v_1)A}(s) := (v_1 + v_0 - s, u_0, \frac{\pi}{2}, \pi + \frac{\lambda}{2}) \;.
\end{aligned}
\end{equation*}
See also Figure \ref{FigLoopsNull}.
\begin{figure}[h]
\centering
 \def\svgwidth{7cm}
   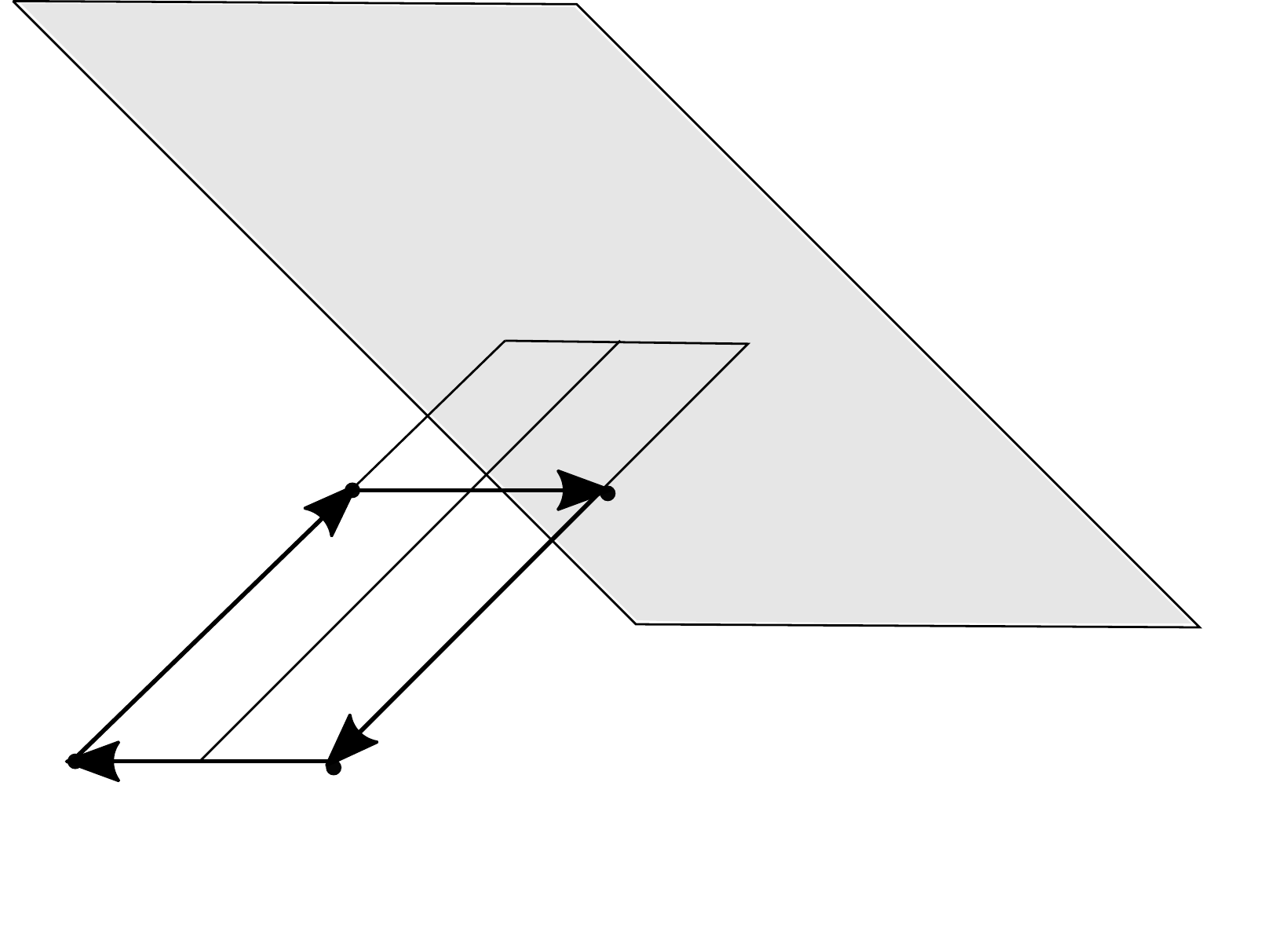
      \caption{The family of loops $\sigma_{v_1}$ in the $\{v, \varphi\}$-plane.} \label{FigLoopsNull}
\end{figure}
Let us denote by $\sigma_{v_1} := \overrightarrow{D(v_1)A}\; * \; \overrightarrow{C(v_1) D(v_1)} \; * \; \overrightarrow{BC(v_1)} \; * \; \overrightarrow{AB} $ the concatenation of the four curve segments and, as usual, by an over-set tilde the composition with the isometric embedding $\iota$ into $\tilde{M}$. We clearly have that the size of the domains of $\widetilde{\overrightarrow{D(v_1)A}}$, $\widetilde{\overrightarrow{C(v_1) D(v_1)} }$, $\widetilde{\overrightarrow{BC(v_1)} }$, $\widetilde{\overrightarrow{AB}}$ is uniformly bounded for $v_1 \in (v_0, 0)$, and, since the tangent vectors of the curves are either $\iota_*\big(\partial_v\big)$ or $\iota_*\big(\partial_\varphi\big)$ we have by Step 3 that the tangent vectors in the $\tilde{x}^\alpha$-coordinates are also uniformly bounded in $v_1$. It thus follows from Lemma \ref{LemBoundParallelTransport} that the parallel transport map $P_{\tilde{\sigma}_{v_1}} : T_{\iota(A)} \tilde{U} \to T_{\iota(A)} \tilde{U}$ along the loop $\tilde{\sigma}_{v_1}$ is also uniformly bounded in $v_1 \in (v_0, 0)$ with respect to the Euclidean norm induced by the $\tilde{x}^\alpha$-coordinates on $\tilde{U}$.
\newline

\underline{\textbf{Step 5:}} We show that the parallel transport map $P_{{\sigma}_{v_1}} : T_{A} M \to T_A M$ along the loop $\sigma_{v_1}$ is unbounded in the Euclidean norm induced by the coordinates $(v,u, \theta, \varphi)$ as $v_1 \to 0$. Since parallel transport commutes with isometries this is in contradiction to Step 4 and thus concludes the proof of Theorem \ref{ThmOneNullSing}.
\newline

We denote by $P_{\overrightarrow{AB}} : T_AM \to T_BM$ the parallel transport map along $\overrightarrow{AB}$, and similarly $P_{\overrightarrow{BC(v_1)}}$, $P_{\overrightarrow{C(v_1)D(v_1)}}$, and $P_{\overrightarrow{D(v_1)A}}$. As usual we denote by $||P_{\overrightarrow{AB}}||_{\R^4}$ the operator norm of the matrix representation of $P_{\overrightarrow{AB}}$ with respect to the basis $(\partial_u, \partial_v, \partial_\theta, \partial_\varphi)$ for $T_AM$ and $T_BM$, and analogously for the other parallel transport maps. We trivially have
\begin{equation}
\label{EqPTAB}
||P_{\overrightarrow{AB}}||_{\R^4} \leq C  \qquad \textnormal{ and } \qquad ||(P_{\overrightarrow{AB}})^{-1}||_{\R^4} \leq C \;.
\end{equation}
It also follows directly from \eqref{EqParallelTAlongV} (together with the continuous positive extension of $r$ and $\Omega^2$ to $\overline{Q}$) that we have
\begin{equation}
\label{EqPTNULL}
\begin{split}
||P_{\overrightarrow{BC(v_1)}}||_{\R^4} \leq C  \qquad &\textnormal{ and } \qquad ||(P_{\overrightarrow{BC(v_1)}})^{-1}||_{\R^4} \leq C \;, \\
||P_{\overrightarrow{D(v_1)A}}||_{\R^4} \leq C  \qquad &\textnormal{ and } \qquad ||(P_{\overrightarrow{D(v_1)A}})^{-1}||_{\R^4} \leq C 
\end{split}
\end{equation}
uniformly in $v_1 \in (v_0, 0)$. We now compute the parallel transport along $\overrightarrow{C(v_1)D(v_1)}$. It follows from \eqref{EqChristoffelNullSing} that \underline{for $\theta = \frac{\pi}{2}$ } we have
\begin{equation} \label{EqPTPHI}
\begin{aligned}
\nabla_{\rd_\varphi} \rd_u &= \frac{\rd_u r}{r} \, \rd_\varphi\;, \qquad \qquad &&\nabla_{\rd_\varphi} \rd_v = \frac{\rd_v r}{r} \, \rd_\varphi \;,\\
\nabla_{\rd_\varphi} \rd_\theta &=0 \;,\qquad \qquad &&\nabla_{\rd_\varphi} \rd_{\varphi} = \frac{2r \rd_v r}{\Omega^2} \, \rd_u + \frac{2r \rd_u r}{\Omega^2} \, \rd_v \;.
\end{aligned}
\end{equation}
By the assumption of the theorem there exists a $v_{\nicefrac{1}{2}} \in (v_0, 0)$ such that for $v_{\nicefrac{1}{2}} < v_1 <0$ we have $\rd_u r(u_0, v_1) <0$ and $\rd_v r(u_0, v_1) < 0$.  \textbf{The remaining computations are valid only for $\theta = \frac{\pi}{2}$ and $v \in (v_{\nicefrac{1}{2}}, 0)$. }

 We define the null vectors
$$\ell_{(u,v)} := \frac{\rd_v r}{\Omega^2} \rd_u + \frac{\rd_u r}{\Omega^2} \rd_v + \frac{\sqrt{\rd_u r \cdot \rd_v r}}{\Omega r} \rd_\varphi \qquad \textnormal{ and } \qquad \underline{\ell}_{(u,v)} := \frac{\partial_v r}{ \Omega^2} \rd_u + \frac{\rd_ur}{\Omega^2} \rd_v - \frac{\sqrt{\rd_u r \cdot \rd_v r}}{\Omega r} \rd_\varphi \;. $$
A direct computation using \eqref{EqPTPHI} gives $$\nabla_{\rd_\varphi} \ell_{(u,v)} = \sqrt{\frac{2m}{r} - 1} \cdot \ell_{(u,v)} \qquad \textnormal{ and } \qquad \nabla_{\rd_\varphi}  \underline{\ell}_{(u,v)} = - \sqrt{\frac{2m}{r} - 1}  \cdot \underline{\ell}_{(u,v)} \;, $$
where we have used $\sqrt{\frac{2m}{r} - 1} = \frac{2\sqrt{\rd_u r \cdot \rd_v r}}{\Omega}$, where $m := \frac{r}{2}\big( 1 + \frac{4 \rd_u r \cdot \rd_v r}{\Omega^2}\big)$ is the Hawking mass. We thus obtain\footnote{Together with $\nabla_{\rd_\varphi} \rd_\theta = 0$ and $\nabla_{\rd_\varphi} \big( - \partial_u r \cdot \partial_v + \partial_v r \cdot \partial_u\big) = 0$, which follows directly from \eqref{EqPTPHI}, this solves the parallel transport map along $\rd_\varphi$ for $\theta = \frac{\pi}{2}$ and $v \in (v_{\nicefrac{1}{2}}, 0)$.   Note that $-\rd_u r \cdot \rd_v + \rd_v r \cdot \rd_u$ is a spacelike vector, since $\rd_u r \cdot \rd_v r > 0$. }
\begin{equation}
\label{EqPTLLBAR}
\nabla_{\rd_\varphi}\Big(e^{-\sqrt{\frac{2m}{r} - 1} \cdot \varphi }\cdot \ell_{(u,v)} \Big)= 0 \qquad \textnormal{ and } \qquad \nabla_{\rd_\varphi}\Big(e^{\sqrt{\frac{2m}{r} - 1} \cdot \varphi }\cdot \underline{\ell}_{(u,v)}\Big) = 0 \;.
\end{equation}
Using $\rd_\varphi = \frac{r}{\sqrt{\frac{2m}{r} - 1}} \big(\ell_{(u,v)} - \underline{\ell}_{(u,v)}\big)$ and \eqref{EqPTLLBAR} we can now compute the parallel transport of $\rd_\varphi$ along $\overrightarrow{C(v_1)D(v_1)}$:
\begin{equation*}
\begin{split}
P_{\overrightarrow{C(v_1)D(v_1)}} \partial_\varphi &= \frac{r}{\sqrt{\frac{2m}{r} - 1}} \Big( e^{-\sqrt{\frac{2m}{r} - 1} \cdot \lambda} \cdot \ell_{(u,v)} - e^{\sqrt{\frac{2m}{r} - 1} \cdot \lambda} \cdot \underline{\ell}_{(u,v)}\Big) \\
&= \frac{r}{\sqrt{\frac{2m}{r} - 1}} \Big( \underbrace{\big[ e^{-\sqrt{\frac{2m}{r} - 1} \cdot \lambda} - e^{\sqrt{\frac{2m}{r} - 1} \cdot \lambda}\big]}_{= - 2 \sinh (\sqrt{\frac{2m}{r}-1} \cdot \lambda)} \cdot \big[ \frac{\rd_v r}{\Omega^2} \rd_u + \frac{\rd_u r}{\Omega^2} \rd_v\big] \\
&\qquad \qquad \qquad + \big[ e^{-\sqrt{\frac{2m}{r} - 1} \cdot \lambda} + e^{\sqrt{\frac{2m}{r} - 1} \cdot \lambda}\big] \cdot \frac{\sqrt{\frac{2m}{r} - 1}}{2r} \rd_\varphi\Big) \;.
\end{split}
\end{equation*}
Since we have $\frac{\sinh x}{x} \geq 1$ it follows that
\begin{equation}
\label{EqPPHIINFINITE}
||P_{\overrightarrow{C(v_1)D(v_1)}} \partial_\varphi||_{\R^4} \geq |2 r \lambda| \cdot \Big|\frac{ \rd_v r}{\Omega^2}\Big| \to \infty
\end{equation}
for $v_1 \to 0$ by assumption. 

Let now $X(v_1) :=\big((P_{\overrightarrow{AB}})^{-1} \circ (P_{\overrightarrow{BC(v_1)}})^{-1} \big) (\rd_\varphi) \in T_{A}M$. We thus have 
\begin{equation} 
\label{EqBound1}
||X(v_1)||_{\R^4} \leq ||(P_{\overrightarrow{AB}})^{-1} ||_{\R^4} \cdot ||(P_{\overrightarrow{BC(v_1)}})^{-1} ||_{\R^4} \underbrace{||\rd_\varphi||_{\R^4}}_{=1} \end{equation} 
and
\begin{equation*}
\begin{split}
\frac{||P_{\sigma_{v_1}} X(v_1)||_{\R^4}}{||X(v_1)||_{\R^4}} &= \frac{||(P_{\overrightarrow{D(v_1)A}} \circ P_{\overrightarrow{C(v_1)D(v_1)}} \circ P_{\overrightarrow{BC(v_1)}} \circ P_{\overrightarrow{AB}} )(X(v_1))||_{\R^4}}{||X(v_1)||_{\R^4}} \\
&\geq \frac{||(P_{\overrightarrow{D(v_1)A}} \circ P_{\overrightarrow{C(v_1)D(v_1)}})(\rd_\varphi)||_{\R^4}}{||(P_{\overrightarrow{AB}})^{-1} ||_{\R^4} \cdot ||(P_{\overrightarrow{BC(v_1)}})^{-1} ||_{\R^4} \cdot  ||\rd_\varphi||_{\R^4}} \\
&\geq \frac{||P_{\overrightarrow{C(v_1)D(v_1)}}(\rd_\varphi)||_{\R^4}}{||(P_{\overrightarrow{D(v_1)A}})^{-1}||_{\R^4} \cdot ||(P_{\overrightarrow{AB}})^{-1} ||_{\R^4} \cdot ||(P_{\overrightarrow{BC(v_1)}})^{-1} ||_{\R^4} \cdot ||\rd_\varphi||_{\R^4}} \\
&\geq \frac{||P_{\overrightarrow{C(v_1)D(v_1)}}(\rd_\varphi)||_{\R^4}}{C^3 \cdot  ||\rd_\varphi||_{\R^4}} \;, 
\end{split}
\end{equation*}
where we have used \eqref{EqBound1} in the first inequality, $||Ax|| \geq \frac{||x||}{||A^{-1}||}$ in the second, and  \eqref{EqPTAB} and \eqref{EqPTNULL} in the third. It now follows from  \eqref{EqPPHIINFINITE} that $||P_{\sigma_{v_1}}||_{\R^4} \to \infty$ for $v_1 \to 0$, which concludes Step 5.
\end{proof}

\subsection{Reissner-Nordstr\"om-Vaidya spacetimes} \label{SecRNV}

The Reissner-Nordstr\"om-Vaidya (RNV) spacetime $(M,g)$ is given by $M= \R \times (0,\infty) \times \mathbb{S}^2$ with canonical $(v,r)$-coordinates on the first two factors and 
\begin{equation}
\label{EqMetricRNVR}
g = -\Big(1 - \frac{2 \varpi (v)}{r} + \frac{e^2}{r^2}\Big) \, dv^2 + dv \otimes dr + dr \otimes dv + r^2 \, \mathring{\gamma} \;,
\end{equation} where $e>0$ and $\varpi : \R \to (0, \infty)$ is a smooth non-decreasing function, \cite{BonVai70}. Together with the Maxwell field $F := \frac{2e}{r^2} \, dv \wedge dr$ and $\rho := \frac{1}{r^2} \partial_v \varpi \geq 0$, and defining the stress-energy tensor $T^{\mathrm{em}}_{\mu \nu} = F_{\mu \lambda} F_{\nu}^{\; \; \lambda} - \frac{1}{2} g_{\mu \nu} F_{\lambda \rho} F^{\lambda \rho}$ of the electromagnetic field and the stress-energy tensor $ T^{\mathrm{dust}}_{\mu \nu} = \rho \cdot \partial_\mu v \partial_\nu v$ of dust, it solves the Einstein-Maxwell-null-dust equations
\begin{equation*}
\begin{aligned}
R_{\mu \nu} - \frac{1}{2} g_{\mu \nu} R &= 2(T^{\mathrm{em}}_{\mu \nu} + T^{\mathrm{dust}}_{\mu \nu} ) \\
dF &= 0 \;, \qquad \nabla^{\mu} F_{\mu \nu} = 0 \\
\nabla^\mu  T^{\mathrm{dust}}_{\mu \nu}  &= 0 \;, \qquad g^{-1}(dv, dv) = 0 \;.
\end{aligned}
\end{equation*}
A time orientation is fixed by stipulating that $-\partial_r$ is future directed. We assume $\varpi(v) = \varpi(\infty) - \beta v^{-p}$, where $\beta >0$, $p>1$, and $\varpi(\infty) > e >0$, and restrict our considerations to $v \geq v_0$, where $v_0 >0$ is so large that $\varpi(v_0) > e$. Thus, for late affine time $v$ the RNV spacetime we are considering models the continuous influx of null-dust into a sub-extremal Reissner-Nordstr\"om black hole decaying with a tail $\rho \sim v^{-(p+1)}$.   A Penrose diagram is given in Figure \ref{FigPenroseRNV}. 
\begin{figure}[h]
\centering
 \def\svgwidth{8cm}
   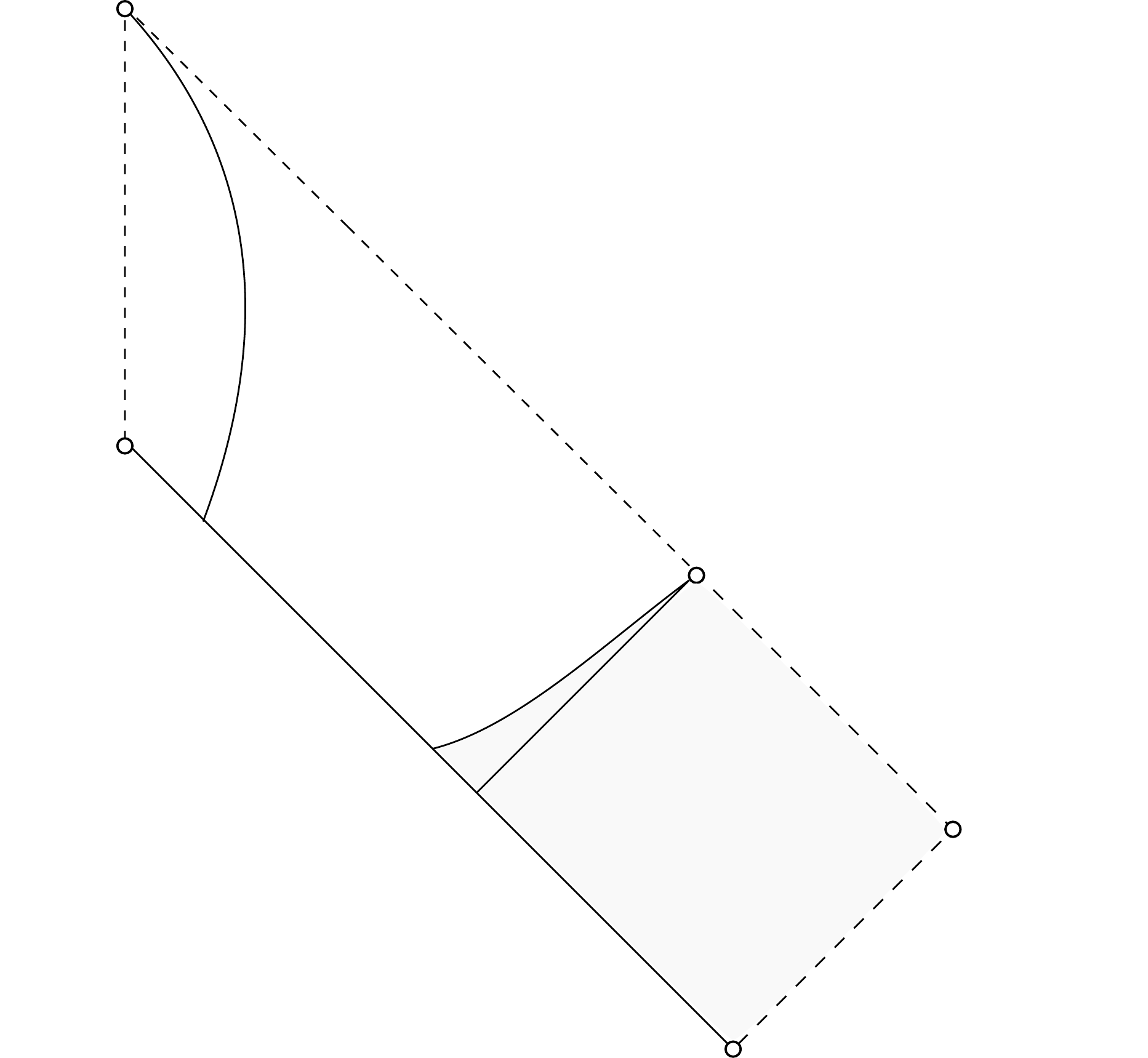
      \caption{Penrose diagram of the Reissner-Nordstr\"om-Vaidya spacetime.} \label{FigPenroseRNV}
\end{figure}
In the following we introduce spherically symmetric double null coordinates $(u,v)$ on the RNV spacetime, see also Appendix B of \cite{PoiIs90}. Such coordinates are in particular used to infer the  structure of the above Penrose diagram. Note however that in general no closed form solution exists for the second null coordinate $u$. We construct it by the method of characteristics, prescribing initial values on $\{v = v_0\}$.  We then focus on the region $\uh < u < u_T$, which is defined by the property that outgoing null rays are eventually trapped. One can in fact show that every outgoing null ray starting inside the black hole is eventually trapped, see Proposition \ref{PropTrapped}, and thus $u_T$ corresponds to the left endpoint of the domain of definition of the null coordinate $u$, cf.\ Figure \ref{FigPenroseRNV}. In Proposition \ref{PropRNV} we show that for any $\tilde{u} \in (\uh, u_T)$ the region $(\uh, \tilde{u}) \times (v_0, \infty)$ satisfies the assumptions of Theorem \ref{ThmOneNullSing}, and thus the part of $\CH$ which is covered by the null coordinate $u$ is $C^{0,1}_{\loc}$-inextendible in the precise sense of Theorem \ref{ThmOneNullSing}.  One can now construct another null coordinate $u$ in the same way with initial values prescribed on $\{ v = v_1 > v_0\}$ to cover a larger part of $\CH$. In this way it follows that all of $\CH$ is $C^{0,1}_{\loc}$-inextendible in the sense of Theorem \ref{ThmOneNullSing}.

We now begin. Let $f = 1 - \frac{2 \varpi(v)}{r} + \frac{e^2}{r^2}$. Then $\ell := - \frac{\rd}{\rd r}$ is future directed ingoing null and $\underline{\ell}:= \frac{\rd}{\rd v} + \frac{f}{2} \frac{\rd}{\rd r}$ is future directed outgoing null. We construct a null coordinate $u$ using the method of characteristics: on $\{v = v_0\}$ we set $u|_{\{v = v_0\}} = -r$ and extend it by requiring that it is constant along the integral curves of $\underline{\ell}$.\footnote{Note that a priori this does not yield a globally defined function -- and indeed the top left corner of the Penrose diagram will not be covered.} The integral curves of $\underline{\ell}$ are determined by \begin{equation} 
\label{EqEqIntegralCurves}
\dot{v} = 1 \quad \textnormal{ and } \quad \dot{r} = \frac{f}{2}\end{equation}
 and thus, since these are curves of constant $u$, we obtain
\begin{equation}
\label{EqVROnConstantU}
\frac{\partial r}{\partial v}\Big|_u = \frac{\dot{r}}{\dot{v}} = \frac{f}{2} \;.
\end{equation}
We write\footnote{Recall that we are restricting to $v \geq v_0$ such that $\varpi(v) > e$, thus $f$ has indeed two roots.} \begin{equation}
\label{EqFTwoDef}
f(v,r) = 1 - \frac{2\varpi(v)}{r} + \frac{e^2}{r^2} = \frac{1}{r^2}( r - r_+(v))(r - r_-(v))$$ with $$r_+(v) = \varpi(v) + \sqrt{\varpi(v)^2 - e^2} \quad \textnormal{ and } \quad r_-(v) = \varpi(v) - \sqrt{\varpi(v)^2 - e^2} \;.
\end{equation} 
Since $\varpi'(v) > 0$, we clearly have $r_+'(v) >0$ and moreover $r_-'(v) = \varpi'(v) - \frac{\varpi(v) \varpi'(v)}{\sqrt{\varpi(v)^2 - e^2}} < 0$. Defining $r_+(\infty)$ and $r_-(\infty)$ analogously with $\varpi(v)$ replaced by $\varpi(\infty)$, we thus obtain $r_+(v) \nearrow r_+(\infty)$ and $r_-(v) \searrow r_-(\infty)$ for $v \to \infty$.
\begin{figure}[h]
\centering
 \def\svgwidth{6cm}
   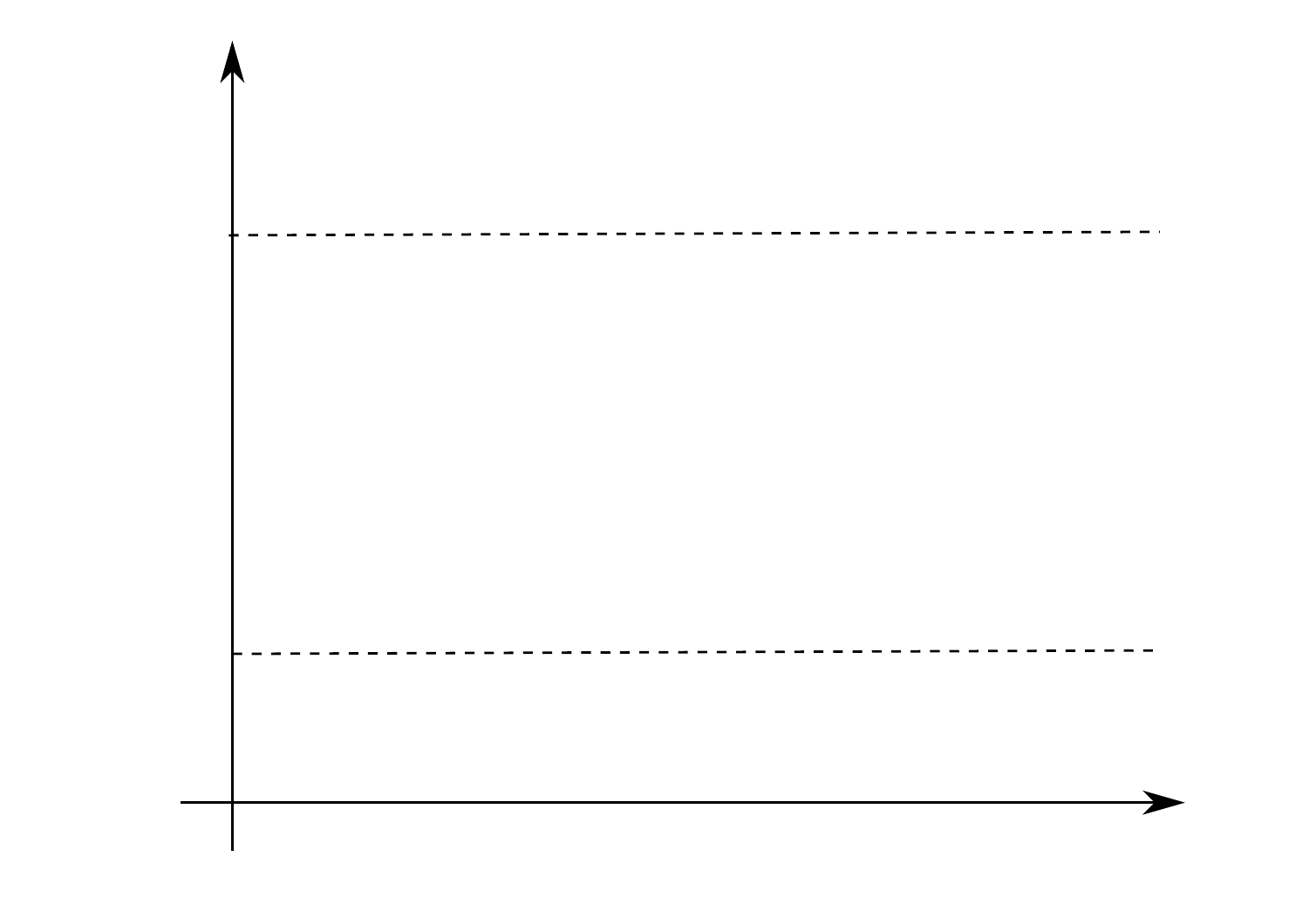
      \caption{Functional dependence of roots $r_+(v)$ and $r_-(v)$ on $v$.} \label{FigRV}
\end{figure}
Defining the regions
\begin{equation*}
\begin{aligned}
I&:= \{ (v,r) \in [v_0, \infty) \times (0, \infty) \; | \; r_+(v) < r\} \\
II&:=\{ (v,r) \in [v_0, \infty) \times (0, \infty) \; | \; r_-(v) < r< r_+(v) \} \\
III &:= \{ (v,r) \in [v_0, \infty) \times (0, \infty) \; | \; 0 < r<r_-(v) \} \;,
\end{aligned}
\end{equation*}
we thus have $f(v,r) >0$ on $I \cup III$ and $f(v,r) <0$ and $II$. Furthermore, we have $f(v,r) = 0$ on the transition curves $v \mapsto (v, r_-(v))$ and $v \mapsto (v, r_+(v))$, which are the apparent horizons.

\textbf{Claim 1:} The integral curves of $\underline{\ell}$ exist for all $v \geq v_0$.
\begin{proof}
For the ODE \eqref{EqEqIntegralCurves} to break down in finite affine time $v \geq v_0$ we must either have that $r$ goes to zero or infinity in finite affine time. Since we have $\frac{\rd r}{\rd v}\Big|_u = \frac{f}{2} > 0$ in $III$, we cannot have $r \to 0$. Moreover, it follows that $r$ is uniformly bounded from below along each integral curve, which directly implies that $f$ is uniformly bounded from above, and thus $r$ cannot go to infinity in finite affine time.
\end{proof}

Recall that $r(u,v)$ is the value of $r$ at affine time $v$ along the integral curve of $\underline{\ell}$ that passes through $(v = v_0, r = -u)$.

\textbf{Claim 2:}  If $r(u,v_1) > r_+(\infty)$  for some $v_1 \geq v_0$, then $r(u,v) \to \infty$ for $v \to \infty$.
\begin{proof}
Note first that by \eqref{EqVROnConstantU} and since $r(u,v_1) > r_+(\infty)$ we have that $r(u,v)$ is monotonically growing in $v$.   Let $r_1 := r(u,v_1)$. Then for $r \geq r_1$ and $v \geq v_1$ we have
\begin{equation*}
f(v,r) = \frac{1}{r^2}(r - r_+(v))(r-r_-(v)) \geq \frac{1}{r^2} (r - r_+(\infty))( r - r_+(\infty)) = (1 - \frac{r_+(\infty)}{r})^2 \geq (1 - \frac{r_+(\infty)}{r_1})^2 > 0 \;.
\end{equation*} 
It then follows from \eqref{EqVROnConstantU} that $r(v,u) \to \infty$ for $v \to \infty$.
\end{proof}

\textbf{Claim 3:} If $\frac{\rd r}{\rd v}\Big|_u (u,v_1) <0$ (which is the case if, and only if, $r(u,v_1) \in (r_-(v_1), r_+(v_1))$), then $\frac{\rd r}{\rd v}\Big|_u (u,v) <0$ for all $v \geq v_1$ and $\lim_{v \to \infty} r(u,v) = r_-(\infty)$.
\begin{proof}
To prove the first part, assume to the contrary that\footnote{We will adopt from now on the convention that $\rd_v = \frac{\rd}{\rd v}\Big|_u$ is the partial derivative in the $(u,v)$-coordinate system.} $\rd_vr (u, v_2) \geq 0 $ for some $v_2 > v_1$. Let $v_3 \in (v_1, v_2)$ be the smallest value such that $\rd_v r (u,v_3) = 0$. This implies $r_+(v) > r(u,v) > r_-(v)$ for $v \in (v_1, v_3)$ and $r(u,v_3) = r_-(v_3)$. But we also have $\rd_v r(u,v_3) = 0 > \rd_v r_-(v_3)$, which implies the contradiction $ r_- (v) > r(u,v)$ for $v < v_3$ close enough to $v_3$.

For the second part note that the first part implies that $r(u,v)$ is strictly monotonically decreasing for $v \geq v_1$ and also $r(u,v) \in (r_-(v), r_+(v))$ for all $v \geq v_1$. We cannot have $\lim_{v \to \infty} r(u,v) = r_0 > r_-(\infty)$, since $\lim_{v \to \infty} f(v,r) <0$ for all $r \in (r_-(\infty), r_+(\infty))$, and thus for late $v$ we would have that $\partial_v r(u,v) \leq c <0$, which is a contradiction. It thus follows that $\lim_{v \to \infty} r(u,v) = r_-(\infty)$.
\end{proof}

Let us now define $$\uh := \sup \{ u \in (-\infty, 0) \; | \; \lim_{v \to \infty} r(u', v) = \infty \quad \forall u' \leq u\}\;.$$
By Claim 2 we have $-\infty < \uh$. Choosing $u \in (-r_+(v_0), -r_-(v_0)) $ and using Claim 3 we see that $\lim_{v \to \infty} r(u,v) = r_-(\infty)$. This shows $\uh < 0$. The hypersurface $u = \uh$ is the event horizon, cf.\ Figure \ref{FigPenroseRNV}.\footnote{Although not needed in the following, note that it is also obvious that the domain of definition of $u$ covers all of region $I$, since tracing backwards the integral curves of $\underline{\ell}$ starting in $I$ they have to stay in region $I$ by Claim 3, have uniformly bounded velocity and thus intersect $\{v = v_0\}$.}

\textbf{Claim 4}: We have $\lim_{v \to \infty} r(\uh, v) = r_+(\infty)$.
\begin{proof}
By Claim 3 we must have $\rd_v r (u,v) \geq 0$ for all $u < \uh$, $v \geq v_0$. This, together with $\lim_{v \to \infty} r(u,v) = \infty$ for $u < \uh$ implies $r(u,v) \geq r_+(v)$ for all $u < \uh$, $v \geq v_0$. By continuity we also have $r(\uh,v) \geq r_+(v)$ for all $v \geq v_0$. We thus obtain $\liminf_{v \to \infty} r(\uh, v) \geq r_+(\infty)$.

We now show $\limsup_{v \to \infty} r(\uh, v) \leq r_+(\infty)$. Assuming to the contrary that $\limsup_{v \to \infty} r(\uh, v) > r_+(\infty)$, there exists $v_1 \geq v_0$ such that $r(\uh, v_1) > r_+(\infty)$. By continuity there exists $\delta >0$ such that $r(u,v_1) > r_+(\infty)$ for all $u \in [\uh, \uh + \delta]$. But then Claim 2 shows $\lim_{v \to \infty} r(u,v) = \infty$ for all $u \in [\uh, \uh + \delta]$, in contradiction to the definition of $\uh$.
\end{proof}

Taking a $\rd_u$ derivative of \eqref{EqVROnConstantU} we obtain 
\begin{equation}
\label{EqDRU}
\rd_v (\rd_u r) = \frac{1}{r^2} \big( \varpi(v) - \frac{e^2}{r}\big)\cdot \rd_u r \;.
\end{equation}
Since we have $\rd_ur (u, v_0) = -1$, it follows directly that $\rd_u r (u,v) <0$ for all $u \in (-\infty, 0)$ and $v \geq v_0$, since $ \rd_ur$ cannot pass through zero -- since then the solution would be identically zero by the uniqueness of solutions of \eqref{EqDRU}.

We also define the asymptotic surface gravity of the event horizon by 
\begin{equation}
\label{EqDefSGE}
\kappa_+(\infty) := \frac{r_+(\infty) - r_-(\infty)}{2 r_+^2(\infty)} = \frac{1}{2}\rd_r f(\infty, r_+(\infty)) = \frac{1}{r_+^2(\infty)} \big(\varpi(\infty) - \frac{e^2}{r_+(\infty)}\big) \;,
\end{equation}
where the last two equalities follow easily from \eqref{EqFTwoDef}.

\textbf{Claim 5:} We have $\limsup_{v \to \infty} r(u,v) < r_+(\infty)$ for all $u > \uh$.
\begin{proof}
We use the red-shift effect along the event horizon for the proof. Assume this is not the case. Then there exists $u_1 > \uh$ with $\limsup_{v \to \infty} r(u_1, v) \geq r_+(\infty)$. Since $\partial_u r <0$ we have $r(u_1,v) < r(\uh, v)$ for all $v \geq v_0$ and thus $\limsup_{v \to \infty} r(u_1,v) \leq r_+(\infty)$. This gives $\limsup_{v \to \infty} r(u_1, v) = r_+(\infty)$.  Moreover, we must have $\rd_v r(u_1, v) \geq 0$, since otherwise we would obtain a contradiction from Claim 3. This shows $\lim_{v \to \infty}r(u_1, v) = r_+(\infty)$. Now using $\rd_u r <0$ again, which implies $r(u_1, v) < r(u,v) < r(\uh, v)$ for all $u \in (\uh, u_1)$ and $v \geq v_0$, we obtain $\lim_{v \to \infty} r(u,v) = r_+(\infty)$ for all $u \in [\uh, u_1]$.

We can now choose $v_1 \geq v_0$ so large that $\frac{1}{r(u_1,v)^2}\big(\varpi(v) - \frac{e^2}{r(u_1,v)}\big) \geq \frac{1}{2} \kappa_+(\infty)$ for all $v \geq v_1$. Together with $\rd_u r <0$ this gives $\frac{1}{r(u,v)^2}\big(\varpi(v) - \frac{e^2}{r(u,v)}\big) \geq \frac{1}{2} \kappa_+(\infty)$ for all $u \in [\uh, u_1]$ and $v \geq v_1$. Thus, \eqref{EqDRU} gives
$$\rd_u r(u,v) \leq \rd_u r(u, v_1) \cdot e^{\frac{1}{2} \kappa_+(\infty) (v - v_1)} \leq - \underbrace{\min_{u \in [\uh, u_1]} |\rd_u r(u, v_1)|}_{>0} \cdot e^{\frac{1}{2} \kappa_+(\infty) (v - v_1)} $$ for all $u \in [\uh, u_1] $ and $v \geq v_1$, which, after integration in $u$, is a contradiction to $\lim_{v \to \infty} r(u,v) = r_+(\infty)$ for all $u \in [\uh, u_1]$.
\end{proof}

Let $\Delta_1 >0$ be such that $r(\uh + \Delta_1, v_0) \in (r_-(v_0), r_+(v_0))$. We in particular show now that there is a small trapped neighbourhood towards the future of the event horizon, cf.\ Figure \ref{FigPenroseRNV}.

\textbf{Claim 6:} For all $u \in (\uh, \uh + \Delta_1]$ there exists a $v_1(u) \geq v_0$ such that $\rd_v r(u,v) <0$ for all $v \geq v_1(u)$ and $\lim_{v \to \infty} r(u,v) = r_-(\infty)$.
\begin{proof}
Clearly $\rd_v r(\uh + \Delta_1, v_0) <0$ and by Claim 3 we have $\rd_v r(\uh + \Delta_1, v) <0$ for all $v \geq v_0$ and $\lim_{t \to \infty} r(\uh + \Delta_1, v ) = r_-(\infty)$. Using $\partial_u r <0$ and Claim 5 for all $u \in (\uh, \uh + \Delta_1)$ there exists an $\varepsilon(u) >0$ and a sequence $v_n \to \infty$ (depending on $u$) such that
$$r(\uh + \Delta_1, v_n) < r(u,v_n) < r(\uh, v_n) - \varepsilon(u) \quad \textnormal{ for all } n \in \N \;.$$
It thus follows that for $n$ large enough we have $\rd_v r(u,v_n) <0$ and then by Claim 3 $\rd_v r(u, v) <0$ for all $v \geq v_n$. The second part of the claim follows again from Claim 3.
\end{proof}

We now define $$u_T := \sup \{ u \in (\uh, 0) \; | \; \forall \;  u' \in (\uh, u)  \; \; \exists v_1(u') \geq v_0 \textnormal{ such that } \rd_v r(u',v) <0 \textnormal{ for all } v \geq v_1(u') \} \;.$$
By Claim 6 we know that $u_T > \uh$ and by Claim 3 that $\lim_{v \to \infty} r(u,v) = r_-(\infty)$ for all $u \in (\uh, u_T)$. We now show that for $\uh < \tilde{u} < u_T$ the region $(\uh, \tilde{u}) \times (v_0, \infty) \times \mathbb{S}^2 \subseteq M$ satisfies the assumptions of Theorem \ref{ThmOneNullSing}.

Analogously to definition \eqref{EqDefSGE} we define the surface gravity of the Cauchy horizon
$$\kappa_-(\infty) := \frac{r_-(\infty) - r_+(\infty)}{2r_-^2(\infty)} = \frac{1}{r_-^2(\infty)}\big(\varpi(\infty) - \frac{e^2}{r_-(\infty)}\big) $$
and also define $V(v) := -e^{\kappa_-(\infty) v}$ and $V_0 := V(v_0)$.
In the $(u,v)$-coordinates the metric $g$ takes the form 
\begin{equation}
\label{EqMetricUV}
g = - \frac{\Omega^2}{2} (du \otimes dv + dv \otimes du) + r^2(u,v) \mathring{\gamma} \;,
\end{equation}
and in the $(u,V)$-coordinates it then reads
\begin{equation*}
\begin{split}
g &= - \frac{\Omega^2}{2 \kappa_-(\infty) \cdot V} (du \otimes dV + dV \otimes du) + r^2(u,V) \mathring{\gamma} \\
&=: - \frac{\overline{\Omega}^2}{2} (du \otimes dV + dV \otimes du) + r^2(u,V) \mathring{\gamma} \;,
\end{split}
\end{equation*}
where we have defined $\overline{\Omega}^2 = - \frac{\Omega^2}{\kappa_-(\infty)} e^{-\kappa_-(\infty) v}$.
\begin{proposition} \label{PropRNV}
Let $\uh < \tilde{u} < u_T$. Then
\begin{enumerate}
\item The functions $r(u,V) : (\uh, \tilde{u}] \times (V_0, 0) \to (0, \infty)$ and $\overline{\Omega}^2(u,V) :  (\uh, \tilde{u}] \times (V_0, 0) \to (0, \infty)$ extend continuously to $ (\uh, \tilde{u}] \times (V_0, 0] $ as positive functions.
\item For all $u \in (\uh, \tilde{u})$ we have $\lim_{V \to 0} \partial_V r (u,V) = -\infty$.
\end{enumerate} 
It then follows that after a trivial rescaling of the $(u,V)$-coordinates the patch $(\uh, \tilde{u}) \times (V_0, 0) \times \mathbb{S}^2 $ of the RNV spacetime $M$ satisfies the assumptions\footnote{Note that we have already shown that $\partial_u r <0$.} of Theorem \ref{ThmOneNullSing} and is thus $C^{0,1}_{\loc}$-inextendible through $\{V=0\} \cap (\uh, \tilde{u}]$ in the sense of Theorem \ref{ThmOneNullSing} for any $\tilde{u} \in (\uh, u_T)$.
\end{proposition}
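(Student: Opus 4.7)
My plan is to reduce the entire proposition to asymptotic ODE analysis along the outgoing null direction $\underline{\ell} = \partial_v|_u$. The key transport equation is \eqref{EqVROnConstantU}, $\partial_v r = f(v, r(u,v))/2$, and I would derive a companion equation by differentiating $\partial_v u + (f/2)\partial_r u = 0$ in $r$ and converting from $\partial_v|_r$ to $\partial_v|_u$: writing $h := -\partial_r u|_v$, one obtains $\partial_v h = -(h/2)\partial_r f(v, r(u,v))$, and since a direct computation of $g(\partial_u|_v, \partial_v|_u)$ in the $(v,r)$-metric \eqref{EqMetricRNVR} gives $\Omega^2 = 2/h$, this yields $\partial_v \log \Omega^2 = \partial_r f(v, r(u,v))/2$. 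From here everything follows from studying a single scalar ODE for $y := r - r_-(\infty)$.

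For the continuous extension of $r$ I would use a monotone sandwich: for any $\underline{u} \in (\uh, \tilde{u})$, Claim 3 gives $r(\underline{u}, v), r(\tilde{u}, v) \to r_-(\infty) > 0$ as $v \to \infty$, and monotonicity in $u$ (from $\partial_u r < 0$, already established) pinches $r(u, v)$ uniformly on $[\underline{u}, \tilde{u}]$, so setting $r(u, 0) := r_-(\infty)$ gives a positive continuous extension. For $\overline{\Omega}^2$ the task is to show that $\partial_v \log \overline{\Omega}^2 = (\partial_r f(v, r(u,v)) - 2\kappa_-(\infty))/2$ is integrable in $v$ on $[v_0, \infty)$. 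Using $\partial_r f = 2\varpi/r^2 - 2e^2/r^3$, this splits into a piece of size $O(|\varpi(v) - \varpi(\infty)|) = O(v^{-p})$ (integrable since $p > 1$) and a piece of size $O(|r - r_-(\infty)|)$. The ODE for $y$ reads $\partial_v y = \kappa_-(\infty) y + \beta v^{-p}/r_-(\infty) + \mathcal{O}(y^2 + y v^{-p})$; since $\kappa_-(\infty) < 0$ the linearisation is asymptotically stable, and variation of constants combined with an L'Hospital argument yields $y \sim A v^{-p}$ with $A := \beta/(r_-(\infty)|\kappa_-(\infty)|)$, which is integrable for $p > 1$.

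The hard part, and the heart of (2), is showing $\partial_V r = (f/2)/(\kappa_-(\infty) V) = (f/2)/(|\kappa_-(\infty)|\, e^{\kappa_-(\infty) v}) \to -\infty$. Any polynomial lower bound on $|f|$ would suffice, since the denominator decays exponentially; but the naive expansion $f \sim 2\kappa_-(\infty) y + 2\beta v^{-p}/r_-(\infty)$ is useless because the two $v^{-p}$ contributions cancel exactly by the very choice of $A$. I would therefore iterate the asymptotics: writing $z := y - A v^{-p}$ and substituting back into the ODE, the forcing becomes $pA v^{-p-1}$ at leading order, so the same L'Hospital-type argument gives $z \sim (pA/|\kappa_-(\infty)|) v^{-p-1}$, whence the sharp asymptotic $f = 2\partial_v y \sim -2pA v^{-p-1}$, negative. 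Plugging this in yields $\partial_V r \sim -(pA/|\kappa_-(\infty)|) v^{-p-1} e^{|\kappa_-(\infty)| v} \to -\infty$, because any inverse polynomial is dwarfed by an exponential; the sign is correct throughout since $\partial_v r < 0$ for $v$ sufficiently large by the defining property of $u_T$.
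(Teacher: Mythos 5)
Your proposal is correct in substance, and its backbone coincides with the paper's: everything is driven by the same two transport identities, namely $\rd_v r|_u = f/2$ (equation \eqref{EqVROnConstantU}) and $\rd_v \log \Omega^2 = \tfrac{1}{2}\rd_r f = \tfrac{1}{r^2}\big(\varpi(v) - \tfrac{e^2}{r}\big)$, which is exactly \eqref{EqDRU} once one knows $\Omega^2 = -2\rd_u r$ (your $\Omega^2 = 2/h$ with $h = -\rd_r u|_v$ is the same statement), and in both treatments the continuity of $\overline{\Omega}^2$ reduces to integrability of $\tfrac{1}{2}\rd_r f - \kappa_-(\infty)$ along outgoing null rays. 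Where you genuinely differ is in how the blow-up of $-\rd_V r$ is extracted. The paper never computes sharp asymptotics of $r$: Step 1 proves only the one-sided bound $r - r_-(\infty) \le -\tfrac{2\beta}{\kappa_-(\infty) r_-(\infty)} v^{-p} + C v^{-(p+1)}$ via an integrating-factor/integration-by-parts argument with the degraded rate $\tfrac{1}{2}\kappa_-(\infty)$, and Step 2 differentiates \eqref{EqDVRS} in $v$ and runs the same Gronwall-type argument directly on $\rd_v r$ (equation \eqref{EqDVVR}), yielding $\rd_v r \le \tfrac{p\beta}{4\kappa_-(\infty) r_-(\infty)} v^{-(p+1)} + \mathcal{O}(v^{-(p+2)})$; one-sided bounds suffice throughout. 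You instead compute the two-term expansion $r - r_-(\infty) = A v^{-p} + \tfrac{pA}{|\kappa_-(\infty)|} v^{-p-1} + \dots$ and read $\rd_v r \sim -pA v^{-p-1}$ off the ODE after the exact cancellation $\kappa_-(\infty) A + \beta/r_-(\infty) = 0$ -- a nice way of making explicit why a first-order estimate of $f$ cannot determine the sign or size of $\rd_v r$, a difficulty the paper sidesteps by working with the evolution equation for $\rd_v r$ itself. Your route buys sharp constants at the price of a bootstrap you should spell out: you first need $y = \mathcal{O}(v^{-p})$ (knowing only $y \to 0$ from Claim 3, absorb the $o(y)$ error into the linear term) before the error $\mathcal{O}(y^2 + y v^{-p}) = \mathcal{O}(v^{-2p}) = o(v^{-p-1})$, using $p>1$, may be discarded in the second iteration. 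Your sandwich argument for the continuous extension of $r$ is more elementary than the paper's Step 1, though you end up needing the quantitative $\mathcal{O}(v^{-p})$ rate anyway for $\overline{\Omega}^2$.

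One point to make explicit: part 1 asserts continuity of the extensions on $(\uh, \tilde{u}] \times (V_0, 0]$ jointly in $(u,V)$, so for $\overline{\Omega}^2$ pointwise-in-$u$ integrability of $\tfrac{1}{2}\rd_r f - \kappa_-(\infty)$ is not quite enough; you need a dominating integrable bound that is locally uniform in $u$ (this is the paper's $|\epsilon(u,v)| \le C v^{-p}$ on $[u_0 - \delta, u_0 + \delta]$ followed by dominated convergence). Your sandwich $0 < r(u,v) - r_-(\infty) \le r(\underline{u}, v) - r_-(\infty) = \mathcal{O}(v^{-p})$, combined with $|\varpi(\infty) - \varpi(v)| \lesssim v^{-p}$, supplies exactly such a bound, so this is a gap of exposition rather than of substance; note also that the threshold beyond which $\rd_v r < 0$ is not uniform in $u$, which is harmless since part 2 is a statement for each fixed $u$.
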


We will derive in fact much more precise asymptotics for $r$ and $\partial_V r$ as $V \to 0$.
\begin{proof}
\textbf{Step 1:} Let $u_0 \in (\uh, u_T)$ and let $v_1 \geq v_0$ be so large that $\rd_v r (u_0, v_1) < 0$. Let $\delta >0$ with $\uh < u_0 - \delta < u_0 + \delta < u_T$ be so small that $\rd_v r (u,v_1) <0$ for all $u \in [u_0 - \delta, u_0 + \delta]$. Furthermore we choose $v_2 \geq v_1$ so large that $$\frac{r(u_0 - \delta, v) - r_+(\infty)}{2r^2(u_0 - \delta, v)} \leq \frac{1}{2} \kappa_-(\infty) <0 \qquad \textnormal{ for all } v \geq v_2 \;,$$
which is possible since we know that $\lim_{v \to \infty} r(u,v) = r_-(\infty)  $ for $u \in (\uh, u_T)$.  Since we have $\rd_u r <0$ and $\rd_v r< 0$ in $[u_0 - \delta, u_0 + \delta] \times \{ v \geq v_1\}$ it follows that $$\frac{r(u , v) - r_+(\infty)}{2r^2(u, v)} \leq \frac{1}{2} \kappa_-(\infty) <0 \qquad \textnormal{ for all } (u,v) \in [u_0 - \delta, u_0 + \delta] \times \{v \geq v_2\}\;.$$
Recalling
\begin{equation}\label{EqDVRS}
\begin{split}
\rd_v r &= \frac{1}{2} \Big(1 - \frac{2 \varpi(v)}{r} + \frac{e^2}{r^2}\Big) \\
&= \frac{1}{2} \Big( 1 - \frac{2\varpi(\infty)}{r} + \frac{e^2}{r^2}\Big) + \frac{\varpi(\infty) - \varpi(v)}{r} \\
&= \frac{1}{2r^2} (r - r_+(\infty))(r - r_-(\infty)) + \frac{\beta v^{-p}}{r} \;,
\end{split}
\end{equation}
we obtain $$\rd_v r(u,v) \leq \frac{1}{2} \kappa_-(\infty) \big( r(u,v) - r_-(\infty)\big) + \frac{ \beta v^{-p}}{r_-(\infty)} \qquad \textnormal{ for all } (u,v) \in [u_0 - \delta, u_0 + \delta] \times \{v \geq v_2\}\;. $$
This gives $$r(u,v) - r_-(\infty) \leq \underbrace{e^{\frac{\kappa_-(\infty)}{2} (v - v_2)} \Big[ \int\limits_{v_1}^v e^{-\frac{\kappa_-(\infty)}{2} (v' - v_2)} \cdot \frac{\beta (v')^{-p}}{r_-(\infty)} \, dv'} + \big(r(u,v_2) - r_-(\infty)\big)\Big] $$
for all $ (u,v) \in [u_0 - \delta, u_0 + \delta] \times \{v \geq v_2\}$. A standard integration by parts argument\footnote{See for example Lemma 2.5 in \cite{FouSbi20}.} gives that the underbraced term equals $- \frac{2}{\kappa_-(\infty)} \frac{\beta}{r_-(\infty)} v^{-p} + \mathcal{O}(v^{-(p+1)})$ and thus we obtain
\begin{equation}
\label{EqUniformCR}
\big(r(u,v) - r_-(\infty)\big) \leq - \frac{2}{\kappa_-(\infty)} \frac{\beta}{r_-(\infty)} v^{-p} + C \cdot v^{-(p+1)}
\end{equation}
for all $ (u,v) \in [u_0 - \delta, u_0 + \delta] \times \{v \geq v_2\}$, where $C>0$. Note that the locally uniform convergence in $u$ in particular implies that $r(u,V)$ extends continuously to $V=0$ for $\uh < u < u_T$.

\textbf{Step 2:} We now derive asymptotics for $\rd_v r$ for $v \to \infty$. Let again $u_0 \in (\uh, u_T)$ and  $v_1 \geq v_0$ be so large that $\rd_v r (u_0, v) < 0$ for all $v \geq v_1$. Differentiating \eqref{EqDVRS} in $v$ we obtain
\begin{equation} \label{EqDVVR}
\begin{split}
\rd_v(\rd_v r) (u_0, v)&= \frac{1}{2} \rd_rf(\infty, r) \cdot \rd_v r (u_0, v)- \frac{\beta v^{-p}}{r^2} \rd_v r (u_0, v)- \frac{p \beta v^{-(p+1)}}{r} (u_0,v) \\
&\leq 2 \kappa_-(\infty) \cdot \rd_v r (u_0,v) - \frac{p \beta v^{-(p+1)}}{2r_-(\infty)}
\end{split}
\end{equation}
for all $v \geq v_2 $ with $v_2 \geq v_1$ large enough. This gives
\begin{equation*}
\begin{split}
\rd_v r(u_0, v) &\leq e^{2 \kappa_-(\infty)\cdot (v - v_2)} \Big[ - \int\limits_{v_2}^v e^{- 2 \kappa_-(\infty) \cdot (v' - v_2)} \cdot \frac{ p \beta (v')^{-(p+1)}}{2r_-(\infty)} \, dv' + \rd_vr(u_0, v_2)\Big] \\
&= \frac{p \beta}{4 \kappa_-(\infty) r_-(\infty)} v^{-(p+1)} + \mathcal{O}\big( v^{-(p+2)}\big) 
\end{split}
\end{equation*}
with the same standard integration by parts argument as before. Hence, we have $\rd_v r(u_0, v) \lesssim - v^{-(p+1)}$. Using $\rd_v r = - \kappa_-(\infty) e^{\kappa_- v} \rd_V r$ we obtain $$\rd_V r (u_0, v) = - \frac{1}{\kappa_-(\infty)} e^{-\kappa_-(\infty) v} \rd_v r(u_0,v) \lesssim - e^{-\kappa_-(\infty)v } v^{-(p+1)}\;,$$
from which $\lim_{V \to 0} \partial_V r (u_0,V) = -\infty$ follows.

\textbf{Step 3:} It remains to show that $\overline{\Omega}^2$ extends continuously to $V=0$. Using the inverse of \eqref{EqMetricRNVR}, i.e., the inverse of the metric $g$ in $(v,r)$-coordinates, we compute $g^{-1}(dr, dr) = f$. Using on the other hand the inverse of \eqref{EqMetricUV}, i.e., the inverse of $g$ in $(u,v)$-coordinates, we obtain $g^{-1}(dr, dr) = -\frac{4}{\Omega^2} \rd_u r \cdot \rd_v r$. Hence, we obtain $- \frac{4}{\Omega^2} \rd_u r \cdot \rd_v r = f$, and together with $\rd_v r = \frac{f}{2}$ this gives $$\Omega^2 = - 2\rd_u r \;.$$
Equation \eqref{EqDRU} directly gives $$\rd_v \log (-\rd_u r) = \frac{1}{r^2} \big(\varpi(v) - \frac{e^2}{r} \big) \;,$$
which integrates to
\begin{equation*}
\begin{split}
\log(-\rd_u r) (u,v) &= \log(-\rd_u r) (u,v_2) + \int\limits_{v_2}^v \frac{1}{r^2(u,v')} \big( \varpi(v') - \frac{e^2}{r(u,v')} \big) \, dv' \\
&= \log(-\rd_u r) (u,v_2) + \int\limits_{v_2}^v  \big( \kappa_-(\infty) + \epsilon( u,v') \big) \, dv' \;,
\end{split}
\end{equation*}
where $\epsilon( u,v')  = \frac{1}{r^2(u,v')} \big( \varpi(v') - \frac{e^2}{r(u,v')} \big)  - \kappa_-(\infty)$. 
We thus obtain $$-2 \rd_u r(u,v) = - 2\rd_u r(u,v_2) \cdot e^{\kappa_-(\infty) \cdot (v - v_2)} \cdot e^{\int_{v_2}^v \epsilon(u,v') \, dv'} \;,$$ which gives 
$$\Omega^2(u,v) e^{-\kappa_-(\infty)\cdot  v} = \Omega^2(u,v_2) \cdot e^{-\kappa_-(\infty) \cdot v_2} \cdot e^{\int_{v_2}^v \epsilon(u,v') \, dv'} \;.$$
Let now $u_0, \delta,$ and $v_2$ be as in Step 1. By \eqref{EqUniformCR} and $|\varpi(\infty) - \varpi(v')| \lesssim (v')^{-p}$ we have $|\epsilon ( u,v') | \leq C (v')^{-p}$ uniformly for all $ (u,v') \in [u_0 - \delta, u_0 + \delta] \times \{v' \geq v_2\}$, where $C>0$.  Note that $\epsilon$ is continuous and since $p>1$ it is uniformly integrable. Thus using dominated convergence it follows that $\overline{\Omega}^2(u,V) = - \frac{\Omega^2}{\kappa_-(\infty)} e^{-\kappa_- v}$ extends continuously to $V=0$, which concludes the proof.
\end{proof}

\begin{proposition} \label{PropTrapped}
We have $u_T = 0$.
\end{proposition}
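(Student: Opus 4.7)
I plan to show directly that every $u \in (\uh, 0)$ is eventually trapped, i.e., $\rd_v r(u,v) < 0$ for all sufficiently large $v$. By Claim 3 it suffices to exhibit some $v_\ast \geq v_0$ with $r(u,v_\ast) \in (r_-(v_\ast), r_+(v_\ast))$, and by Claim 5 combined with $r_+(v) \nearrow r_+(\infty)$ the upper bound $r(u,v) < r_+(v)$ is automatic for $v$ large, so the real task is to produce $v_\ast$ with
\begin{equation*}
r(u, v_\ast) > r_-(v_\ast)\;.
\end{equation*}

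For $u \in (\uh, -r_-(v_0)]$ this is immediate from the already-established dynamics: either $r(u,v_0) \in (r_-(v_0), r_+(v_0))$ and the ray is in region II at $v_0$, or $r(u,v_0) > r_+(v_0)$ and the ray starts in region I, in which case Claim 5 together with $r$ being monotonically increasing in region I and $r_+(v) \nearrow r_+(\infty)$ forces a crossing into region II at finite affine time. The substantive case is $u \in (-r_-(v_0), 0)$, where $r(u,v_0) < r_-(v_0)$ places the ray in region III. I would argue by contradiction, assuming $r(u,v) \leq r_-(v)$ for all $v \geq v_0$. An interior tangency $r(u,v_\ast) = r_-(v_\ast)$ is ruled out since there $(r - r_-)'(v_\ast) = \rd_v r(u,v_\ast) - r_-'(v_\ast) = -r_-'(v_\ast) > 0$, so $s(v) := r_-(v) - r(u,v) > 0$ strictly on $[v_0, \infty)$. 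In region III, $\rd_v r = f/2 > 0$, hence $r(u,\cdot)$ is strictly increasing and bounded above by $r_-(v_0)$; the limit $r^\ast := \lim_{v \to \infty} r(u,v)$ together with $\dot r = \tfrac12 f(v, r(u,v)) \to \tfrac12 f(\infty, r^\ast)$ forces $f(\infty, r^\ast) = 0$, so $r(u,v) \nearrow r_-(\infty)$.

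To derive the contradiction I would use $(r - r_+(v))(r - r_-(v)) = -(r_+(v) - r) \cdot s$ to rewrite the equation for $s$ as the linear ODE
\begin{equation*}
\dot s(v) + \alpha(v)\, s(v) = r_-'(v)\;, \qquad \alpha(v) := \frac{r_+(v) - r(u,v)}{2\, r(u,v)^2}\;,
\end{equation*}
with $\alpha(v) \to (r_+(\infty) - r_-(\infty))/(2\, r_-(\infty)^2) = |\kappa_-(\infty)| > 0$ by the previous paragraph, and, from $\varpi(v) = \varpi(\infty) - \beta v^{-p}$ together with $r_-'(v) = \varpi'(v) \big( 1 - \varpi(v)/\sqrt{\varpi(v)^2 - e^2}\big)$, $r_-'(v) \leq -c\, v^{-(p+1)}$ for $v$ large with some $c > 0$. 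Introducing the integrating factor $\mu(v) := \exp \int_{v_0}^v \alpha(\tau)\, d\tau$ yields
\begin{equation*}
s(v)\, \mu(v) = s(v_0) + \int_{v_0}^v \mu(v')\, r_-'(v')\, dv'\;.
\end{equation*}
Since $\alpha(v) \geq \tfrac12 |\kappa_-(\infty)|$ for $v$ large, $\mu$ grows at least like $e^{|\kappa_-(\infty)|v/2}$; the integrand on the right is bounded above by $-c\, (v')^{-(p+1)} \mu(v')$, and the corresponding integral diverges to $-\infty$ because exponential growth dominates polynomial decay. Thus $s(v)\mu(v) < 0$ for large $v$, contradicting $s > 0$.

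The main obstacle is this final ODE argument: $\alpha(v)$ depends on the unknown $r(u,v)$, so the equation is genuinely nonlinear, and one has to first establish $r(u,v) \to r_-(\infty)$ in a self-contained way to secure the lower bound on $\alpha$, and then carefully combine the exponential blow-up of $\mu$ with the explicit polynomial decay rate of $r_-'$ coming from the ansatz $\varpi(v) = \varpi(\infty) - \beta v^{-p}$. Everything else — the reduction of $u_T = 0$ to the existence of $v_\ast$, the case analysis, and the monotonicity/boundary arguments in region III — is routine given Claims 2, 3, and 5.
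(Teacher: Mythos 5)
Your proof is correct, and it reaches the conclusion by a route that differs from the paper's in its quantitative step. The paper first shows unconditionally that $\lim_{v\to\infty} r(u,v)=r_-(\infty)$ for every $u\in(\uh,0)$ (using Claim 6 near $\uh$ and the monotone-limit argument for rays confined to region $III$), and then, assuming $\rd_v r(u_0,\cdot)\geq 0$ for all $v$, differentiates \eqref{EqVROnConstantU} in $v$ and runs a Gronwall argument on the linear ODE \eqref{EqDVVR} for $\rd_v r$: the asymptotically negative coefficient $\tfrac12 \rd_r f(\infty,r)\to\kappa_-(\infty)<0$ together with the forcing $-p\beta v^{-(p+1)}/r$ forces $\rd_v r\lesssim -v^{-(p+1)}$, a contradiction. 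You instead compare $r(u,\cdot)$ with the inner apparent horizon: under the hypothesis $r\leq r_-(v)$ you first get $r\to r_-(\infty)$, and then the linear ODE $\dot s+\alpha s=r_-'$ for $s=r_--r$, with $\alpha\to -\kappa_-(\infty)>0$ and $r_-'\lesssim -v^{-(p+1)}$, shows via the integrating factor that $s$ must turn negative. Both arguments exploit the identical mechanism -- the exponential rate set by the surface gravity of the Cauchy horizon beats the polynomial tail of $\varpi$ -- but yours trades the second-derivative equation \eqref{EqDVVR} and Claim 6 for the explicit formula for $r_-'(v)$ and a comparison with $r_-(v)$, and it handles region-$I$ starts directly from Claim 5 rather than routing them through Claim 6; this is a perfectly sound, arguably slightly leaner, alternative. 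Two small points should be tidied, both by the derivative-sign computation you already use: your dichotomy at $v=v_0$ omits the boundary starts $r(u,v_0)=r_\pm(v_0)$, where $\rd_v r=0$ while $r_\pm'\neq 0$ pushes the ray immediately into open region $II$; and, to combine ``some $v_\ast$ with $r(u,v_\ast)>r_-(v_\ast)$'' with the large-$v$ upper bound $r<r_+(v)$ from Claim 5, note that once $r(u,v)>r_-(v)$ this persists (at a first later crossing one would have $(r-r_-)'=-r_-'>0$ there, which is impossible when $r-r_-$ approaches $0$ from above), so the ray is eventually in open region $II$ and Claim 3 applies.
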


\begin{proof}
We first show that we have $\lim_{v \to \infty} r(u,v) = r_-(\infty)$ for all $u \in (\uh, 0)$. Let $\Delta_1$ be as above Claim 6, i.e., such that $r(\uh + \Delta_1, v_0) \in (r_-(v_0), r_+(v_0))$. By Claim 6 it remains to show $\lim_{v \to \infty} r(u,v) = r_-(\infty)$ for $u \in [\uh + \Delta_1, 0)$. If we have $r(u,v_0) \in (r_-(v_0), r_+(v_0))$ then this follows from Claim 3 and if $r(u,v_0) \in (0, r_-(v_0)]$ then there are two possibilities: Either $r(u,v)$ enters the region $(r_-(v), r_+(v))$ at some later time, and then we use Claim 3 again, or we have $r(u,v) \in (0, r_-(v)]$ for all $v$. In this latter case it then follows that $\partial_v r(u,v) \geq 0$ for all $v$ and thus $\lim_{v \to \infty} r(u,v) \in (0, r_-(\infty)]$ exists. If we had $\lim_{v \to \infty} r(u,v) < r_+(\infty)$, then the right hand side of \eqref{EqVROnConstantU} would be positively bounded away from zero which would lead to the contradiction $\lim_{v \to \infty} r(u,v) = \infty$. Thus we have $\lim_{v \to \infty} r(u,v) = r_-(\infty)$.

We now show that for each $u_0 \in (\uh, 0)$ there exists a $v_1 \geq v_0$ such that $\rd_vr(u_0,v_1) <0$ (and thus also $\rd_v r(u_0,v) <0$ for all $v \geq v_1$ by Claim 3). Assume this were not the case, then $\rd_vr(u_0,v) \geq 0$ for all $v \geq v_0$. Using the evolution equation for $\rd_v r$ from the  first line of \eqref{EqDVVR} we can then estimate
$$\rd_v(\rd_v r)(u_0,v) \leq \frac{1}{2} \kappa_-(\infty) \rd_v r(u_0, v)-\frac{p \beta v^{-(p+1)}}{2 r_-(\infty)}\;,$$
for $v \geq v_2$ with $v_2$ sufficiently large. This, however, gives as before $$\rd_v r(u_0, v) \leq  \frac{p \beta }{\kappa_-(\infty) \cdot  r_-(\infty)} v^{-(p+1)} + \mathcal{O}(v^{-(p+2)}) \;,$$
which is negative for large $v$ and thus a contradiction to $\rd_vr (u_0,v) \geq 0$ for all $v \geq v_0$.
\end{proof}

\subsection{Dafermos-Luk-Oh spacetimes and a $C^{0,1}_{\loc}$-result of strong cosmic censorship} \label{SecDLO}

Our next application of the Theorems \ref{ThmOneNullSing} and \ref{ThmTwoNullSing} is to spacetimes arising from sufficiently small spherically symmetric perturbations of asymptotically flat two-ended subextremal Reissner-Nordstr\"om (RN) initial data for the Einstein-Maxwell-scalar field system, \emph{which we call Dafermos-Luk-Oh (DLO) spacetimes}. The mathematical study of spherically symmetric perturbations of the interior of a subextremal RN black hole was initiated by Dafermos in \cite{Daf03}, \cite{Daf05a} where he in particular showed that if the perturbation decays sufficiently quickly along the event horizon that the metric then extends continuously to the Cauchy horizon in a small neighbourhood of timelike infinity. This sufficiently quick decay for spherically symmetric perturbations of RN initial data was later established by Dafermos and Rodnianski in \cite{DafRod05}. Moreover, \cite{Daf05a} also showed that if the scalar field is in addition \emph{pointwise} polynomially lower bounded along the event horizon, then the Hawking mass and $-\rd_vr$ blow up on the Cauchy horizon near timelike infinity. Using a Cauchy stability argument and using the previous results Dafermos continued to show in \cite{Daf14} that for sufficiently small perturbations of two-ended subextremal RN initial data the Penrose diagram of the future development is as in Figure \ref{FigPenroseDLO}, i.e., the Cauchy horizon is given by a bifurcate null hypersurface as in exact subextremal RN. Moreover, the metric extends continuously to the Cauchy horizon. 
\begin{figure}[h]
\centering
 \def\svgwidth{8cm}
   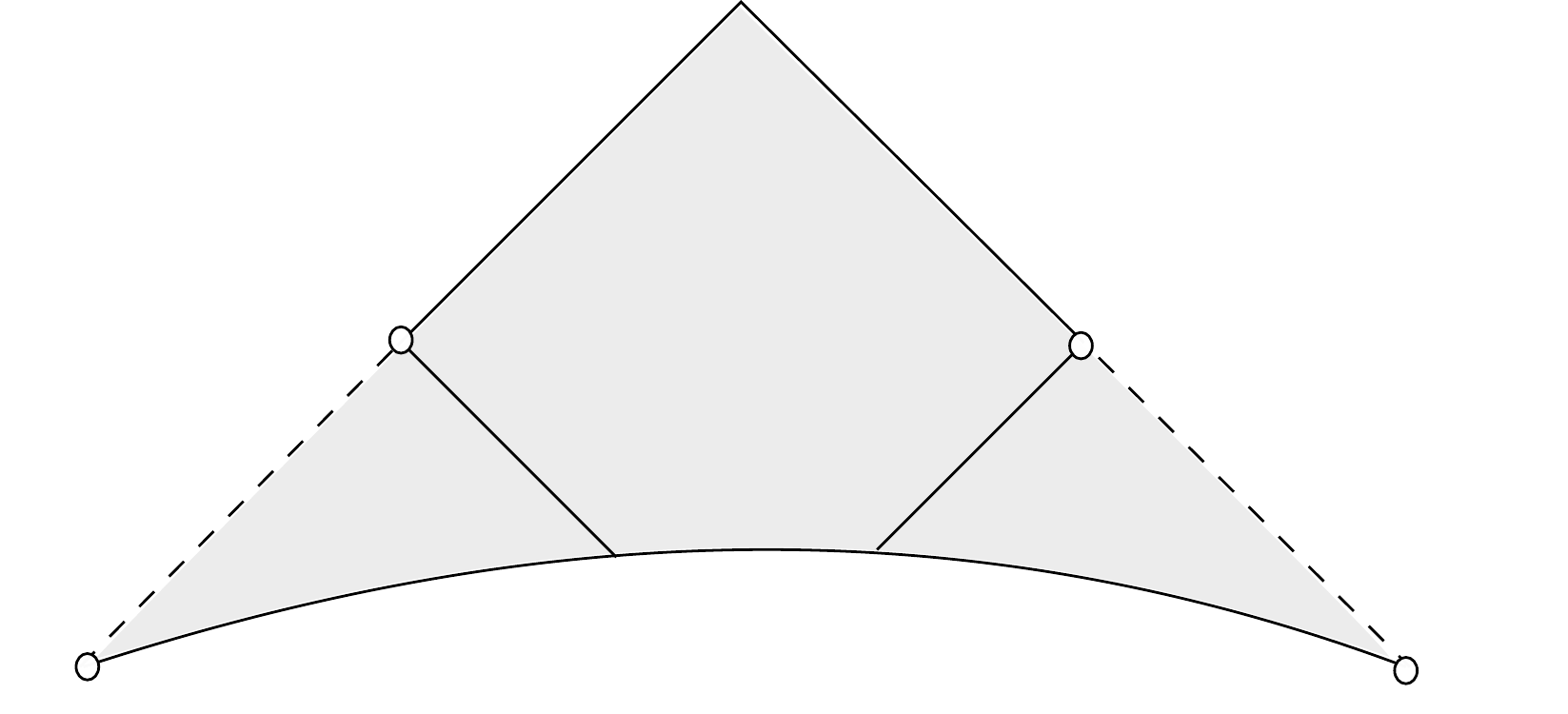
      \caption{Penrose diagram of the DLO spacetimes.} \label{FigPenroseDLO}
\end{figure}
In \cite{LukOh19II} Luk and Oh established that for \emph{generic} small spherically symmetric perturbations the scalar field obeys an \emph{integrated} lower bound along the event horizon. `Generic' means here that the initial data is contained in an open and dense set relative to appropriate topologies of the initial data space. The integrated lower bound being weaker than the pointwise lower bound assumed in \cite{Daf05a}, Luk and Oh continued to show in \cite{LukOh19I} that it is still strong enough to ensure the blow-up of the energy of the scalar field near the Cauchy horizon which they then use to infer the generic $C^2$-inextendibility of the DLO spacetimes.\footnote{It should be mentioned that in \cite{LukOh19I} Luk and Oh do not only consider globally small perturbations of subextremal RN initial data that lead to what we call here an DLO spacetime, but they also treat `admissible' large deviations from exact RN which possibly lead to the closing off of the Cauchy horizon, transitioning into a spacelike singularity in the interior. We do not discuss the latter case in detail here, but Theorem \ref{ThmOneNullSing} can still be used to infer the $C^{0,1}_{\loc}$-inextendibility of the Cauchy horizon in the sense of Theorem \ref{ThmOneNullSing}. However, we do not investigate the low-regularity inextendibility of the spacelike part of the boundary in this paper.}
In particular Luk and Oh also show the pointwise blow up of $-\rd_v r$ on the Cauchy horizon, which allows us to infer the following
\begin{theorem} \label{ThmInteriorDLO}
Consider a sufficiently small spherically symmetric perturbation of asymptotically flat two-ended subextremal RN initial data for the Einstein-Maxwell-scalar field system such that the future development is given by Figure \ref{FigPenroseDLO}. Assume the initial data is generic and small in the sense of \cite{LukOh19II}. Then the interior of the black hole\footnote{I.e. the set in Figure \ref{FigPenroseDLO} bounded by the two $CH^+$ to the future and the two $\Hp$ and part of the initial Cauchy hypersurface to the past.} is future $C^{0,1}_{\loc}$-inextendible.
\end{theorem}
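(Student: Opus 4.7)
The strategy is to show that in a suitable double null gauge the interior of the DLO black hole fits exactly into the hypotheses of Theorem \ref{ThmTwoNullSing}, and then invoke that theorem. The Penrose diagram in Figure \ref{FigPenroseDLO} shows that the interior is bounded by two event horizons $\mathcal{H}^+$ in the past and by a bifurcate Cauchy horizon $CH^+ \cup CH^+$ in the future. The plan is thus to choose spherically symmetric double null coordinates $(u,v) \in (-\infty, 0) \times (-\infty, 0)$ on the interior, normalised so that the two components of $CH^+$ correspond to $\{v = 0\}$ and $\{u = 0\}$ respectively, and so that $\rd_u + \rd_v$ is future directed. In this gauge the metric takes the form $g = -\tfrac{\Omega^2}{2}(du \otimes dv + dv \otimes du) + r^2 \mathring{\gamma}$, and it remains only to verify the five bullet points of Theorem \ref{ThmInt2}.

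The first step is to recall from Dafermos's work \cite{Daf14} -- which is built upon and incorporated into the Luk--Oh analysis \cite{LukOh19I} -- that for the globally small spherically symmetric perturbations under consideration, both the area radius $r$ and the null lapse $\Omega$ extend continuously as positive functions to the closed square $(-\infty, 0] \times (-\infty, 0]$. In particular $r$ stays uniformly bounded away from $0$, and the bifurcate Cauchy horizon is a genuine $C^0$-extension. This takes care of the first bullet point. The second step is to invoke the pointwise blow-up statements for $\rd_v r$ and $\rd_u r$ along the two branches of $CH^+$: the generic integrated lower bound on the scalar field established in \cite{LukOh19II}, combined with the energy estimates and propagation results of \cite{LukOh19I}, yields $\lim_{v \to 0} \rd_v r(u,v) = -\infty$ for every fixed $u \in (-\infty, 0)$ and, by the symmetric argument on the other side of the bifurcation, $\lim_{u \to 0} \rd_u r(u,v) = -\infty$ for every fixed $v \in (-\infty, 0)$.

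The third step is to verify the two sign conditions on $\rd_u r$ and $\rd_v r$. The black hole interior is a trapped region, i.e.\ both $\rd_u r < 0$ and $\rd_v r < 0$ hold throughout a sufficiently small double null neighbourhood of each component of the Cauchy horizon; for the small perturbations of two-ended subextremal Reissner--Nordstr\"om data this is again an immediate consequence of the estimates controlling the geometry in \cite{Daf14} and \cite{LukOh19I}, since the perturbed $-\rd_u r$ and $-\rd_v r$ are close to their Reissner--Nordstr\"om counterparts outside of suitable small regions and in particular share their signs on a neighbourhood of each $CH^+$ component. This gives the last two bullet points in the statement of Theorem \ref{ThmInt2}, after, if necessary, translating coordinates so that the entire coordinate square enjoys the required signs. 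With all five bullet points verified, Theorem \ref{ThmTwoNullSing} applies directly and yields the future $C^{0,1}_{\loc}$-inextendibility of the interior.

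The main obstacle is bookkeeping rather than any new geometric input: one must cite the correct propositions of \cite{LukOh19I}, \cite{LukOh19II}, and \cite{Daf14} and check that their conclusions, which are phrased in the gauges used in those papers (renormalised so that $\Omega^2$ is comparable to $1$ near each horizon, with independent renormalisations on each side of the bifurcation sphere), can be combined into a single $(u,v)$-chart covering the full interior in which Theorem \ref{ThmTwoNullSing} applies. A mild technical point is that the blow-up of $\rd_v r$ proved in \cite{LukOh19I} is established in a neighbourhood of timelike infinity first and then propagated to the entire $CH^+$; one must verify that this propagation is captured by the pointwise statement required by Theorem \ref{ThmInt2}, which it is, as no blow-up \emph{rate} is needed -- only divergence along each ingoing and outgoing null generator.
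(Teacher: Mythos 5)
Your overall route is the same as the paper's: reduce to Theorem \ref{ThmTwoNullSing} by verifying, in a double null gauge covering the interior, the continuous positive extension of $r$ and $\Omega$, the blow-up $\lim_{v\to 0}\rd_v r=-\infty$ and $\lim_{u\to 0}\rd_u r=-\infty$, and the two sign conditions; the paper does exactly this, citing Theorem 5.5, Remark 5.6 and estimate (5.19) of \cite{LukOh19I} for the first two items (after a trivial reparametrisation of the null coordinates used there).

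The one step that does not go through as you have written it is the verification of the sign conditions. You justify $\rd_u r<0$ near $\{v=0\}$ and $\rd_v r<0$ near $\{u=0\}$ by saying that the perturbed quantities are ``close to their Reissner--Nordstr\"om counterparts'' near each Cauchy horizon component and hence share their signs. This is not a statement you can extract from \cite{Daf14} or \cite{LukOh19I}: the stability estimates degenerate towards the Cauchy horizon, and on exact Reissner--Nordstr\"om the relevant quantity ($\rd_v r$ on the $CH^+$ emanating from the right $i^+$, say) tends to $0$ there, so its sign is not an open condition and cannot be propagated to the perturbed spacetime by closeness. Indeed, generically the perturbed derivative is \emph{not} close to its RN counterpart near $CH^+$ -- it diverges to $-\infty$, which is the very input you use in the previous step. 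The robust argument, and the one the paper uses, is monotonicity rather than stability: the Raychaudhuri equation ((2.4) in \cite{LukOh19I}) shows that (a suitably weighted) $\rd_u r$ is non-increasing along the ingoing null direction, so its negativity on the initial data hypersurface in the right exterior propagates to the entire future, in particular to a neighbourhood of the right Cauchy horizon; the symmetric argument from the left exterior gives $\rd_v r<0$ near the left Cauchy horizon. With that replacement your plan coincides with the paper's proof.
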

\begin{proof}
This follows directly from Theorem \ref{ThmTwoNullSing}: By Theorem 5.5 and Remark 5.6 in \cite{LukOh19I} the metric in the double null gauge $(U_{\mathcal{CH}_2^+}, V_{\mathcal{CH}_1^+})$ extends continuously to the bifurcate Cauchy horizon $\{V_{\mathcal{CH}_1^+}=1\} \cup \{U_{\mathcal{CH}_2^+} = 1\}$, where we have used the notation from \cite{LukOh19I}. Moreover, (5.19) in \cite{LukOh19I} then implies that $\lim\limits_{V_{\mathcal{CH}_1^+} \to 1} \rd_{V_{\mathcal{CH}_1^+}} r = -\infty$, and similarly for the left Cauchy horizon. The statement that $\rd_{U_{\mathcal{CH}_2^+}} r <0$ near the right Cauchy horizon follows trivially from the Raychaudhuri equation (2.4) in \cite{LukOh19I} and the fact that $\rd_{U_{\mathcal{CH}_2^+}} r <0$ on the initial data hypersurface in the right exterior. Similarly for the left Cauchy horizon. Thus the assumptions of Theorem \ref{ThmTwoNullSing} are met (after a trivial reparametrisation of the null coordinates).
\end{proof}
In order to prove the $C^2$-formulation of strong cosmic censorship for DLO spacetimes, i.e., their generic $C^2$-inextendibility, Luk and Oh use the $C^2$-regularity of an assumed extension to infer that there is a \emph{radial null geodesic} which leaves the DLO spacetime and enters the extension. They then proceed by showing that if the null geodesic approached the Cauchy horizon then the Ricci-curvature contracted twice with the affine velocity vector of this null geodesic would blow up along the null geodesic -- and if the null geodesic approached null infinity or $i^+$, then $r$ would go to infinity along the null geodesic or the null geodesic would be affine complete. All of these are contradictions to the assumption that the radial null geodesic enters a $C^2$-extension. 

The following theorem improves the $C^2$-formulation of strong cosmic censorship for DLO spacetimes proven by Luk and Oh to a $C^{0,1}_{\loc}$-formulation.

\begin{theorem} \label{ThmSCC}
Consider a sufficiently small spherically symmetric perturbation of asymptotically flat two-ended subextremal RN initial data for the Einstein-Maxwell-scalar field system such that the future development is given by Figure \ref{FigPenroseDLO}. Assume the initial data is generic and small in the sense of \cite{LukOh19II}. Then the future development is future $C^{0,1}_{\loc}$-inextendible.
\end{theorem}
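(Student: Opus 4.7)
I would argue by contradiction, combining Theorem \ref{ThmInteriorDLO} for the interior with the timelike geodesic completeness statement for the exterior that is proved in the appendix. Suppose $\iota : M \hookrightarrow \tilde M$ is a future $C^{0,1}_{\loc}$-extension. Since $(M,g)$ is globally hyperbolic, Proposition \ref{PropGeodesicBoundaryChart} yields a boundary point $\tilde p \in \partial^+\iota(M)$ together with an affinely parametrised, future directed, future inextendible timelike geodesic $\gamma : [-1,0) \to M$ with $(\iota \circ \gamma)(s) \to \tilde p$ as $s \to 0$. I would then split on whether the image of $\gamma$ ever enters the black hole interior $M_{\mathrm{int}}$ of Theorem \ref{ThmInteriorDLO} or lies entirely in one of the two exterior regions of the Penrose diagram of Figure \ref{FigPenroseDLO}.

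If $\gamma(s_0) \in M_{\mathrm{int}}$ for some $s_0 \in [-1, 0)$, then $\gamma|_{[s_0, 0)}$ remains in $M_{\mathrm{int}}$, because $M_{\mathrm{int}}$ is a future set in $M$: no future directed causal curve returns from the interior to an exterior region, since the topological future boundary of $M_{\mathrm{int}}$ in $M$ is the bifurcate Cauchy horizon. Consequently $\gamma|_{[s_0, 0)}$ is also future inextendible as a curve in $M_{\mathrm{int}}$, and $\iota|_{M_{\mathrm{int}}}$ is an isometric embedding of $M_{\mathrm{int}}$ into $(\tilde M, \tilde g)$ with $\tilde p \in \partial^+ \iota(M_{\mathrm{int}})$. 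This constitutes a future $C^{0,1}_{\loc}$-extension of $M_{\mathrm{int}}$, contradicting Theorem \ref{ThmInteriorDLO}.

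If instead $\gamma$ is entirely contained in one of the two exterior regions, the appendix of this paper asserts that $\gamma$ must be future complete, so its affine domain may be taken to be $[-1, \infty)$. At the same time, choose a boundary chart $\tilde\varphi : \tilde U \to \Reps$ around $\tilde p$ as in Proposition \ref{PropBoundaryChart}. Reparametrising the causal curve $\tilde\gamma := \iota \circ \gamma$ inside $\tilde U$ by the time function $x_0$ gives coordinate speed bounded by $\tfrac{8}{5}$ (cf.\ \eqref{EqUniformBoundCausalCurve}), so $\tilde p$ is reached in coordinate time at most $2\varepsilon_0$. Combined with the $C^{0,1}_{\loc}$-control on $\tilde g$ in $\tilde U \cap \iota(M)$, the standard argument underlying Theorem 3.6 of \cite{GalLinSbi17} then forces the affine parameter of $\gamma$ to remain bounded as $\tilde p$ is approached, contradicting future completeness.

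The hard part is not in the above contradiction argument itself but in the input relegated to the appendix, namely future timelike geodesic completeness of the DLO exterior. This is a statement about the asymptotic behaviour of timelike geodesics in a small spherically symmetric perturbation of Reissner-Nordstr\"om exterior initial data, requiring the decay estimates of Dafermos-Rodnianski \cite{DafRod05} together with a case analysis of whether a future inextendible timelike geodesic in the exterior escapes to null or timelike infinity (in which case completeness is standard) or is deflected across the event horizon $\mathcal{H}^+$ into $M_{\mathrm{int}}$ (in which case one is back in the first case above).
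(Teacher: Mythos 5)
Your proposal is correct and follows essentially the same route as the paper's proof: contradiction via Propositions \ref{PropBoundaryChart} and \ref{PropGeodesicBoundaryChart}, a case split according to whether the geodesic enters the black hole interior (reduce to Theorem \ref{ThmInteriorDLO} by restricting $\iota$ to the interior) or stays in an exterior (invoke the appendix, i.e.\ Theorem \ref{ThmGeod}/Proposition \ref{PropLukOhComplete}, to get future completeness and contradict the finite proper time available inside the boundary chart). The only detail where the paper is more careful is in the interior case: rather than asserting $\tilde{p} \in \partial^+\iota(M_{\mathrm{int}})$ directly (which would require $\iota \circ \gamma$ to be a \emph{smooth timelike} curve up to its endpoint, not immediate for a merely Lipschitz $\tilde{g}$), the paper produces a future boundary point of the interior by running a short timelike coordinate-line curve from a point of $\tilde{\gamma}$ in the interior up to the graph of $f$ — a one-line fix you should include.
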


Let us remark that we follow here the setting of \cite{LukOh19II} and only establish the generic \emph{future inextendibility} of the physically more interesting \emph{future} development. One would expect\footnote{Private communication with J.\ Luk.} that with a bit more work one can extend \cite{LukOh19II} to also show that the past Cauchy horizon of the past development becomes singular \emph{generically}. The time-dual statement of Theorem \ref{ThmSCC} would then also yield the generic past $C^{0,1}_{\loc}$-inextendibility -- and thus the generic $C^{0,1}_{\loc}$-inextendibility of the whole spacetime, cf.\ Lemma \ref{LemFuturePastExt}.

Note that Theorem \ref{ThmSCC} goes beyond  Theorem \ref{ThmInteriorDLO} by the statement that also the black hole \emph{exterior} is future $C^{0,1}_{\loc}$-inextendible. The methods used by Luk and Oh for proving the future $C^2$-inextendibility of the exterior do not transfer to locally Lipschitz regularity. Theorem \ref{ThmGeod} together with Proposition \ref{PropLukOhComplete} in the appendix show however that the estimates obtained by Luk and Oh in \cite{LukOh19II} can be used to establish the timelike geodesic completeness of the exterior in the sense of Theorem \ref{ThmGeod}, which we can then use as an obstruction to $C^{0,1}_{\loc}$-extensions of the exterior.

\begin{proof}
The proof  is by contradiction, so let $\iota : M \hookrightarrow \tilde{M}$ be a future $C^{0,1}_{\loc}$-extension and let $\tilde{p} \in \partial^+ \iota (M)$, where $(M,g)$ denotes the generic DLO spacetime under consideration. By Proposition \ref{PropBoundaryChart} there exists a chart $\tilde{\varphi} : \tilde{U} \to(-\varepsilon_0, \varepsilon_0) \times  (-\varepsilon_1, \varepsilon_1)^{d} =: R_{\varepsilon_0, \varepsilon_1}$ as in Proposition \ref{PropBoundaryChart} with $\delta >0$ so small that all vectors in $C^+_{\nicefrac{5}{6}}$ are future directed timelike, all vectors in $C^-_{\nicefrac{5}{6}}$ are past directed timelike, and all vectors in $C^c_{\nicefrac{5}{8}}$ are spacelike. By Proposition \ref{PropGeodesicBoundaryChart} there exists a future directed timelike  geodesic $\gamma : [-\mu,0) \to M$ that is future inextendible in $M$ and such that $\tilde{\varphi} \circ \iota \circ \gamma :[-\mu,0) \to \Reps$ maps below the graph of $f$ and has a future endpoint on the graph of $f$. We also define $\tilde{\gamma} := \iota \circ \gamma$. 

If there exists a point on $\gamma$ that lies on the event horizons or in the interior, then we clearly have that $\gamma$ approaches to the future the bifurcate Cauchy horizon. Then the restriction of $\iota : M \hookrightarrow \tilde{M}$ to the black hole interior gives rise in particular to a future $C^{0,1}_{\loc}$-extension of the black hole interior\footnote{Take a future directed smooth timelike curve of constant $x_0$ in the chart $\tilde{\varphi}$ starting at a point of $\tilde{\gamma}$ which is contained in the black hole interior. This curve then intersects the graph of $f$ in, as a result by definition, a future boundary point of the interior.}, which is a contradiction to Theorem \ref{ThmInteriorDLO}.

If $\gamma$ starts in one of the black hole exteriors then by Proposition \ref{PropLukOhComplete} $\gamma$ is future complete or crosses the event horizon. We have already ruled out that it can cross the event horizon, so  it must be future complete. We then reparametrise $\tilde{\gamma}$ by the $x_0$-coordinate in the chart $\tilde{\varphi}$ to obtain a curve which we denote again by $\tilde{\gamma} : [s_0, s_1) \to \tilde{U} \subseteq \tilde{M}$, with $-\varepsilon_0 < s_0 < s_1 < \varepsilon_0$. Using the uniform bound $|\tilde{g}_{\mu \nu} - m_{\mu \nu}| < \delta$ and \eqref{EqUniformBoundCausalCurve} we compute $$\int_{s_0}^{s_1} \sqrt{- \tilde{g}(\dot{\tilde{\gamma}}(s), \dot{\tilde{\gamma}}(s)} \, ds \leq 2 \varepsilon_0 \cdot C(\delta, ||\dot{\tilde{\gamma}}||_{\R^{4}}) < \infty\;,$$
which is a contradiction to the future completeness of $\gamma$. Thus $(M,g)$ is future $C^{0,1}_{\loc}$-inextendible.
\end{proof}

\appendix

\section{Timelike geodesic completeness of the exterior of spherically symmetric black holes} 

Here we present a very general criterion for the exterior of a spherically symmetric black hole spacetime to be future timelike geodesically complete in the sense that any future directed and future inextendible timelike geodesic that starts in the exterior is  future complete or enters the black hole region. This result is in particular needed to prove the future $C^{0,1}_{\loc}$-inextendibility of the Dafermos-Luk-Oh (DLO) spacetimes in Theorem \ref{ThmSCC}. Note that the timelike geodesic completeness of the exterior of the DLO spacetimes does not seem to follow directly from \cite{LukOh19II} since the weak stability estimates in Theorem 8.6, in particular (8.37), degenerate near the horizon and do not give asymptotic stability. Thus, a perturbative argument, deducing the geodesic completeness of the exterior of the DLO spacetimes from the geodesic completeness of the exterior of a subextremal Reissner-Nordstr\"om solution, seems to fail. We circumvent this obstacle by changing from the double null coordinates used in \cite{LukOh19II} to $(v,r)$-coordinates, in which the weak stability estimates give orbital stability. Supplemented by a few further conditions in the exterior, see \eqref{EqAlign1} - \eqref{EqAlign5}, which can be verified directly from the estimates obtained in \cite{LukOh19II}, this suffices to infer the geodesic completeness of the exterior.
\newline

We consider a general spherically symmetric spacetime $(M,g)$ in $(v,r)$-coordinates, where $$M = \underbrace{[0, \infty) \times (0, \infty)}_{=: Q} \times \mathbb{S}^2$$ with standard $(v,r)$-coordinates on the first two factors and 
\begin{equation}
\label{EqMetricVR}
g = - f(v,r) \, dv^2 + \frac{h(v,r)}{2} \, \big( dv \otimes dr + dr \otimes dv\big) + r^2 \, \mathring{\gamma} \;, 
\end{equation}
where $f , h : Q \to \R$ are smooth functions and $\mathring{\gamma}$ is the standard round metric on $\mathbb{S}^2$. We declare a time-orientation on $(M,g)$ by stipulating that $- \rd_r$ is future directed null. Moreover, let $r_{\Hp} : [0, \infty) \to (c_{\Hp},\infty)$ be a continuous function which is uniformly positive, i.e., we assume $c_{\Hp} >0$, and define a curve $\sigma_{\Hp} : [0, \infty) \to Q$ by $\sigma_{\Hp} (v) := (v, r_{\Hp}(v))$.\footnote{It helps to think of the curve $\sigma_{\Hp}$ as the event horizon of $M$, but this is by no means an assumption in the below theorem.} The continuous curve $\sigma_{\Hp}$ then separates $M $ into an exterior region $$\Mext := \{(v,r, \omega) \in M \; | \; r > r_{\Hp}(v)\}$$ and an interior region $M_{\mathrm{int}} := \{(v,r, \omega) \in M \; | \; r < r_{\Hp}(v)\}$.
\begin{figure}[h]
\centering
 \def\svgwidth{5cm}
   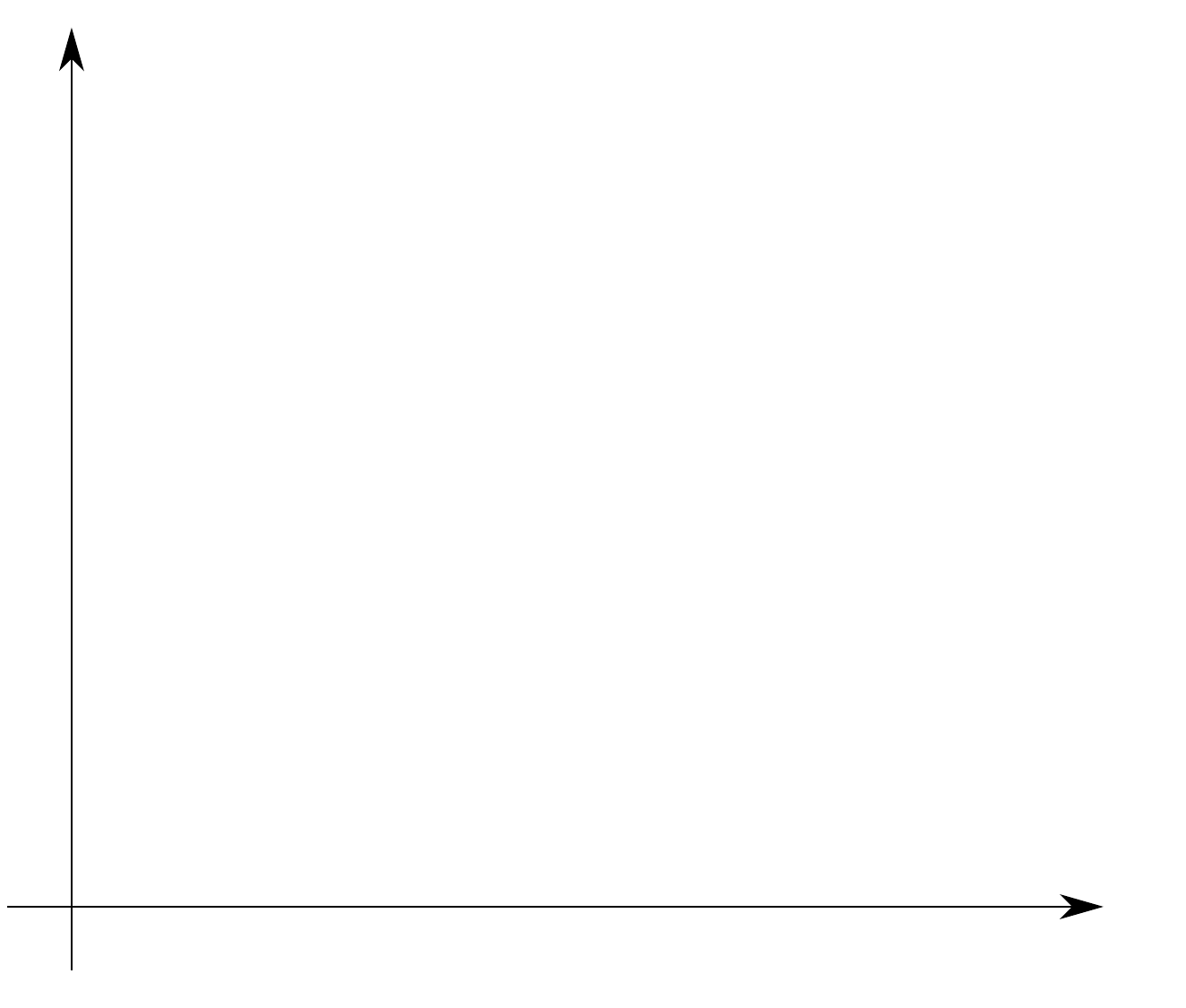
      \caption{The interior and exterior regions of $M$} \label{FigGeod}
\end{figure}

\begin{theorem} \label{ThmGeod}
Let $(M,g)$ and $r_{\Hp}$ be as above. Assume moreover that the metric functions $f$ and $h$ satisfy \underline{on $\Mext$}
\begin{equation}
\label{EqThmGeodAss}
\begin{aligned}
&| f| \leq C_f \\
&0 < c_h \leq h \leq C_h \\
&0 \leq \rd_r h \\
&f \cdot \rd_r h - \rd_r f \cdot h \leq 0 \;,
\end{aligned}
\end{equation}
where $C_f, c_h, C_h$ are constants.
Then the following holds: 

Let $\gamma : [0,b) \to M$ be an affinely parametrised future directed and future inextendible timelike geodesic with $\gamma(0) \in \Mext$, where $b \in (0, \infty]$. Then $\gamma$ is future complete or there is an $s_0 \in [0,b)$ with $\gamma(s_0) \in \mathrm{Im}(\sigma_{\Hp})$.\footnote{This is not an ``either -- or''. Both possibilities are allowed.}
\end{theorem}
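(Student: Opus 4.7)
The plan is to argue by contradiction: suppose $b < \infty$ and $\gamma(s) \notin \mathrm{Im}(\sigma_{\Hp})$ for all $s \in [0, b)$. Then by continuity and connectedness $\gamma([0, b)) \subseteq \Mext$, so $r(s) > r_{\Hp}(v(s)) \geq c_{\Hp}$ throughout. By spherical symmetry I may assume $\gamma$ lies in an equatorial plane, and then $\partial_\varphi$ being Killing yields a conserved angular momentum $L = r^2 \dot\varphi$. The time-orientation ($-\partial_r$ future directed null) forces $g(\dot\gamma, -\partial_r) \leq 0$, i.e.\ $(h/2)\dot v \geq 0$; combined with $c_h \leq h$ and the timelike normalization $g(\dot\gamma,\dot\gamma) = -1$ (which rules out $\dot v = 0$), this yields $\dot v > 0$ throughout.

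The key calculation is to control $P := h \dot v = 2\, g(\partial_r, \dot\gamma)$ along $\gamma$. Using the standard identity $\frac{d}{ds} g(X, \dot\gamma) = \tfrac{1}{2}(\mathcal{L}_X g)(\dot\gamma, \dot\gamma)$ with $X = \partial_r$, a direct computation gives
\begin{equation*}
\frac{dP}{ds} = -\partial_r f \cdot \dot v^2 + \partial_r h \cdot \dot v \dot r + \frac{2 L^2}{r^3}.
\end{equation*}
Eliminating $\dot v \dot r$ via the normalization $h \dot v \dot r = f \dot v^2 - 1 - L^2/r^2$ rewrites this as
\begin{equation*}
\frac{dP}{ds} = \frac{\dot v^2 (f \partial_r h - h \partial_r f) - \partial_r h (1 + L^2/r^2)}{h} + \frac{2L^2}{r^3}.
\end{equation*}
The four assumptions \eqref{EqThmGeodAss} align precisely to make both terms in the numerator non-positive, and using $r \geq c_{\Hp}$ one concludes $\frac{dP}{ds} \leq 2L^2/c_{\Hp}^3$.

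Integrating over $[0, s]$ with $s < b$ gives $P(s) \leq P(0) + 2L^2 b/c_{\Hp}^3$, so $\dot v \leq C/c_h$ uniformly. The normalization then expresses $\dot r = \frac{f \dot v}{h} - \frac{1 + L^2/r^2}{h \dot v}$, from which $|f| \leq C_f$ yields the uniform upper bound $\dot r \leq C_f C/c_h^2$. (No lower bound on $\dot r$ is needed: $r$ is already bounded below by $c_{\Hp}$ in $\Mext$.) Consequently $v(s)$ and $r(s)$ remain in a compact subset of $Q = [0, \infty) \times (0, \infty)$ for $s \in [0, b)$, and $|\dot\omega| = L/r^2 \leq L/c_{\Hp}^2$ is likewise uniformly bounded, so $\gamma([0, b))$ lies in a compact subset of $M$. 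Standard continuation theory for the smooth geodesic ODE then extends $\gamma$ strictly past $s = b$, contradicting future inextendibility.

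The main obstacle is identifying the correct monotone quantity and the correct substitution to exhibit the sign cancellations: $P = h \dot v$ is not an obvious candidate, and a naive bound on $\frac{dP}{ds}$ that keeps $-\partial_r f \cdot \dot v^2$ separate from $\partial_r h \cdot \dot v \dot r$ fails because $|\partial_r f|$ is not assumed bounded. It is only after using the normalization to eliminate $\dot v \dot r$ that the combination $f \partial_r h - h \partial_r f$ appears in its natural form and the assumptions can be exploited; the rest of the argument is then essentially a one-sided Grönwall bound followed by compact containment.
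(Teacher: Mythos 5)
Your core estimate is correct and is in substance the same mechanism as the paper's: your quantity $P = h\dot v = 2g(\partial_r,\dot\gamma)$ and the bound $\frac{dP}{ds}\le \frac{2L^2}{r^3}$ (after eliminating $\dot v\dot r$ via the normalisation, so that the combination $f\partial_r h - h\partial_r f$ and the term $-\partial_r h\,(1+L^2/r^2)$ carry the signs from \eqref{EqThmGeodAss}) is exactly what the paper's inequality \eqref{EqUBVD} encodes, derived there via the $\ddot v$-geodesic equation and an integrating factor rather than via the Lie-derivative identity. Up to and including the uniform bounds $0<\dot v\le C$, $\dot r\le C$, $|\dot\varphi|\le C$ and the compact containment of the \emph{image} of $\gamma$, your argument is sound.

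The genuine gap is in the last step. ``Standard continuation theory for the smooth geodesic ODE'' requires that the lift $(\gamma,\dot\gamma)$ stay in a compact subset of $TM$, i.e.\ you need the velocity bounded, not just the position; your parenthetical claim that ``no lower bound on $\dot r$ is needed'' conflates the lower bound on the coordinate $r$ (which is free, from $\Mext$) with a lower bound on the velocity component $\dot r$ (which is not). Indeed nothing you have proved prevents $\dot v\to 0$ as $s\to b$, and then by \eqref{EqDotREq} the term $-\frac{1+L^2/r^2}{h\dot v}$ forces $\dot r\to-\infty$, so phase-space compactness fails and the ODE continuation criterion cannot be invoked; this is precisely the scenario the paper isolates as Case 2 of its final Claim. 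The paper's resolution is different: it does not extend the geodesic at all, but shows that $\lim_{s\to b}\gamma(s)$ exists in $M$ (Case 1: $\dot v$ bounded below, hence $|\dot r|$ bounded; Case 2: $\dot v\to 0$, hence $\dot r\to-\infty$, so $r$ is eventually monotone and bounded below by $c_{\Hp}$ and therefore converges), which directly contradicts future inextendibility of $\gamma$ as a curve. Your estimates can be upgraded to such a conclusion as well — $v$ is monotone and bounded, $\varphi$ is Lipschitz, and $r$ has bounded variation on the finite interval since $\int_0^b(\dot r)_+\,ds\le Cb$ and $r\ge c_{\Hp}$, so all coordinates converge — but as written the appeal to ODE continuation from compact containment of the image alone does not go through, and fixing it requires exactly the missing analysis of $\dot r$ (equivalently, of the possible degeneration $\dot v\to 0$).
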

\begin{remark} \label{RemAppe}
Let us emphasise here again that the assumptions \eqref{EqThmGeodAss} are only required on the exterior $\Mext$ of $M$ -- no assumptions are required on the metric coefficients in the interior $M_{\mathrm{int}}$ of $M$. Indeed, it suffices to make the assumptions on $\Mext \cap I^+(\gamma(0),M)$ as will be immediate from the proof. We also mention that the assumption $0 \leq \rd_r h$ may be replaced by $| \rd_r h| \leq C$ as follows from a slight modification of the proof, but this is not needed in this paper.
\end{remark}
\begin{proof}
Let $\gamma : [0,b) \to M$ be an affinely parametrised future directed and future inextendible timelike geodesic with $\gamma(0) \in \Mext$ as in the theorem and assume contrary to the statement of the theorem that $\gamma$ is completely contained in $\Mext$ and that $b< \infty$. Using the spherical symmetry of $(M,g)$ we can assume without loss of generality that $\gamma$ lies in the equatorial plane $\{ \theta = \frac{\pi}{2}\}$, where $\theta, \varphi$ are the standard spherical coordinates on $\mathbb{S}^2$. Since $\rd_\varphi$ is a Killing vector field we then have 
\begin{equation}
\label{EqVelPL}
L := g(\dot{\gamma}, \rd_\varphi) = r^2 \dot{\varphi} \;.
\end{equation}
 for some constant $L \in \R$.
We can moreover assume that the  velocity vector of $\gamma$ is normalised so that we have $$-1 = g(\dot{\gamma}, \dot{\gamma}) = -f \dot{v}^2 + h \dot{v} \dot{r} + r^2 \dot{\varphi}^2 = -f \dot{v}^2 + h \dot{v} \dot{r} + \frac{L^2}{r^2} \;.$$
This gives in particular
\begin{equation}
\label{EqVdotRdot}
\dot{v} \dot{r} = \frac{-1 + f\dot{v}^2 - \frac{L^2}{r^2}}{h} \;.
\end{equation}
The inverse metric of \eqref{EqMetricVR} is given by
\begin{equation*}
g^{-1} = \begin{pmatrix} 0 & \frac{2}{h} & 0 & 0 \\ \frac{2}{h} & \frac{4f}{h^2} & 0 & 0 \\
0 & 0 & r^{-2} & 0 \\ 0 & 0 & 0 & r^{-2} \sin^{-2}\theta \end{pmatrix} \;.
\end{equation*}
Since $h$ is assumed positive this gives that $dv^\sharp = \frac{2}{h} \rd_r$ is past directed and thus $\dot{v} = dv(\dot{\gamma}) = g(\frac{2}{h} \rd_r, \dot{\gamma}) >0$.
One also easily computes $$\Gamma^v_{vv} = \frac{\rd_v h + \rd_r f}{h} \;, \qquad \Gamma^v_{\theta \theta} = - \frac{2r}{h} \;, \qquad \Gamma^v_{\varphi \varphi} = -\frac{2r \sin^2 \theta}{h} \;.$$ All other Christoffel symbols of the form $\Gamma^v_{\cdot \cdot}$ vanish. Recalling that we have $\theta = \frac{\pi}{2}$ along $\gamma$ the geodesic equation gives
\begin{equation*}
\begin{split}
0 &= \ddot{v} + \Gamma^v_{vv} \dot{v}^2 + \Gamma^v_{\varphi \varphi} \dot{\varphi}^2 \\
&= \ddot{v} + \rd_v \log h \cdot \dot{v}^2 + \frac{\rd_r f}{h} \cdot \dot{v}^2 - \frac{2L^2}{h r^3} \;.
\end{split}
\end{equation*}
We now use $\frac{d}{ds} \log h = \dot{v} \rd_v \log h + \dot{r} \rd_r \log h$ and \eqref{EqVdotRdot} to further compute
\begin{equation*}
\begin{split}
\ddot{v} &= \frac{2L^2}{h r^3} - \dot{v} \frac{d}{ds} \log h + \dot{v} \dot{r} \rd_r \log h - \frac{\rd_r f}{h} \dot{v}^2 \\
&= \frac{2L^2}{h r^3} - \dot{v} \frac{d}{ds} \log h - \frac{1}{h} \big(1 + \frac{L^2}{r^2}\big) \rd_r \log h + \underbrace{\frac{f}{h} \rd_r \log h \cdot \dot{v}^2 - \frac{\rd_r f}{h} \dot{v}^2}_{= \frac{1}{h^2}(f \rd_r h - \rd_r f \cdot h) \dot{v}^2} \;.
\end{split}
\end{equation*}
Using the third and fourth assumptions in \eqref{EqThmGeodAss} this gives 
\begin{equation}
\label{EqVDD}
\ddot{v} \leq \frac{2L^2}{h r^3} - \dot{v} \frac{d}{ds} \log h \;.
\end{equation}
Using that the solution of $\ddot{x}(s) = - \dot{x}(s) \frac{d}{ds} \log h\big((v(s), r(s)\big)$, with $v(s)= \gamma^v(s)$ and $r(s) = \gamma^r(s)$ given, is given by $\dot{x}(s) = \frac{h\big(v(0), r(0)\big)}{h\big(v(s), r(s)\big)} \cdot \dot{x}(0)$, we thus obtain
\begin{equation}
\label{EqUBVD}
\dot{v}(s) \leq \frac{h(0)}{h(s)} \Big[ \int\limits_0^s \frac{h(s')}{h(0)} \cdot \frac{2L^2}{h(s') r^3(s')} \, ds' + \dot{v}(0)\Big]\qquad  \textnormal{ for all } 0 \leq s  < b < \infty \;.
\end{equation}
Using that $r$ is bounded away from $0$ by $c_{\Hp}$ in $\Mext$, the uniform bounds on $h$ from \eqref{EqThmGeodAss}, and $0<b< \infty$, this gives 
\begin{equation}
\label{EqUBVDResult}
0 < \dot{v}(s) \leq C < \infty \qquad \textnormal{ for all } 0 \leq s < b \;.
\end{equation}

\textbf{Claim:} $\lim_{s \to b} \gamma(s) \in M$ exists.

The validity of the claim is of course in direct contradiction to the future inextendibility of $\gamma$, so it remains to prove the claim. We immediately obtain from \eqref{EqUBVDResult} that $\lim_{s \to b} v(s) \in [0, \infty)$ exists. Moreover, \eqref{EqVelPL} together with $r > c_{\Hp}>0$ gives $|\dot{\varphi}| \leq C < \infty$, which again yields that $\lim_{s \to b} \varphi(s) \in \mathbb{S}^1$ exists. In order to show that $\lim_{s \to b} r(s) \in (0, \infty)$ exists we distinguish two cases:
\begin{enumerate}
\item \emph{$\dot{v}$ is lower bounded, i.e. there exists $c>0$ with $0 < c \leq \dot{v} \leq C$.} 
 We obtain from \eqref{EqVdotRdot} that 
 \begin{equation}
 \label{EqDotREq}
 \dot{r} = \frac{-1 + f\dot{v}^2 - \frac{L^2}{r^2}}{ \dot{v} h}\;,
 \end{equation}
  which, together with  $r > c_{\Hp}$ along $\gamma$ and \eqref{EqThmGeodAss} gives $|\dot{r}| \leq C$. This then implies that $\lim_{s \to b} r(s) \in [c_{\Hp}, \infty)$ exists.
 \item If the first case is not met, then \emph{there exists a sequence $[0, b) \ni s_n \to b$ for $n \to \infty$ with $\lim_{n \to \infty} \dot{v}(s_n) = 0$.} We show that we then actually have $\lim_{s \to b} \dot{v}(s) = 0$. As before \eqref{EqVDD} implies 
 \begin{equation*}
 \begin{split}
  \dot{v}(s) &\leq  \frac{h(s_n)}{h(s)} \Big[ \int \limits_{s_n}^s \frac{2L^2}{h(s_n) r^3(s')} \, ds' + \dot{v}(s_n)\Big] \\
  &\leq C \big[ C \cdot |s - s_n| + \dot{v}(s_n)\big] \qquad \textnormal{ for all } s_n \leq s < b \;.
  \end{split}
  \end{equation*}
Given $\varepsilon>0$ we can now choose $n$ large enough such that $0< \dot{v}(s) < \varepsilon$ for all $ s_n < s < b$. 

It now follows from \eqref{EqDotREq} together with \eqref{EqThmGeodAss} that $\dot{r}(s) \to - \infty$ for $s \to b$, which, together with $r > c_{\Hp}$ along $\gamma$, implies that $\lim_{s \to b} r(s) \in [c_{\Hp}, \infty)$ exists. This concludes the proof.
\end{enumerate}
\end{proof}

Let now $(M,g)$ be a spherically symmetric spacetime in double null coordinates $(u,v)$, i.e., let $$M = \underbrace{\R \times [0, \infty)}_{=:Q} \times \mathbb{S}^2$$ with canonical $(u,v)$-coordinates on the first two factors and $$g = - \frac{\Omega^2}{2} \big( du \otimes dv + dv \otimes du\big) + r^2 \, \mathring{\gamma} \;,$$ where $\Omega : Q \to (0, \infty)$ and $r : Q \to (0, \infty)$ are smooth functions. We have $dr = \rd_u r \cdot du + \rd_v r \cdot dv$. \emph{Assuming that $\rd_u r $ vanishes nowhere} we can globally transform into $(v,r)$-coordinates: we obtain  $$du = \frac{dr - \rd_v r \cdot dv}{\rd_u r} $$ and thus $$g =  \frac{\Omega^2 \rd_v r}{ \rd_u r} \, dv^2- \frac{\Omega^2}{2 \rd_u r} \, \big( dv \otimes dr + dr \otimes dv\big) + r^2 \, \mathring{\gamma} \;.$$ Recalling the definition of the Hawking mass $m : Q \to \R$ by $1 - \frac{2m}{r} := g^{-1}(dr, dr) = - \frac{4}{\Omega^2} \rd_u r \cdot \rd_v r$, introducing the definition $\kappa := \frac{\rd_v r}{1 - \frac{2m}{r}} = - \frac{\Omega^2}{4 \rd_u r}$ and comparing with \eqref{EqMetricVR} gives
\begin{equation} \label{Eq11}
f = 4 \kappa \cdot \rd_v r \qquad \textnormal{ and } \qquad h = 4 \kappa \;.
\end{equation}
We moreover note that $$\frac{\rd}{\rd u} \Big|_v = \frac{\rd r}{\rd u} \Big|_v  \frac{\rd}{\rd r}\Big|_v \;.$$
Thus we obtain \begin{equation} \label{Eq22}\frac{\rd}{\rd r}\Big|_v h = \frac{4}{\rd_u r} \rd_u \kappa 
\end{equation} and 
\begin{equation} \label{Eq33}
f \cdot \rd_r h - \rd_r f \cdot h = \frac{16 \kappa}{\rd_u r} \big( \rd_v r \cdot \rd_u \kappa - \rd_u ( \kappa \rd_v r)\big) = - \frac{16 \kappa^2}{\rd_u r} \cdot \rd_u \rd_v r \;.\end{equation}

\begin{proposition} \label{PropLukOhComplete}
Consider a Cauchy hypersurface of a two-ended subextremal Reissner-Nordstr\"om black hole and a sufficiently small spherically symmetric perturbation of the Cauchy data under the Einstein-Maxwell-scalar field system as in \cite{LukOh19II}. Consider one of the exteriors of the arising future development and let $(u,v) \in \mathcal{D} \subseteq \R \times [1,\infty)$ be null coordinates normalised as in Theorem 8.6 of \cite{LukOh19II} covering the exterior under consideration.  We then have
\begin{align}
&0 < c \leq \kappa \leq C  \label{EqAlign1}\\ 
&0< \rd_v r \leq C \label{EqAlign2}\\
&\rd_u r < 0 \label{EqAlign3}\\
&\rd_u \kappa \leq 0 \label{EqAlign4}\\
&\rd_u \rd_v r \leq 0 \label{EqAlign5}
\end{align}
in the exterior.
In particular it then follows from Theorem \ref{ThmGeod} that a future directed future inextendible timelike geodesic starting in the exterior is future complete or crosses the event horizon.
\end{proposition}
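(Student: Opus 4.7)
The plan is to verify the five pointwise inequalities \eqref{EqAlign1}--\eqref{EqAlign5} directly from the estimates established by Luk--Oh in the exterior region in the normalisation of their Theorem~8.6, and then to plug these into the relations \eqref{Eq11}--\eqref{Eq33} in order to check the four hypotheses of Theorem~\ref{ThmGeod}. The rewriting of the metric in $(v,r)$-coordinates from the excerpt then supplies a $C^0$-function $r_{\Hp}(v)$ describing the event horizon that is uniformly bounded away from $0$ (a Cauchy stability consequence of the perturbation being sufficiently small so that $r$ along the horizon remains close to $r_+$ of the reference Reissner--Nordstr\"om solution), so the abstract framework of Theorem~\ref{ThmGeod} is applicable.

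For the actual extraction of \eqref{EqAlign1}--\eqref{EqAlign5}, I would proceed as follows. The two-sided bound $0<c\leq \kappa\leq C$ and the bound $0<\partial_v r\leq C$ are precisely among the statements of the orbital stability theorem (Theorem~8.6 in \cite{LukOh19II}) in the exterior; they hold because in their chosen normalisation the analogous RN-quantities are uniformly bounded and the perturbation is small in a norm that controls these scalar quantities pointwise. The strict sign $\partial_u r<0$ is immediate in the exterior of two-ended RN and is preserved under small perturbation by continuity away from the horizon and by the Raychaudhuri equation $\partial_u(\partial_u r/\Omega^2)=-r|\partial_u\phi|^2/\Omega^2$ together with initial data on the Cauchy hypersurface. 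The two monotonicity signs $\partial_u\kappa\leq 0$ and $\partial_u\partial_v r\leq 0$ follow from the reduced Einstein--Maxwell-scalar field system in double null gauge: the Raychaudhuri equation $\partial_u\kappa=-\kappa r(\partial_u\phi)^2/\partial_u r$ combined with $\partial_u r<0$ gives $\partial_u\kappa\leq 0$, and the wave equation for $r$ in spherical symmetry, $\partial_u\partial_v r=-\frac{\Omega^2}{4r}-\frac{\partial_u r\,\partial_v r}{r}+\frac{\Omega^2 e^2}{4r^3}$, gives the required sign in the exterior region where the sub-extremality condition $e^2<r^2$ is preserved by smallness of the perturbation.

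Once these five inequalities are in hand, the verification of the hypotheses of Theorem~\ref{ThmGeod} is essentially algebraic using \eqref{Eq11}, \eqref{Eq22}, \eqref{Eq33}. Explicitly: $|f|=|4\kappa\partial_v r|\leq 4C\cdot C=:C_f$; $h=4\kappa$ satisfies $0<4c\leq h\leq 4C$; $\partial_r h=\frac{4\partial_u\kappa}{\partial_u r}\geq 0$ since both $\partial_u\kappa\leq 0$ and $\partial_u r<0$; and finally $f\,\partial_r h-\partial_r f\cdot h=-\frac{16\kappa^2}{\partial_u r}\partial_u\partial_v r\leq 0$ since $\partial_u r<0$ and $\partial_u\partial_v r\leq 0$. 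Theorem~\ref{ThmGeod} then applies to any affinely parametrised future directed, future inextendible timelike geodesic starting in the exterior, yielding the desired dichotomy.

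The main obstacle, as flagged in the paragraph preceding the proposition, is conceptual rather than computational: the stability estimates in \cite{LukOh19II} are written in a double null gauge where they degenerate at the event horizon and hence do not yield asymptotic stability there. The key idea of this appendix is that after changing variables from $(u,v)$ to $(v,r)$ the same estimates already give \emph{orbital} stability, which is exactly what is encoded in \eqref{EqAlign1}--\eqref{EqAlign5} and which is all that Theorem~\ref{ThmGeod} needs. So the hard part is not proving new estimates but carefully selecting and translating the pointwise information from \cite{LukOh19II} into the $(v,r)$-picture; once that translation is carried out the remainder of the argument is the short chain of inequalities above followed by an invocation of Theorem~\ref{ThmGeod}.
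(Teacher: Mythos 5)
Your overall route is the same as the paper's: extract \eqref{EqAlign1}--\eqref{EqAlign5} from the Luk--Oh exterior estimates together with the structure equations, then feed them through \eqref{Eq11}--\eqref{Eq33} into Theorem \ref{ThmGeod} with $\sigma_{\Hp}$ the event horizon; your algebraic verification of \eqref{EqThmGeodAss} is correct. The genuine gap is at \eqref{EqAlign5}. From the wave equation you quote, $\rd_u\rd_v r = -\frac{\Omega^2}{4r} - \frac{\rd_u r\,\rd_v r}{r} + \frac{\Omega^2 e^2}{4r^3}$, the sign does \emph{not} follow from $e^2 < r^2$: in the exterior $\rd_u r\,\rd_v r < 0$, so the middle term is positive, and $r>e$ only controls the sum of the first and third terms. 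Eliminating the middle term via $\rd_u r\,\rd_v r = -\frac{\Omega^2}{4}\big(1 - \frac{2m}{r}\big)$ gives $\rd_u\rd_v r = \frac{\Omega^2}{4r^2}\big(\frac{e^2}{r} - 2m\big)$, so what is actually needed is the lower bound $2mr \geq e^2$, equivalently $\varpi - \frac{e^2}{r} \geq 0$ with $\varpi = m + \frac{e^2}{2r}$ --- a condition on the (dynamic) Hawking mass, not on $r$ alone. This is precisely the nontrivial step in the paper's proof: it is established quantitatively from the stability estimates (8.43) and (8.45) of \cite{LukOh19II}, comparing $\varpi$ and $r$ with the background subextremal Reissner--Nordstr\"om values and using $\overline{\varpi}\,\overline{r}_+ = \overline{\varpi}\big(\overline{\varpi} + \sqrt{\overline{\varpi}^2 - \overline{e}^2}\,\big) > \overline{e}^2$ with a margin that survives an $O(\varepsilon)$ perturbation (the regime of large $r$ being handled trivially). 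Your proposal replaces this by an unjustified shortcut, so \eqref{EqAlign5} is not proved as written.

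A smaller issue: your equation for \eqref{EqAlign4} carries a wrong sign. The Raychaudhuri equation gives $\rd_u\kappa^{-1} = \frac{4r(\rd_u\phi)^2}{\Omega^2} \geq 0$, equivalently $\rd_u\kappa = \frac{\kappa\, r(\rd_u\phi)^2}{\rd_u r}$ (no minus sign), which together with $\rd_u r < 0$ yields $\rd_u\kappa \leq 0$; with the extra minus sign you wrote, the same reasoning would produce the opposite conclusion. The remaining items --- the bounds on $\kappa$ and $\rd_v r$ from the exterior estimates, $\rd_u r < 0$ propagated by Raychaudhuri from the initial hypersurface, and the final invocation of Theorem \ref{ThmGeod} --- agree with the paper's argument.
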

\begin{proof}
\eqref{EqAlign1} is directly given by (8.35) or (8.44) in \cite{LukOh19II}. The positivity of $\rd_v r$ follows since the exterior is non-trapped and the upper bound in \eqref{EqAlign2} follows from (8.46) by comparison with an exact subextremal Reissner-Nordstr\"om black hole. The condition \eqref{EqAlign3} follows from the Raychauduri equation (2.5) in \cite{LukOh19II} and it being satisfied along the initial data hypersurface. To show \eqref{EqAlign4} we recall that $-\frac{\rd_u \kappa}{\kappa^2} = \rd_u \kappa^{-1} = \frac{4 r (\rd_u \phi)^2}{\Omega^2} \geq 0$ by the definition $\kappa = - \frac{\Omega^2}{4 \rd_u r}$ of $\kappa$ and the Raychauduri equation (2.5) in \cite{LukOh19II}. Finally by (2.8) in \cite{LukOh19II} we have $$\rd_u \rd_v r  = \frac{2 (\varpi - \frac{e^2}{r}) \cdot \rd_u r \cdot \rd_v r}{r^2 \cdot (1 - \frac{2m}{r})} $$ where $\varpi = m + \frac{e^2}{2r}$ and $e$ is the charge of the black hole, see \cite{LukOh19II}. Since $\rd_v r >0$ in the exterior and by \eqref{EqAlign3} we also have $1 - \frac{2m}{r} >0$ in the exterior by definition of the Hawking mass $m$. In order to show \eqref{EqAlign5} it thus suffices to show $\varpi - \frac{e^2}{r} \geq0$ in the exterior, which is equivalent to $\varpi \cdot r \geq e^2$. Note that this is trivially satisfied for $r$ sufficiently large by (8.43) in \cite{LukOh19II}. For $r$ bounded away from $\infty$ we have $|r - \overline{r}| \leq C \varepsilon$ by (8.45) in \cite{LukOh19II}, where we choose  the overlined quantities to denote quantities in an exact subextremal Reissner-Nordstr\"om black hole. Using also (8.43) in \cite{LukOh19II} and the same trivial statement for the charges $e$, $\overline{e}$, which are non-dynamic, we compute for $r $ bounded away from infinity
\begin{equation*}
\begin{split}
\varpi \cdot r & \geq \overline{\varpi} \cdot \overline{r} - C \varepsilon\\
&\geq \overline{\varpi} \cdot \overline{r}_+ - C \varepsilon \\
&= \overline{\varpi}(\overline{\varpi} + \sqrt{ \overline{\varpi}^2 - \overline{e}^2}) - C \varepsilon \\
&> \overline{e}^2 + \overline{\varpi} \sqrt{ \overline{\varpi}^2 - \overline{e}^2} - C \varepsilon \\
&\geq e^2 + \overline{\varpi} \sqrt{ \overline{\varpi}^2 - \overline{e}^2} - 2C \varepsilon \\
&\geq e^2 
\end{split}
\end{equation*}
for $\varepsilon>0$ sufficiently small.

Since we have $\rd_u r <0$ we can now go over to $(v,r)$-coordinates and extend the metric arbitrarily to the full domain $[1, \infty) \times (0, \infty) \times \mathbb{S}^2$. For the curve $\sigma_{\Hp}$ in Theorem \ref{ThmGeod} we choose the event horizon and by \eqref{Eq11}, \eqref{Eq22}, and \eqref{Eq33} the conditions \eqref{EqThmGeodAss} are satisfied in the exterior to the future of any point in the exterior, see also Remark \ref{RemAppe}. It thus follows that any future directed future inextendible timelike geodesic starting in the exterior is future complete or crosses the event horizon.
\end{proof}

\bibliographystyle{acm}
\bibliography{Bibly}

\end{document}